\tikzset{%
	dots/.style args={#1per #2}{%
		line cap=round,
		dash pattern=on 0 off #2/#1
	}
}
\definecolor{light-gray}{gray}{0.7}
\definecolor{dark-gray}{gray}{0.3}
\newtheorem*{theorem*}{Theorem}
\newtheorem{theorem}{Theorem}
\newtheorem{proposition}{Proposition}
\newtheorem{lemma}[theorem]{Lemma}
\newtheorem{example}{Example}
\newcommand{\remove}[1]{}
\theoremstyle{definition}
\theoremstyle{remark}
\newtheorem{remark}{Remark}
\newif\ifcomment
\newcommand{\C}{\mathcal{C}}
\newcommand{\x}{\boldsymbol{x}}
\renewcommand{\c}{\boldsymbol{c}}
\newcommand{\e}{\boldsymbol{e}}
\newcommand{\z}{\boldsymbol{z}}
\newcommand{\y}{\boldsymbol{y}}
\renewcommand{\epsilon}{\varepsilon}
\definecolor{darkgray}{RGB}{64,64,64}
\definecolor{litegray}{RGB}{192,192,192}
\tikzstyle{block}=[draw, rectangle, minimum height=1cm, text width=1.5cm, text centered, draw=darkgray, font=\small]
\tikzstyle{block_medium}=[draw, rectangle, minimum height=1.5cm, text width=2cm, text centered, draw=darkgray, font=\small]
\tikzstyle{block_large}=[draw, rectangle, minimum height=1.75cm, text width=3cm, text centered, draw=darkgray, font=\small]
\tikzstyle{line} = [draw, -latex]
\title{Feedback Insertion-Deletion Codes}
\author{Georg Maringer \thanks{Technical University of Munich. Supported by the German Research Foundation (Deutsche Forschungsgemeinschaft, DFG) under Grant No.~WA3907/4-1. Email: {\tt georg.maringer@tum.de}} \and Nikita Polyanskii\thanks{Technical University of Munich and Skolkovo Institute of Science and Technology.  Supported in part  by the German Research Foundation (Deutsche Forschungsgemeinschaft, DFG) under Grant No. WA3907/1-1. Email: {\tt nikita.polyansky@gmail.com}.} \and Ilya Vorobyev \thanks{Skolkovo Institute of Science and Technology. % and Moscow Institute of Science and Technology.  
Supported in part by RFBR and JSPS under Grant No.~20-51-50007, and by RFBR under Grant No.~20-01-00559. Email: {\tt vorobyev.i.v@yandex.ru}.} \and Lorenz Welter \thanks{Technical University of Munich. Supported by the European Research Council (ERC) under the European Union’s Horizon 2020 research and innovation programme (grant agreement No. 801434). Email: {\tt lorenz.welter@tum.de}}}
\date{}
\pgfplotsset{compat=1.15}
\begin{document}
\maketitle
\begin{abstract}
  In this paper, a new problem of transmitting information over the adversarial insertion-deletion channel with feedback is introduced. 
  Suppose that the encoder transmits $n$ binary symbols one-by-one over a channel, in which some symbols can be deleted and some additional symbols can be inserted. 
  After each transmission, the encoder is notified about the insertions or deletions that have occurred within the previous transmission and the encoding strategy can be adapted accordingly. 
  The goal is to design an encoder that is able to transmit error-free as much information as possible under the assumption that the total number of deletions and insertions is limited by $\tau n$, $0<\tau<1$. We show how this problem can be reduced to the problem of transmitting messages over the substitution channel. Thereby, the maximal asymptotic rate of feedback insertion-deletion codes is completely established. The maximal asymptotic rate for the adversarial substitution channel has been partially determined by Berlekamp and later finished by Zigangirov.
  However, the analysis of the lower bound by Zigangirov is quite complicated.
  %For the substitution channel with random errors, an optimal algorithm with a nice intuition behind it was presented in 1963 by Horstein.
  %However, the analysis of this algorithm for the adversarial substitution channel is quite complicated and was completed only 13 years later by Zigangirov. 
  We revisit Zigangirov's result and present a more elaborate version of his proof.%\gm{Shall we state somewhere that parts of this work have been submitted to ITW?}
\end{abstract}
\newpage
\section{Introduction}
The problem of constructing codes capable of correcting insertion and deletion errors has been studied since 1965 when Levenshtein published his work~\cite{Levenshtein1966binary}, in which he established that any code is capable of correcting $t$ deletions if and only if it is capable of correcting $t$ insertions and deletions. Furthermore, he showed that the Varshamov-Tenengolts codes~\cite{varshamov1965code} can be used to correct a single insertion or deletion error. Notably, most of the classic techniques for constructing error-correcting codes cannot be applied to construct codes coping with insertions and deletions because these types of errors induce a loss of synchronization between the sender and the receiver. This resulted in substantial development of new approaches and methods for the deletion channel in recent years~\cite{cheraghchi2020overview}.%\gm{Did we delete citations here? Actually that sentence begs to cite some survey paper or some other papers}. \np{I have added a survey paper.}%We refer the reader to~\cite{cheraghchi2020overview} for a good survey on the known techniques.

We first recap the results for adversarial deletion channels without the presence of noiseless feedback to the sender. Let a code $\C\subseteq \{0,1\}^n$ consist of $2^{Rn}$ binary codewords of length $n$. The goal is to find a code maximizing the achievable \textit{rate} $R$ which is able to correct up to a fraction of $\tau$ \textit{adversarial} deletions. %Our goal is to find a code $\C$ tolerating a fraction $\tau$ of \textit{adversarial} deletions, achieving the maximal possible \textit{rate} R for codes having this property.
By adversarial deletions we refer to the case when the adversary can inflict an arbitrary pattern of deletions with full knowledge of the code and the codeword to be transmitted.
%Adversarial deletions mean that the adversary can inflict an arbitrary pattern of deletions with full knowledge of the codeword and the code.
Recall that the error correction capability of $\C$ can be defined using the notion of the longest common subsequence between two codewords. For a code being able to correct a fraction of $\tau$ errors the longest common subsequence between any two codewords in $\C$ is of length less than $(1-\tau)n$.
%The error correction capability of $\C$ is shown to be equivalent to that the longest common subsequence between any two codewords in $\C$ is of length less than $(1-\tau)n$.
It is clear that for $\tau \ge \frac{1}{2}$, the adversary can force the channel to output either the all-one word or the all-zero word. Thus, a code $\C$ can contain at most two codewords and the asymptotic rate is zero. The first code construction capable of correcting a non-zero fraction of deletions was proposed by Schulman and Zuckerman~\cite{schulman1999asymptotically}. They proposed to use concatenated codes composed of non-binary outer codes and well-performing short binary inner codes that can be found by brute-force search. The last improvement in this direction is due to Bukh, Guruswami, and Hastad~\cite{bukh2016improved2,bukh2016improved}.
They provided a family of code constructions with a positive rate for any $\tau<\sqrt{2}-1\approx 0.41$. % In particular, this implies that it is rather easier to handle the adversarial deletion channel than the adversarial 
%substitution channel. 
For the adversarial 
substitution channel it is well known~\cite{plotkin1960binary} that it is not possible to have exponentially many codewords in a code tolerating a fraction of $\tau\ge 0.25$ errors.  %However, there is no much further progress in our knowledge of the limitations  of  deletion-correcting codes, and even understanding the maximal fraction of deletions for which there exist codes with rate bounded away from zero remains an open research problem.
However, to the best of our knowledge not much further progress has been made on the limitations of deletion-correcting codes. Even determining the maximal fraction of deletions for which there exist codes with rate bounded away from zero remains an open research problem.

%Another type of deletion errors is \textit{random} deletions. In this model, the channel deletes each bit independently with probability $\tau$ and the receiver has to guess the message sent with high probability.
%The first results for random deletions have been obtained by Gallager~\cite{gallager1961sequential} and Zigangirov~\cite{zigangirov1969sequential}, who proved that for $\tau<\frac{1}{2}$, a code can have exponentially many codewords. After that, some results concerning limitations of codes tackling random deletions have been obtained in numerous papers~\cite{drinea2007improved, diggavi2001transmission, kanoria2013optimal, rahmati2013improvement,cheraghchi2018capacity}. In particular, it is known for any error probability $\tau<1$, there exist codes with a positive rate capable of correcting random deletions. We refer the reader to~\cite{cheraghchi2020overview} for a good survey on the known techniques.
%Other types of errors which are between adversarial and random deletions have been discussed in~\cite{guruswami2018coding}.
%It is clear that the \textit{feedback} between the sender and the receiver can increase the number of messages in a code. 
 \textit{Noiseless feedback} between sender and receiver can potentially increase the maximal rate of a code with fractional error correction capability $\tau$. %Thereby, feedback insertion-deletion coding with random errors has been introduced in~\cite{wang2005capacity} by Wang and Lee. %It was suggested to think about encoded symbols as symbols that enter a queue, waiting to be transmitted by the channel. 
%It was suggested to consider a queue into which encoded symbols are entering, waiting to be transmitted by the channel.
%In such a channel, four different events may occur: i) the next queued bit is
%deleted, ii) an extra bit is inserted, iii) the next queued bit is transmitted, and iv) the next queued bit is bit-flipped. %The channel is then determined by certain probabilities for each of the events.  
%The channel is then specified by the probabilities for each of the aforementioned events.
A feedback model with adversarial substitutions was investigated in the paper~\cite{berlekamp1964block} by Berlekamp. %In this setting, a fraction $\tau$ of binary symbols transmitted can be bit-flipped, and after transmitting a symbol, the sender knows exactly which symbol is received so that the further encoding strategy can be adjusted. 
In this setting a binary channel can flip at most a fraction of $\tau$ symbols within a block. Additionally after each symbol transmission the sender gets notification about which symbol has been received by utilizing a noiseless feedback channel. Therefore, before sending the next symbol the encoder is able to adjust the encoding strategy according to the previously received symbols.
The maximal asymptotic rate of a feedback code for this channel has been completely characterized by~Berlekamp~\cite{berlekamp1964block} and Zigangirov~\cite{zigangirov1976number}. Interestingly, their results show that the maximal asymptotic rate is positive for $\tau<1/3$. %A setting when only a constant number of substitution errors may occur was discussed in~\cite{bassalygo2005nonbinary,dumitriu2005two}.
\subsection{Notation}

This section formally defines notations that are used throughout this paper. The set of integers $\{ 1,\ldots , n \}$ is denoted by $[n]$. The set of integers $\{i+1,i+2,\ldots, j\}$ is denoted by $(i,j]$. We refer to the binary alphabet as $\{0,1\}$ and to a binary string of length $n$ as $\x \in \{0,1\}^n$, i.e. $\bm{x} = (x_1,\ldots,x_n)$. We use $x_i \in \{0,1\}$ to denote the $i$th element of the string $\bm{x}$ where $i \in [n]$. The set $\{0,1\}^{*}$ contains all binary strings of variable length including the empty string which is denoted as $(\;)$. We refer to the length of $\bm{x}$ as $\vert \bm{x} \vert$. Given two strings $\bm{x} \in \{0,1\}^{n_1}$ and $\bm{y} \in \{0,1\}^{n_2}$, we denote by $\bm{z} = ( \bm{x} || \bm{y} )$  the concatenation of the strings, hence $\bm{z} \in \{0,1\}^{n_1+n_2}$. It is also possible to write a binary string $\bm{x}=(x_1,\ldots,x_n) \in \{0,1\}^n$ as a concatenation of its components, i.e. $\bm{x} = (x_1||\ldots||x_n)$. However, for $n$ strings $\y_1,\ldots,\y_n\in \{0,1\}^*$, we distinguish the concatenation of strings $\y = (\y_1||\ldots||\y_n)\in \{0,1\}^*$ from a tuple $\hat\y = (\y_1,\ldots,\y_n) \in \left(\{0,1\}^*\right)^n$ containing several strings. The tuple $\hat\y$ can be uniquely split back into $\y_1,\ldots,\y_n$, whereas for the concatenated string $\y$ this is not possible.

We use the following two distance notations. For any two strings $\bm{x} \in \{0,1\}^{*}$ and $\bm{y} \in \{0,1\}^{*}$ we denote as $\Delta(\bm{x},\bm{y})$ the minimal number of deletions and insertions required to obtain $\bm{x}$ from $\bm{y}$. This quantity is frequently referred to as the \textit{longest common subsequence} distance between strings $\bm{x}$ and $\bm{y}$. Additionally, we use the notation $d_H(\bm{x},\bm{y})$ to denote the Hamming distance between strings $\bm{x}$ and $\bm{y}$ of same length. All logarithms are base $2$ unless otherwise indicated. The binary entropy function is defined as $h(x):= -x \log x - (1-x) \log(1-x)$.
\subsection{Problem statement}
In this paper the problem of communicating over the adversarial insertion-deletion channel with feedback is addressed. %For any message $m\in[M]$, Alice wishes to synthesize a binary string symbol-by-symbol and to store it in the storage such that after reading this string from the storage Bob would be able to correctly decode the message.
Practically this problem is relevant because certain channels, e.g. DNA-based storage channel, are prone to inflict insertion-deletion errors rather than substitution errors.
%Transmission over an adversarial channel is equivalent to a game for three players, here named Alice, Bob, and Calvin.
For any message $m\in[M]$, the sender's goal is to encode $m$ into a binary string $\c\in\{0,1\}^n$ such that after transmitting this string over the binary adversarial insertion-deletion channel, the receiver is able to correctly decode the message $m$ from the received string $\y$. The output string $\y$ is controlled by an adversary (channel noise) who can change the channel input string $\bm{c}$ by inflicting insertions and deletions. %irrespective of the errors Calvin is permitted to inflict.
 The adversary succeeds if the decoder's decision is different from the message $m$. The process of encoding and transmitting consists of $n$ steps, i.e. $\c=(c_1,\ldots,c_n)$. At the $i$th moment the encoder generates a binary symbol $c_i\in\{0,1\}$ and transmits it over the channel. The adversary takes $c_i$ and is able to inflict insertion and deletion errors to create the output $\y_i\in\{0,1\}^{*}$. We stress that the adversary has the ability to place the insertions before the symbol $c_i$ and not only after the transmitted symbol $c_i$, increasing his flexibility. %The output string received by Bob is composed of $\y=(\y_1,\ldots,\y_n)$. 
 The adversary has full knowledge of the message $m$ as well as the encoding and decoding strategies which encoder and decoder deploy, while the decoder only gets access to the entire concatenated output ${\y} = (\y_1||\ldots||\y_n)$ but not its partitioning, i.e. the decoder is not able to split it into the respective $\y_i$. This means that the decoding function $dec({\y})$ is of the form $dec: \{0,1\}^* \to [M]$. At the $(i+1)$st moment, the sender can adapt the further encoding strategy for the message $m$ based on the tuple of the strings $\hat\y^{(i)}:=(\y_1,\ldots,\y_i)\in\left(\{0,1\}^*\right)^i$. Thereby, $c_{i+1}: [M]\times \left(\{0,1\}^*\right)^i\to\{0,1\}$ is a function of $m$ and $\hat \y^{(i)}$, i.e.,  $c_{i+1}=c_{i+1}(m,\hat \y^{(i)})$. Notice that at this point the adversary is not allowed to insert symbols into $\y_i$ anymore. We require the total number of errors that the adversary can create to be at most $t$, that is
\begin{equation}\label{eq::sum distances}
\sum_{i=1}^{n}\Delta(c_i(m,\hat \y^{(i-1)}), \y_i)\le t.
\end{equation}
%where $\Delta(\v,\w)$ denotes the minimal number of deletions and insertions required to obtain $\v$ from $\w$. This quantity is frequently referred to as the \textit{longest common subsequence} distance between strings $\v$ and $\w$. %We depict the communication scheme described in Figure~\ref{fig::communication channel}.
%We depict the previously described communication scheme in Figure~\ref{fig::communication channel}.
%\begin{figure}[t!]
%  \centering
%  \input{graph_communication_scheme.tex}
%  \caption{Insertion-deletion channel with complete feedback.}
%  \label{fig::communication channel}
%\end{figure}
Our goal is to find the maximum rate at which the sender can transmit a message in such a way that based on the received string $\y$, the receiver is always able to decode the message without error, given the maximal number of errors $t$ the adversary may induce. Formally, let $M_{id}(n,t)$ be the maximum number of messages the sender can transmit to the receiver under the conditions imposed by this model. We discuss the case that the maximal number of errors is proportional to $n$, i.e., $t=\lfloor \tau n\rfloor$ with $0\le \tau \le 1$ and define the \textit{maximal asymptotic rate} of feedback codes capable of correcting a fraction $\tau$ of insertion-deletion errors to be
$$
R_{id}(\tau):=\limsup_{n\to\infty} \frac{\log M_{id}(n,\lfloor \tau n \rfloor)}{n}.
$$
Notice that the channel output for the insertion-deletion channel could consist of up to $n+\lfloor \tau n\rfloor$ symbols when the channel is used $n$ times. However, the asymptotic rate is measured in bits per channel use.
The main goal of this paper is to find the quantity $R_{id}(\tau)$ for arbitrary $\tau \in [0,1]$.
\subsection{Our contribution}
%\gm{Sometimes we are not precise in the sense that we can correct up to $\tau n$ errors rather than correcting (exactly) $\tau n$ errors. I do not know whether we are too loose or whether it is clear anyway. Should be clear but I wanted to mention it still.}
We show that the problem of transmitting information over the adversarial insertion-deletion channel can be reduced to the problem of transmitting information over the adversarial substitution channel. More specifically, we first demonstrate that any encoding algorithm with $n$ transmissions tackling $t$ insertions can serve as an encoding algorithm with $n+t$ transmissions correcting $t$ substitution errors. From that result we obtain an upper bound on $R_{id}(\tau)$. Then we adapt an encoding strategy originally suggested for the feedback substitution channel to obtain a lower bound on $R_{id}(\tau)$. This lower bound matches the previously derived upper bound. %with random \gm{adversarial?} errors 
%and prove that it works for the adversarial insertion-deletion channel if related encoding algorithms are applicable for the adversarial substitution channel. 
%In doing so, we obtain a lower bound on the maximal asymptotic rate of the insertion-deletion channel, which matches the aforementioned upper bound. %The analysis of this algorithm for the adversarial substitution channel was done by Zigangirov in~\cite{zigangirov1976number}. These two steps enable us to find the maximal asymptotic rate of feedback insertion-deletion codes for any fraction $\tau$ of errors, which we plotted in Figure~\ref{fig::asymptotic rate id codes}.
Therefore, we have determined $R_{id}(\tau)$ for any fraction of errors $\tau$, $0\le \tau \le 1$. The resulting asymptotic rate is plotted in Figure~\ref{fig::asymptotic rate id codes}.
\begin{theorem}\label{th::main result}
	The maximal asymptotic rate of feedback codes for the adversarial insertion-deletion channel is
	\begin{equation}\label{eq::main result}
	R_{id}(\tau)= \begin{cases}
	(1+\tau)\left(1-h\left(\frac{\tau}{1+\tau}\right)\right),\quad &\text{ for }0\le\tau\le\sqrt{5}-2, \\
	(1-2\tau)\log\left(\frac{1+\sqrt{5}}{2}\right),\quad &\text{ for }\sqrt{5}-2< \tau\le \frac{1}{2} ,\\
	0,\quad &\text{ otherwise.}
	\end{cases}
	\end{equation}
\end{theorem}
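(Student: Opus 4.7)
The plan is to reduce the feedback insertion-deletion problem to the feedback substitution problem, whose maximal asymptotic rate was determined by Berlekamp~\cite{berlekamp1964block} and Zigangirov~\cite{zigangirov1976number}. Writing $M_{sub}(n',t)$ for the maximum number of messages transmissible over $n'$ uses of the feedback adversarial substitution channel with at most $t$ bit flips, and $R_{sub}(p) := \limsup_{n'\to\infty} \log M_{sub}(n', \lfloor p n' \rfloor)/n'$ for its asymptotic rate, I aim to prove both inequalities
\[
  M_{id}(n, t) \;\le\; M_{sub}(n + t, t) \qquad\text{and}\qquad M_{sub}(n + t, t) \;\le\; M_{id}(n, t).
\]
Taking logarithms, dividing by $n$, and letting $n \to \infty$ with $t = \lfloor \tau n \rfloor$ then yields the key identity
\[
  R_{id}(\tau) \;=\; (1 + \tau)\, R_{sub}\!\left(\frac{\tau}{1+\tau}\right).
\]
The Berlekamp-Zigangirov rate is $R_{sub}(p) = 1 - h(p)$ for $p \le (3-\sqrt{5})/4$ and $R_{sub}(p) = \log\!\tfrac{1+\sqrt{5}}{2} \cdot (1 - 3p)$ for $(3-\sqrt{5})/4 < p \le 1/3$. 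Since the substitutions $p = \tau/(1+\tau)$, $1+\tau = 1/(1-p)$ and $1 - 3p = (1-2\tau)/(1+\tau)$ pull the break-points $p = (3-\sqrt{5})/4$ and $p = 1/3$ back to $\tau = \sqrt{5}-2$ and $\tau = 1/2$, the identity above reproduces~\eqref{eq::main result} exactly.

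For the upper bound, I would take any $(n, t)$ feedback i-d code and convert it into an $(n+t, t)$ feedback substitution code. The substitution encoder internally simulates the i-d encoder, producing at each of its $n + t$ uses the next symbol the i-d encoder schedules, with a canonical padding convention once the $n$ genuine i-d transmissions are exhausted. Each bit flipped by the substitution adversary is paired with a single i-d error event, a deletion of the corresponding transmitted symbol or an insertion that consumes one of the $t$ extra slots, so a budget of $t$ substitutions suffices to cover every admissible i-d error pattern; two distinct messages that are separable in the i-d code therefore remain separable in the simulated substitution code.

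For the lower bound, I would reverse the direction: an optimal $(n+t, t)$ feedback substitution code is emulated by an i-d encoder that maintains a pointer into the substitution schedule, transmits the corresponding bit, and advances the pointer according to the i-d feedback $\y_i$, which reveals whether the adversary inserted, deleted, or transmitted faithfully. Because each insertion or deletion in the i-d channel is interpreted as exactly one substitution in the emulated channel and the total i-d budget is $t$, the simulated $(n+t)$-length substitution word is received with at most $t$ flips, so Bob recovers the message by running the substitution decoder on the transcript of channel outputs together with his knowledge of the (noiseless) feedback history.

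The \textbf{main obstacle} is formalizing the correspondence between i-d moves and substitution moves so that exactly one unit of error budget is consumed per event in each direction, and so that the adaptive encoder always has a well-defined next symbol to emit regardless of the history of prior insertions and deletions. Once this bookkeeping is done carefully, both inequalities above follow, and invoking the Berlekamp-Zigangirov characterization of $R_{sub}(p)$, which this paper revisits in elaborated form, yields the piecewise expression~\eqref{eq::main result}.
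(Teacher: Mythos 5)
Your upper-bound reduction, $M_{id}(n,t)\le M_{s}(n+t,t)$, is exactly the paper's Lemma~\ref{lem::ins-del to subst} and is sound. The gap is in the lower bound: the converse inequality $M_{s}(n+t,t)\le M_{id}(n,t)$ is \emph{false} in general, and the paper warns about this explicitly. The problem with emulating an arbitrary feedback substitution code by an insertion-deletion encoder is the following. Suppose the i-d encoder is about to transmit its $i$th bit, which according to the simulated substitution schedule is position $j$. Calvin inserts a bit $b$ \emph{before} $c_i$ is stored, so $\y_i=(b,c_i)$. In the simulated substitution channel, position $j$ now receives $b$ and position $j+1$ receives $c_i$. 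A general substitution encoder, after seeing that $b\neq e_j$ was received at position $j$, may well want to change what it sends at position $j+1$; but the i-d encoder has already emitted $c_i$ with no opportunity to react to the inserted $b$. If the substitution encoder's adaptive choice at $j+1$ disagrees with $c_i$, one insertion has produced two substitution errors, and your ``one unit of budget per event'' bookkeeping breaks down. This is precisely why the paper does not attempt a generic reduction but instead adapts Horstein's algorithm, whose defining feature is that after a bit flip the encoder re-transmits the \emph{same} bit (``if $y_j\neq e_j$ then $e_{j+1}=e_j$''), so one insertion can cost at most one simulated substitution.

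Even with the right algorithm, the argument is more delicate than an equality of $M$'s. The i-d adversary chooses how to split its $t$ errors into $t_{\mathrm{ins}}$ insertions and $t_{\mathrm{del}}$ deletions, so the simulated substitution transcript has length $n'=n-t_{\mathrm{del}}+t_{\mathrm{ins}}\in[n-t,n+t]$, not a fixed $n+t$, and carries at most $t'=\lfloor(n'-n+t)/2\rfloor$ flips. A single encoding parameter $\alpha$ must therefore succeed simultaneously for all admissible $(n',t')$ pairs, and the paper needs a separate monotonicity argument (Lemma~\ref{lem:tangent_hamming} and Proposition~\ref{prop::minimization over different}) to show that the worst case is $n'=n+t$, $t'=t$, and that the rate loss for other pairs is only $o(1)$. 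Your proposal neither restricts to a re-transmission strategy nor addresses this variable-length/variable-error-budget coupling, so the lower bound as written does not go through.
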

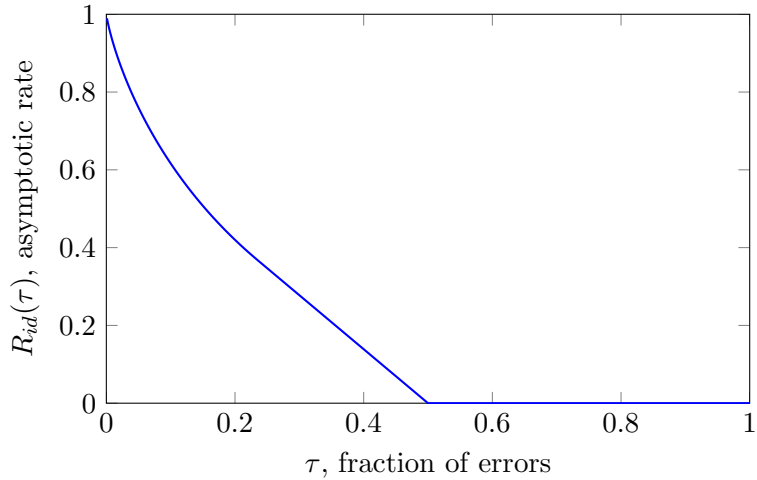
\begin{figure}[t]
\centering
\begin{tikzpicture}
\pgfplotsset{compat = 1.3}
\begin{axis}[
	legend style={nodes={scale=1.0, transform shape}},
	legend cell align={left},
	width = 0.6\columnwidth,
	height = 0.4\columnwidth,
	xlabel = {$\tau$, fraction of errors},
	xlabel style = {nodes={scale=1.0, transform shape}},
	ylabel = {$R_{id}(\tau)$, asymptotic rate},
	ylabel style={nodes={scale=1.0, transform shape}},
%	ylabel near ticks,
	xmin = 0,
	xmax = 1.0,
	ymin = 0.0,
	ymax = 1.0,
	legend pos = south east]
\addplot[color=blue, mark=none,thick] table {plotIDcode.txt};
\end{axis}
\end{tikzpicture}
  \caption{Maximal asymptotic rate of binary feedback insertion-deletion codes.}
  \label{fig::asymptotic rate id codes}
\end{figure}
As a side contribution of our paper, we present a more elaborate version of Zigangirov's technical analysis~\cite{zigangirov1976number} of Horstein's algorithm~\cite{horstein1963sequential} for the adversarial substitution channel. % and close several gaps in it\gm{do something here.}. 
We hope that this makes the intuitive algorithm which is nevertheless hard to analyze more accessible to a wider audience.
\subsection{Outline}
The remainder of the paper is organized as follows. In Section~\ref{sec:upper bounds on the rate:}, we provide an upper bound on $R_{id}(\tau)$. In Section~\ref{sec::lower bounds on the rate}, we discuss a feedback insertion-deletion code and its reduction to feedback substitution codes. The analysis of the feedback substitution code is given in Section~\ref{sec::analysis of the feedback substitution code}. 
Finally, we conclude the paper with Section~\ref{sec::conclusion}.

\section{Upper bound on $R_{id}(\tau)$}\label{sec:upper bounds on the rate:}
In this section we introduce the concept of the adversarial substitution channel and prove that a feedback insertion-deletion code correcting a fraction of $\tau$ errors can serve as a feedback substitution code capable of correcting a fraction of $\tau/(1+\tau)$  errors. This enables us to prove an upper bound on the maximal achievable rate for the insertion-deletion channel which is equal to $R_{id}(\tau)$ specified in~\eqref{eq::main result}.
\subsection{Feedback adversarial substitution channel}\label{sec::problem statement for bsc}
First we specify the adversarial \textit{substitution channel} with feedback. In this work we are only considering the binary case, thus the channel is specified to have binary alphabets at its input as well as its output. The channel is synchronized and an error is defined as the event that an output symbol is not equal to its respective input symbol (e.g. input symbol ``0'' is flipped to output symbol ``1'').
%Now we describe a problem statement for the feedback adversarial substitution channel. By the \textit{substitution channel}, we mean a channel that takes a bit (zero or one) and outputs a bit.
Thus the transmission of a single bit can be illustrated as shown in Figure \ref{fig:Bsc}. This channel is also known in the coding theory literature as the \textit{binary symmetric channel} or the \textit{bit-flip channel}.
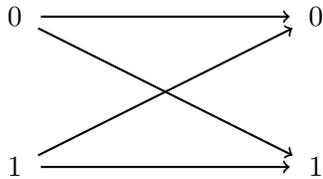
\begin{figure}[h]
\centering
\begin{tikzpicture}
%\tikzstyle{every node} = [circle, fill = gray!30]
\node (a) [circle] at (0,0) {1};
\node (b) [circle] at (0,2) {0};
\node (c) [circle] at (4,0) {1};
\node (d) [circle] at (4,2) {0};
\draw[thick,->] (a) -- (c);
\draw[thick,->] (a) -- (d);
\draw[thick,->] (b) -- (c);
\draw[thick,->] (b) -- (d);
\end{tikzpicture}
\caption{Substitution channel} \label{fig:Bsc}
\end{figure}

Let $[M]:= \{1, \dots, M\}$ denote the message set. The sender's task is to transmit a message $m\in [M]$ error-free to the receiver. Similar to the adversarial insertion-deletion channel the adversary's task is to prevent correct decoding of the message by inflicting substitution errors using his knowledge about the encoding and decoding algorithms and the message $m$.  We are considering block encoding where we denote the blocklength by $n$. %At the $i$th moment, Alice generates bit $c_i$ and sends it through the adversarial substitution channel. Let $y_i$ denote the channel output. At the $(i+1)$st moment, Alice can adjust her strategy based on ${\y}^{(i)}:=(y_1,\ldots,y_i)$, i.e., $c_{i+1}=c_{i+1}(m,{\y}^{(i)})$.
We denote the $i$th symbol the encoder sends over the channel by $c_i\in\{0,1\}$. The adversary takes $c_i$ and is able to inflict a substitution error to create $y_i\in\{0,1\}$.
When generating the $i$th channel input symbol $c_i$, the sender can adjust the encoding strategy according to the previously received symbols by the decoder $\hat {\y}^{(i-1)}:=(y_1,\ldots,y_{i-1})$. In other words, $c_i: [M]\times \{0,1\}^{i-1}\to \{0,1\}$ is a function of $m$ and $\hat {\y}^{(i-1)}$, i.e. $c_i(m,{\hat \y}^{(i-1)})$. Notice that $\hat\y^{(i-1)}$ is equivalent to $\y^{(i-1)}:=(y_1||\ldots ||y_{i-1})\in\{0,1\}^{i-1}$ because the substitution channel is synchronized.
Based on the output string $\y:=(y_1||\ldots||y_n)$, the receiver has to correctly decode the message $m$. This means that the decoding function $dec(\y)$ is of the form $dec: \{0,1\}^n\to [M]$.  We require the total number of errors produced by the adversary to be at most $t$, i.e., $d_H(\c,\y)\le t$, where $\c:=(c_1,\ldots,c_n)$. Let $M_s(n,t)$ be the maximal number of messages the sender can transmit to the receiver and define the maximal asymptotic rate of feedback codes capable of correcting a fraction of $\tau$ substitution errors to be 
$$
R_{s}(\tau):=\limsup_{n\to\infty} \frac{\log M_{s}(n,\lfloor \tau n \rfloor)}{n}.
$$
\subsection{Construction of a substitution code from an insertion-deletion code}
%To prove the upper bound in~\eqref{eq::main result}, we combine two ideas. 
To prove that \eqref{eq::main result} is an upper bound on the maximal rate of feedback codes for the adversarial insertion-deletion channel, we combine two ideas.
First, we prove that a feedback insertion-deletion code can be used to tackle substitution errors.

\begin{lemma}\label{lem::ins-del to subst}
Suppose that a feedback insertion-deletion code of length $n$  and size $M$ capable of correcting $t$ errors is given. Then there exists a feedback code of length $n+t$ and size $M$ capable of correcting $t$ substitution errors. 
\end{lemma}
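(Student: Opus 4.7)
The plan is to explicitly construct the substitution feedback code of length $n+t$ from the given insertion-deletion code $(E_{ID},D_{ID})$ by simulating the ID encoder bit by bit on the substitution channel, interpreting each substitution flip as a single bit inserted immediately before the current ID symbol in the simulated ID transcript.

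First I would describe the encoder. For message $m$, Alice maintains an ID-simulation state consisting of the current ID step $i$ (initially $1$) and a growing block $w$ (initially $\epsilon$). At each substitution step $k\le n+t$ while $i\le n$, she computes $c=E_{ID}(m,\y_1,\ldots,\y_{i-1})$, transmits $c$, and reads the feedback bit $y_k$. If $y_k=c$, she appends $c$ to $w$, commits $\y_i:=w$, resets $w:=\epsilon$, and advances $i\to i+1$; if $y_k\ne c$, she appends $y_k$ to $w$ and leaves $i$ unchanged. Once $i>n$ she pads with a fixed bit for any remaining substitution steps. Every step with $i\le n$ is either a success (advancing $i$ by one) or contributes exactly one inserted bit to $\y_i$, i.e., exactly $1$ to $\sum_i\Delta(c_i,\y_i)$. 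Hence with at most $t$ substitution flips the simulation reaches $i=n+1$ within $n+t$ substitution steps and yields $(\y_1,\ldots,\y_n)$ with $\sum_i\Delta(c_i,\y_i)\le t$, so $D_{ID}$ applied to it recovers $m$.

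For the decoder, Bob replays the above encoder for every candidate $m'\in[M]$ using the received word $y=(y_1,\ldots,y_{n+t})$ as feedback, obtaining a transcript $(\y_1^{m'},\ldots,\y_n^{m'})$ and a meaningful flip count $f^{m'}$. He outputs the $m'$ whose simulation completes within $n+t$ substitution steps with $f^{m'}\le t$; the genuine $m$ trivially satisfies this condition.

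The main obstacle is decoder uniqueness. I would argue by contradiction via the ID code's correctness: if a false $m'\ne m$ also passes the test, its simulated concatenation $\y_1^{m'}\cdots\y_n^{m'}$ is a prefix of $y$ of length $n+f^{m'}$ that is ID-achievable from $m'$ with at most $t$ errors, while the genuine simulation for $m$ produces an ID-achievable prefix of length $n+f^m$. When $f^m=f^{m'}$ the two prefixes coincide as strings, yielding a single string that is ID-achievable from two distinct messages, which contradicts the ID code's unique decodability. The case $f^m\ne f^{m'}$ is the more delicate one, and I would handle it by identifying the first substitution step at which the two simulations diverge and, through careful bookkeeping of the simulated blocks, reducing the situation to the equal-length case.
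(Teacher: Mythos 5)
Your encoder is exactly the one the paper uses: re-transmit the current ID symbol until it arrives intact, and interpret each flipped bit as an insertion placed just before that symbol in the simulated ID transcript. The decoding/uniqueness argument is where the gap lies.

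You commit each simulated block $\y_i$ at the very step where the $i$th ID symbol is first received correctly, so the simulated ID transcript for a candidate $m'$ is a prefix of $\y$ of length $n+f^{m'}$ --- a quantity that depends on $m'$. The equal-length case $f^m=f^{m'}$ does produce a single string that is ID-achievable from both messages with at most $t$ errors, which contradicts ID decodability. But the case $f^m\ne f^{m'}$ is left unresolved: saying you would ``reduce it to the equal-length case by careful bookkeeping'' is not an argument, and nothing a priori forces the two simulations to halt after the same number of substitution steps. This is exactly the heart of the matter, not a detail.

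The paper removes the case split with a small but essential twist in the decomposition: the last block is taken to absorb the entire tail, $\y_n:=(y_{j_{n-1}+1},\ldots,y_{n+t})$. Then every candidate message decomposes the \emph{same} flat string of length $n+t$, and since each $\y_i$ still contains $c_i$ as a subsequence the error count is
\begin{equation*}
\sum_{i=1}^{n-1}\bigl(j_i-j_{i-1}-1\bigr)+\bigl(n+t-j_{n-1}-1\bigr)=t ,
\end{equation*}
independently of how many flips actually occurred. Consequently two messages both consistent with the received word would each make the full word $\y^{(n+t)}$ ID-achievable with at most $t$ errors, and ID unique decodability forces them to coincide. Pad your last block out to position $n+t$ and the unequal-length case disappears; without that step your proposal is incomplete.
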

Second, we make use of an upper bound on the rate $R_s(\tau)$ provided by Berlekamp~\cite{berlekamp1964block}.
\begin{theorem}[{Follows from~\cite[Chapter IV]{berlekamp1964block} and~\cite{zigangirov1976number}}]\label{th:: code rate subst}
The maximal asymptotic rate for the adversarial substitution channel satisfies
	\begin{equation*}
	R_{s}(\tau)=\begin{cases}
	1 - h(\tau),\quad &\text{ for }0\le\tau\le (3-\sqrt{5})/4, \\
	(1-3\tau)\log\left(\frac{1+\sqrt{5}}{2}\right),\quad &\text{ for }(3-\sqrt{5})/4< \tau\le 1/3,\\
	0,\quad &\text{otherwise.}
	\end{cases}
	\end{equation*}
\end{theorem}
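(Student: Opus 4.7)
The theorem asserts matching upper and lower bounds on $R_s(\tau)$ that are attributed to different authors; my plan treats the two directions separately.

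\emph{Upper bound.} I would split into three regimes of $\tau$. For $0 \le \tau \le (3-\sqrt{5})/4$, I would apply the sphere-packing bound, which remains valid under feedback: at each of the $n$ transmissions the adversary observes $c_i$ and may independently flip it or not, so for each message $m$ the adversary can produce at least $\sum_{i=0}^{t}\binom{n}{i}$ distinct output sequences (distinct error patterns yield distinct transcripts at their first differing coordinate, even though the encoder is adaptive). Unique decodability forces the induced sets across different messages to be pairwise disjoint subsets of $\{0,1\}^n$, giving $M\sum_{i=0}^{t}\binom{n}{i}\le 2^n$ and hence $R_s(\tau)\le 1-h(\tau)+o(1)$. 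For $(3-\sqrt{5})/4 < \tau \le 1/3$ I would invoke Berlekamp's refined converse: track the weighted count $\Phi_i := \sum_m \phi^{-d_i(m)}$ of candidate messages, where $\phi := (1+\sqrt{5})/2$ and $d_i(m)$ is the disagreement count between the would-be encoded sequence for $m$ and the chosen transcript $\y^{(i)}$; the identity $1+\phi^{-1}=\phi$ powers a Fibonacci-style recurrence which, after carefully balancing the adversary's choices against the budget constraint $d_n(m)\le t$, yields the sharp bound $R_s(\tau) \le \log(\phi)(1-3\tau)$. For $\tau > 1/3$ the previous bound already vanishes; independently, a Plotkin-style three-message majority argument (at each step, two of any three candidates' encoded bits agree and the adversary outputs that bit, incurring at most one error for the single dissenting message per step, so total errors across three messages are $\le n$) directly confirms $R_s(\tau)=0$.

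\emph{Lower bound.} I would prove achievability in the full range by exhibiting Zigangirov's feedback coding scheme, a worst-case analogue of Horstein's posterior-matching algorithm. The encoder maintains the subset of messages whose encoded prefixes remain consistent with the transcript within the remaining error budget, and transmits the bit that partitions this subset as evenly as possible under a golden-ratio weighting; a Fibonacci potential function tracks how the partition and the adversary's errors interact and shows that at the claimed rates a unique alive message remains after $n$ rounds. The detailed analysis, which revisits and expands Zigangirov's 1976 proof, is given in Section~\ref{sec::analysis of the feedback substitution code}.

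\emph{Main obstacle.} The sphere-packing and Plotkin pieces are standard, and Berlekamp's converse, while technical, follows a recognizable template once the golden-ratio weighting is posited. The genuine difficulty is the matching achievability analysis: establishing that a simple Horstein-type bisection algorithm actually attains the optimal rate against the worst-case adversary is what Zigangirov completed thirteen years after Horstein's proposal, and it is the analytic centerpiece that the remainder of this paper carefully revisits.
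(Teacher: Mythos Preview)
Your plan has the right skeleton: upper bound from Berlekamp, lower bound from Zigangirov's scheme whose analysis is deferred to Section~\ref{sec::analysis of the feedback substitution code}. Your upper-bound sketches (sphere-packing, the $\phi$-weighted potential, and the three-message Plotkin argument) are correct and in fact go beyond the paper, which simply cites~\cite{berlekamp1964block} for the converse without reproducing it.

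Your description of the achievability scheme, however, does not match what Section~\ref{sec::analysis of the feedback substitution code} actually contains. You describe the encoder as tracking ``the subset of messages whose encoded prefixes remain consistent with the transcript within the remaining error budget'' and choosing a bit that evenly partitions this set under a ``golden-ratio weighting,'' with the analysis driven by a ``Fibonacci potential.'' The paper's algorithm is instead the Horstein interval-scaling scheme: the $[0,1]$-segment is split into $M$ equal subsegments, the encoder sends $0$ or $1$ according to whether the center of the \emph{true} segment lies left or right of $\tfrac12$, and upon each received symbol the two halves are rescaled by factors $\beta$ and $\alpha$ with $\alpha+\beta=2$. The parameter $\alpha$ is taken near $2\tau$ for small $\tau$ and near $(3-\sqrt5)/2$ for large $\tau$; the golden ratio enters only in the latter regime via $\beta=(1+\sqrt5)/2$. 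The analysis is not a Fibonacci recursion on a survivor count but an ``energy'' argument tracking $g(i_0,i_1)=e(i_0,i_1)\log\alpha+f(i_0,i_1)\log\beta$ together with the auxiliary functions $u,v,u_1,u_2,v_1,v_2$ of the segment lengths $l(i),t(i),r(i)$, through Lemmas~\ref{lem::moment with big true segment} and~\ref{lem::key lemma} and the chain of Propositions~\ref{prop::function v}--\ref{prop::u2 and g}. Since you ultimately defer to that section, this is a misdescription rather than a logical gap, but the plan as written would not reconstruct the proof the paper gives.
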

By Lemma~\ref{lem::ins-del to subst}, we have $M_s(n+t,t)\ge M_{id}(n,t)$ and, thus, $(1+\tau) R_{s}(\tau/(1+\tau))\ge R_{id}(\tau)$. %This bound and Theorem~\ref{th:: code rate subst} yield the upper bound from Theorem~\ref{th::main result}. 
This bound and Theorem~\ref{th:: code rate subst} yield that the asymptotic rate specified in Theorem~\ref{th::main result} is indeed an upper bound on the maximal achievable rate of the adversarial insertion-deletion channel. %\gm{Somehow the two previous sentences have the same information, right? We shall discuss which of the two we want to delete I think.}\iv{let's delete the shorter one}
So, it remains to check the validity of Lemma~\ref{lem::ins-del to subst}.
\begin{proof}[Proof of Lemma~\ref{lem::ins-del to subst}]
Let the encoding functions of the given feedback insertion-deletion code  be determined by $c_i: [M]\times \left(\{0,1\}^{*}\right)^{i-1}\to \{0,1\}$ for $i\in[n]$ and let the decoding function be given as $dec: \{0,1\}^* \to [M]$. We shall define encoding functions $e_j:[M]\times\{0,1\}^{j-1}\to\{0,1\}$ for all $j\in[n+t]$ for the substitution channel using the encoding functions $c_i$ as components. Using such functions $e_j$ the sender would be able to transmit the message over the channel with at most $t$ substitution errors.

We denote the number of correctly received symbols by $i$, the number of transmitted symbols by $j$, the index of the last correctly received symbol by $k$,  the received output string by $\z$, and tuple of output strings according to the output of a potential insertion-deletion channel by $\hat\y$.

The encoding scheme we propose consists of three parts. The Initialisation step is for initializing counters and state variables.
Step 1 describes the process of re-transmitting symbols $c_i$ until they have been received correctly. Step 2 is just used to fill potentially unused symbols within the block with zero symbols.
The state variables $i$ and $k$ are only updated in Step 1.

\textbf{Initialisation:} Let $i\gets 0$, $j\gets 0$, $k\gets 0$, $\z\gets(\;)$, and $\hat \y\gets(\;)$.

\textbf{Step 1:}  Define and transmit $e_{j+1}(m,\z)=c_{i+1}(m, \hat \y)$. Update $j\gets j+1$. Let $z_j$ be the received symbol. Update $\z \gets (\z||z_j)$. If $z_j\neq e_j(m,\z)$, then repeat Step 1. Otherwise, update $i\gets i+1$, define $\y_i\gets \z_{[k+1,j]}$, update $\hat \y\gets (\y_1,\ldots,\y_i)$ and $k\gets j$. If $i=n$, then go to Step 2.  Otherwise, repeat Step 1. 

\textbf{Step 2:} If $j=n+t$, update $\y_{n}\gets(\y_n||\z_{[k+1,n+t]})$, $\hat\y\gets(\y_1,\ldots,\y_n)$ and exit algorithm.  Otherwise, define and transmit $e_{j+1}(m,\z)=0$. Update $j\gets j+1$. Let $z_j$ be the received symbol. Update $\z\gets (\z||z_j)$. Repeat Step 2.

The encoding algorithm successfully terminates after $n+t$ transmissions because we will have $n$ correct receptions within a block of $n+t$ symbols. %\gm{Just to be sure: Is it clear from the statement of Lemma 2 that we can have at most $t$ errors at this point of the proof?}\iv{I added one sentence at the end of the first paragraph of the proof. I think it should be enough.}
The decoding scheme can just be taken from the insertion-deletion code because the output string $\z$ is a possible output of the insertion-deletion channel with at most $t$ errors for encoding functions $c_i$, $i\in [n]$. Indeed, if we define $\hat\y^{(i)}:=(\y_1,\ldots,\y_{i})$, we obtain
$$
\sum_{i=1}^{n}\Delta(c_i(m,\hat\y^{(i)}), \y_i)= \sum_{i=1}^{n}(|\y_i|-1)= \sum_{i=1}^{n}|\y_i| - n =(n+t)-n=t.
$$
Therefore, the decoder of the insertion-deletion code outputs the correct message, i.e., $dec(\z)=m$. This completes the proof.
\end{proof}

\section{Lower bound on $R_{id}(\tau)$}\label{sec::lower bounds on the rate}
In this section, we provide a feedback insertion-deletion code with an asymptotic  rate arbitrarily close to~\eqref{eq::main result}. We adapt an algorithm originally suggested by Horstein~\cite{horstein1963sequential} and further developed by Schalkwijk~\cite{schalkwijk1971class} and Zigangirov~\cite{zigangirov1976number}. Note that this algorithm has already been used for the adversarial substitution channel with feedback. We point out that not every encoding strategy for the adversarial substitution channel can be adapted to be used for the insertion-deletion channel. A key property of Horstein's algorithm is that in case the channel inflicts an error the sender re-transmits the intended symbol until it is received correctly. We elaborate on the importance of this property by considering a general encoding strategy with feedback. We assume that according to the previously received symbols the encoding strategy implies sending $c_i=0$ over the channel and that the adversary inserts symbol $``1"$ before $c_i$ such that $\y_i=10$. The symbol $``1"$  at the beginning of the sequence $\y_i$ can also be interpreted as being created by the adversary as the last symbol of $\y_{i-1}$. For a general encoding strategy it is possible that the encoder would output $c_i=1$ after observing $\y_{i-1}$, thus the adversary would be able to inflict two errors at the price of one for such an encoding strategy. Horstein's procedure prevents this possibility, making it suitable for the proposed insertion-deletions model in this paper.

\subsection{Feedback insertion-deletion code}\label{sec::feedback insertion-deletion code}
Suppose that at most a fraction of $\tau$ errors can occur and the sender would like to transmit $M=2^{Rn}$ messages where $R=R_{id}(\tau)-\epsilon$ and $\epsilon>0$. Based on $\tau,\ \epsilon$ and $n$, the sender and the receiver choose two non-algebraic numbers $\alpha$ and $\beta$ fulfilling $\alpha + \beta = 2$. 
For $\tau<\sqrt{5}-2$, they take $\alpha$ sufficiently close to $2\tau / (1+\tau)$ and for $\sqrt{5}-2\le \tau<\frac{1}{2}$, $\alpha$ is chosen to be around $(3-\sqrt{5})/2$. We discuss how close this parameter has to be taken in Section~\ref{sec::analysis of the feedback substitution code}.
Then the sender divides the $[0,1]$-segment into $M$ subsegments of length $1/M$ and enumerates them from left to right by the elements of $[M]$. The length of these $M$ segments will vary during transmission. If the sender wishes to transmit the message $m\in[M]$, then the $m$th segment will be called the \textit{true segment}. Let $T(i)$ denote the true segment of length $t(i)$ after the $i$th transmission. Denote the union of segments on the left and, respectively, on the right to $T(i)$ by $L(i)$ and $R(i)$. Now we are ready to describe encoding and decoding procedures of the proposed algorithm, written as $\mathfrak{A_{id}}(M,n,\alpha)$.

\begin{remark}
We remark that the non-algebraic numbers are dense within the set of real numbers, meaning that we can approximate any real number arbitrarily close within the non-algebraic numbers.
\end{remark}

\textbf{Encoding procedure:}  At the $i$th moment the sender checks whether the center of the true segment lies in $[0,\frac{1}{2})$. In this case the encoder transmits the symbol $c_i=0$ over the channel, otherwise the encoder transmits $c_i=1$. The sender observes the channel outputs $\y_i$ of length $n_i$ and modifies the length of all segments in the following manner. The sender runs over the symbols of \\ $\y_i=(y_{i,1},\ldots,y_{i,n_i})\in \{0,1\}^{n_i}$ from left to right. If $\y_i$ is the empty string, then the next symbol is transmitted. Otherwise, if $y_{i,j}=0$, then the length of all segments, that entirely lie inside $[0,\frac{1}{2})$, is enlarged by a factor of $\beta$. The length of segments that lie entirely in $[\frac{1}{2},1]$ is shrunk by the factor $\alpha$. There could be a segment that contains the point $\frac{1}{2}$ (not as its endpoint). Let $x$ be the length of this segment. Then $x$ can be represented as $x=x_0+x_1$, where $x_0$ is the length of part of the segment, which lies in $[0,\frac{1}{2})$. Then the length of this segment is updated from $x$ to $\beta x_0 + \alpha x_1$. If $y_{i,j}=1$ is received, the encoder shrinks the segments left to the point $\frac{1}{2}$ by a factor of $\alpha$ while increasing the segments right to it by a factor of $\beta$. Compared to the case $y_{i,j}=0$ the roles of $\alpha$ and $\beta$ are just swapped. During the transmissions the lengths of the segments change according to the algorithm, however, the order of the segments remains unaltered. In total, the sender makes $n_i$ such updates after the $i$th transmission. It can be readily seen that the total length of all segments is always $1$.  The algorithm is designed to increase the length of the true segment such that it contains the point $\frac{1}{2}$ after the last transmission.

The complexity of this encoding algorithm is $\Theta(Mn)$. However, it is clear from the description that to follow the encoding algorithm, it suffices for the encoder to modify the length of only three segments $L(i), T(i)$ and $R(i)$. Thus, the complexity can be reduced to $\Theta(n)$.

\medskip
\textbf{Decoding procedure:}
%After receiving a string $\y=(y_1||\ldots||y_{n'})$ of length $n'\in[n-t,n+t]$, the decoder reconstructs the sender's point of view by varying the length of all segments in the same way as the sender. The receiver decodes the message by taking the index of the segment containing the point $\frac{1}{2}$.
After receiving a string $\y=(y_1||\ldots||y_{n'})$ of length $n'\in[n-t,n+t]$, the decoder is able to reconstruct the sender's point of view by varying the length of all segments in the same way as the sender. After $n'$ steps, the decoder outputs the message corresponding to the segment containing the point $\frac{1}{2}$. 
The complexity of this decoding algorithm is $\Theta(Mn)$. 
%To simplify it, Bob can do same steps but in the reverse way. For this purpose, he runs over the symbols of $\y$ from right to left and tracks the preimage of the point $\frac{1}{2}$. 
To reduce the decoding complexity the receiver can reverse the decoding procedure in the following way. The decoder runs over the symbols of $\y$ from right to left and tracks the preimage of the point $\frac{1}{2}$.
Formally, let $P(n'):=\frac{1}{2}$.  For $j\in\{0,1,\ldots,n'-1\}$ and $y_j=0$, define
 $$
 P(j-1):=
 \begin{cases}
 \frac{P(j)}{\beta},\quad &\text{if }P(j)\le \frac{\beta}{2},\\
 \frac{1}{2}+\frac{2P(j)-\beta}{2\alpha},\quad &\text{if }P(j)> \frac{\beta}{2}.\\
 \end{cases}
 $$
 For $y_j=1$, define
 $$
 P(j-1):=
 \begin{cases}
 \frac{P(j)}{\alpha},\quad &\text{if }P(j)\le \frac{\alpha}{2},\\
 \frac{1}{2}+\frac{2P(j)-\alpha}{2\beta},\quad &\text{if }P(j)> \frac{\alpha}{2}.\\
 \end{cases}
 $$
Finally, the decoder uses $\hat m:=\lceil M\cdot P(0)\rceil$ as an estimate for the message $m$ that the encoder intended to send. Clearly, the complexity of this simplified algorithm is also $\Theta(n)$.
 
 \subsection{Intuition behind the algorithm}
 We give an intuitive explanation why this strategy is sensible. %For the fraction of insertion-deletion errors  $\tau\le (\sqrt{5}-2)$, we take $\alpha = 2\tau/(1+\tau)$. %We don't consider the case of large $\tau$ here because the property $\alpha \beta^2\le 1$ is important for showing that the described algorithm works. 
 In this section we are only considering the case that $\tau<\sqrt{5}-2$ as only for this case $$
 R_{id}(\tau)=(1+\tau)\left(1-h\left(\frac{\tau}{1+\tau}\right)\right).
 $$
 Let us take $\alpha = 2\tau/(1+\tau)$.
Assume that during the encoding procedure the number of times  when the true segment contains the point $\frac{1}{2}$ is $o(n)$ and the number of insertion and deletion errors during transmission is $\tau_{ins} n + o(n)$ and $\tau_{del} n + o(n)$, respectively, where $\tau_{ins}+\tau_{del}\le \tau$. Then the length of the true segment after $n$ transmissions, $t(n)$, can be bounded as follows
	\begin{align*} t(n)&\ge  M^{-1} \alpha^{\tau_{ins} n+o(n)} \beta^{(1-\tau_{del})n + o(n)}\\
&=2^{-(R_{id}(\tau)-\epsilon)n + \tau_{ins} n \log(2\tau/(1+\tau)) + (1-\tau_{del})n \log(2/(1+\tau)) + o(n)}\\
&\ge 2^{-(R_{id}(\tau)-\epsilon)n + \tau n \log(2\tau/(1+\tau)) + n \log(2/(1+\tau)) + o(n)}\\
&= 2^{\epsilon n + o(n)}.
	\end{align*}
	where we used the fact that the function $\tau_{ins} \log(2\tau/(1+\tau)) + (1-\tau_{del})\log(2/(1+\tau))$ provided $\tau_{ins}+\tau_{del}\le \tau $ is minimized at $\tau_{ins}=\tau$ and $\tau_{del}=0$. 
Thus, the length of the true segment would be very large at the end and it would definitely contain the point $\frac{1}{2}$. However, our initial assumption is wrong, and the true segment contains the point $\frac{1}{2}$ more than $o(n)$ times.
To actually prove that the strategy is working, we need more sophisticated arguments.

\subsection{Construction of an insertion-deletion code from a substitution code}
The algorithm  $\mathfrak{A_{id}}(M,n,\alpha)$ described in Section~\ref{sec::feedback insertion-deletion code} was originally proposed in a similar form for the substitution channel. We emphasize that this channel provides synchronization between the sender and the receiver, i.e., after transmitting the symbol $c_i$, the channel outputs exactly one symbol $y_i$.  We will refer to the encoding algorithm for the substitution channel with $n$ channel uses that follows the steps described in Section~\ref{sec::feedback insertion-deletion code} as $\mathfrak{A}_{s}(M,n,\alpha)$. A formal definition of this algorithm for the substitution channel will be given in Section~\ref{sec: feedback subsitution code}. In the following we show that it is possible to achieve the asymptotic rate given in equation~\eqref{eq::main result} for the adversarial insertion-deletion channel by using the algorithm $\mathfrak{A_{id}}(M,n,\alpha)$. We first prove the following Lemma.

\begin{lemma}
If for all $n'\in[n-t,n+t]$, the algorithm $\mathfrak{A}_{s}(M,n',\alpha)$ can be used for successful transmissions of any message $m\in [M]$ over the feedback adversarial substitution channel with at most $t':=\lfloor (n'-n+t)/2 \rfloor$ errors, then the algorithm $\mathfrak{A}_{id}(M,n,\alpha)$ can be used for successful transmissions of any message $m\in [M]$ over the feedback adversarial insertion-deletion channel with at most $t$ errors.
\end{lemma}
\begin{proof}
Let $\y=(\y_1||\ldots||\y_n)=(y_1,\ldots,y_{n'})$ be a possible output string of length $n'\in[n-t,n+t]$ if the algorithm $\mathfrak{A}_{id}(M,n,\alpha)$ is used for transmitting a message $m\in[M]$ over the adversarial insertion-deletion channel with at most $t$ errors. Now we show that this string $\y$  is also a possible output if the algorithm $\mathfrak{A}_{s}(M,n',\alpha)$ is used for transmitting the same message $m\in[M]$ over the adversarial substitution channel with at most $t':=\lfloor (n' - n+t) /2 \rfloor$ errors.  Define $\hat\y^{(i-1)}:=(\y_1,\ldots,\y_{i-1})\in\left(\{0,1\}^*\right)^{i-1}$ and $\y^{(i-1)}:=(y_1,\ldots,y_{i-1})\in\{0,1\}^{i-1}$. Let $c_i=c_i(m,\hat \y^{(i-1)})$ and $e_i=e_i(m,\y^{(i-1)})$ be the $i$th bit generated by the encoders of the feedback insertion-deletion code and the substitution code, respectively. We define $\e:=(e_1,\ldots,e_{n'})$. So, it suffices to show that $d_H(\e,\y)\le t'$.  Denote the length of $\y_i$ by $n_i$. We represent $\e$ as the concatenation $(\e_1||\ldots||\e_n)$ such that the binary string $\e_i$ has length $n_i$. It is clear that for ``synchronized" moments $i$ and $N_i:=n_1+\dots+n_{i-1}+1$, both algorithms send the same symbol, i.e., 
$$
c_{i}(m,\hat\y^{(i-1)})=e_{N_i}(m,\y^{(N_i-1)}).
$$
By the definition of the algorithm, it can be readily checked that if $y_j\neq e_j$, then $e_{j+1}=e_j$, i.e., a re-transmission occurs.
Therefore, 
$$
d_H(\e_{i},\y_{i})\le 
\begin{cases}
n_{i} -1, &\text{if }\Delta(c_{i}, \y_{i})= n_{i}-1,\\
n_{i},&\text{if }\Delta(c_{i}, \y_{i})= n_{i}+1.
\end{cases}
$$
Note that the second case can only happen when a deletion occurred. Let $t_{ins}$ be the total number of insertions and $t_{del}$ be the total number of deletions. 
We have the restrictions $t_{ins}+t_{del}\le t$ and $n-t_{del}+t_{ins} = n'$. This implies that
$n' - n + 2 t_{del} \le t$ or $t_{del} \le \lfloor(n-n' + t)/2 \rfloor$. It follows that 
$$
d_H(\e,\y) = \sum_{i=1}^{n}d_H(\e_{i},\y_{i})\le \sum_{i=1}^{n} n_i - \sum_{i=1}^{n}\mathbb{1}\{\Delta(c_{i}, \y_{i})=n_i-1\} \le n' - (n-t_{del})  \le t',
$$
where $\mathbb{1}\{x=a\}$ denotes the indicator function of the event $x=a$.  %Thus, we get the following statement. 
\end{proof}

Next we have a closer look at the adversarial substitution channel. In order to achieve the maximal asymptotic rate of the adversarial substitution channel with a fraction $\tau\le (3-\sqrt{5})/4$ of errors, for the parameter $\alpha$ within the algorithm $\mathfrak{A}_{s}(M,n,\alpha)$ it holds that $\alpha \approx 2 \tau$ with $\alpha$ being non-algebraic.
\begin{lemma}[Follows from~\cite{zigangirov1976number}]\label{lem:tangent_hamming}
Let $\alpha^* = 2 \tau^*$ with $\tau^*\le (3-\sqrt{5})/4$ being a non-algebraic number and let the algorithm $\mathfrak{A}_s(M,n,\alpha^*)$ be used to communicate information over the adversarial substitution channel for different $\tau$. Then the achieved rates for different $\tau$ form a tangent line to the point $(\tau^*,R_s(\tau^*))$ of the function $R_s(\tau)$.
\end{lemma}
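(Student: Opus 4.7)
The plan is to deduce the lemma from two ingredients: a rate formula for $\mathfrak{A}_s(M,n,\alpha^*)$ as a function of the actual adversarial error fraction $\tau$, and an elementary tangent-line computation. Concretely, I will invoke the detailed analysis of the substitution algorithm (carried out in Section~\ref{sec::analysis of the feedback substitution code}) to conclude that, for every $\tau$ in a neighborhood of $\tau^*$, the algorithm $\mathfrak{A}_s(M,n,\alpha^*)$ succeeds against $\lfloor\tau n\rfloor$ substitutions for every rate strictly below
\[
R_{\alpha^*}(\tau):=\tau\log\alpha^*+(1-\tau)\log\beta^*,\qquad \beta^*:=2-\alpha^*.
\]

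The derivation of this formula follows the heuristic of Section~\ref{sec::feedback insertion-deletion code}: after $n$ transmissions with $\tau_{\mathrm{err}} n$ substitutions, the length of the true segment is at least $M^{-1}(\alpha^*)^{\tau_{\mathrm{err}} n+o(n)}(\beta^*)^{(1-\tau_{\mathrm{err}})n+o(n)}$, and requiring this to exceed a constant at the worst-case value $\tau_{\mathrm{err}}=\tau$ yields $M<2^{nR_{\alpha^*}(\tau)+o(n)}$. The non-algebraicity of $\alpha^*$ (hence of $\beta^*$) is what guarantees, as in Zigangirov's argument, that the $o(n)$ correction absorbs the logarithm of the number of moments at which the true segment straddles $1/2$; the argument is valid precisely when $\alpha^*(\beta^*)^2\le 1$, which is equivalent to $\tau^*\le (3-\sqrt 5)/4$, the hypothesis of the lemma. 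The main obstacle is really this step: one must verify that Zigangirov's proof of Section~\ref{sec::analysis of the feedback substitution code} goes through verbatim with the parameter $\alpha^*$ kept fixed while the adversary's error fraction $\tau$ ranges around $\tau^*$, rather than only in the ``matched'' case $\tau=\tau^*$. I would not redo the calculation here but rather observe that the structural inequalities on $\alpha^*,\beta^*$ used there depend only on $\alpha^*$ and not on the actual $\tau$.

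Granting this rate formula, tangency is a one-line verification. Substituting $\alpha^*=2\tau^*$ and $\beta^*=2(1-\tau^*)$ yields
\[
R_{\alpha^*}(\tau^*)=\tau^*\log(2\tau^*)+(1-\tau^*)\log\bigl(2(1-\tau^*)\bigr)=1-h(\tau^*)=R_s(\tau^*),
\]
so the line passes through the point $(\tau^*,R_s(\tau^*))$. Its slope equals $\log\alpha^*-\log\beta^*=\log\frac{\tau^*}{1-\tau^*}$, which coincides with $R_s'(\tau^*)=-h'(\tau^*)=\log\frac{\tau^*}{1-\tau^*}$. Hence $R_{\alpha^*}(\cdot)$ is exactly the tangent line to the graph of $R_s$ at $\tau^*$, and the lemma follows.
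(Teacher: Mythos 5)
Your proof is correct and takes the same route the paper implicitly relies on. The paper never actually supplies a proof of this lemma—it is stated as ``Follows from Zigangirov''—and the only place the ingredient you need actually appears is in the proof of Theorem~\ref{th::main}, where it is shown (via Lemmas~\ref{lem::moment with big true segment} and~\ref{lem::key lemma}) that $\mathfrak{A}_s(M,n,\alpha)$ succeeds against $\tau n$ errors provided $\log M < n\bigl(\tau\log\alpha+(1-\tau)\log\beta\bigr)-\varepsilon_1 n$, i.e.\ the linear-in-$\tau$ rate formula $R_{\alpha^*}(\tau)=\tau\log\alpha^*+(1-\tau)\log\beta^*$ that you invoke. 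Your key structural observation—that the inequality $\alpha^*(\beta^*)^2\le 1$ used throughout Section~\ref{sec::analysis of the feedback substitution code} depends only on $\alpha^*$, not on the adversary's $\tau$, so the analysis transfers verbatim to mismatched $\tau$—is exactly right, and your computation that $R_{\alpha^*}(\tau^*)=1-h(\tau^*)$ and $R_{\alpha^*}'=\log\alpha^*-\log\beta^*=\log\frac{\tau^*}{1-\tau^*}=R_s'(\tau^*)$ is correct.

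One small point worth flagging for completeness: the lemma as phrased speaks of the rate \emph{achieved} by $\mathfrak{A}_s(\cdot,\cdot,\alpha^*)$, which, strictly read, asks for equality—both that $R_{\alpha^*}(\tau)$ is achievable and that the algorithm cannot do better against fraction-$\tau$ adversaries. Your argument, like the paper's, establishes only achievability; the matching converse (that an adversary inflicting $\tau n$ flips can force the true segment down to roughly $t(0)(\alpha^*)^{\tau n}(\beta^*)^{(1-\tau)n}$ so decoding must fail once $M$ exceeds the bound) is not supplied here and is also left to the citation in the paper. Since $1-h$ is convex, $R_{\alpha^*}$ is tangent from below, so the one-sided bound is at least consistent with the picture. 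Given that the paper itself offers no proof, your reconstruction is a faithful and correct account of what the statement rests on.
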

By using Lemma~\ref{lem:tangent_hamming} we can show the following monotonicity property. 
\begin{proposition}\label{prop::minimization over different}
Let $M_s(n,\tau,\alpha)$ denote the maximum number of messages $M$ that can successfully be transmitted using the algorithm $\mathfrak{A}_s(M,n,\alpha)$ for the adversarial substitution channel with blocklength $n$ and at most $\tau n$ errors. Let $\alpha$ and $\tau$ be non-algebraic such that $\alpha = 2\tau/(1+\tau)$ and $\alpha\le (3-\sqrt{5})/2$. Then it holds that
\begin{align*}
    \min\limits_{k\in\{0,1,\ldots,\tau n\}} \log M_s(\lfloor n(1+\tau)\rfloor-2k,\lfloor\tau n\rfloor-k,\alpha) &= (1+o(1))  \log M_s(\lfloor(1+\tau)n\rfloor,\lfloor \tau n\rfloor,\alpha)\\ &=n(1+\tau) R_s\left(\frac{\tau}{1+\tau}\right)  + o(n).
\end{align*}
% \begin{align*}
% &M_s(n(1+\tau),\tau n,\alpha) \leq M_s(n(1+\tau)-1),\tau n-1,\alpha) \leq M_s(n(1+\tau)-2),\tau n-1,\alpha) \leq \dots \\
% &\leq M_s(n(1+\tau)-2k,\tau n-k,\alpha) \leq \dots \leq M_s(n(1-\tau),0,\alpha).
% \end{align*}
\end{proposition}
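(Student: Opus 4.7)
The plan is to reduce the minimization to a direct entropy computation using Lemma~\ref{lem:tangent_hamming}. Set $\tau^* := \tau/(1+\tau)$ so that $\alpha = 2\tau^*$ with $\tau^* \le (3-\sqrt{5})/4$. Lemma~\ref{lem:tangent_hamming} then says that for all possible error fractions $\bar\tau$, the algorithm $\mathfrak{A}_s$ with this $\alpha$ achieves rates lying on the tangent line $\ell(\bar\tau) := R_s(\tau^*) + R_s'(\tau^*)(\bar\tau - \tau^*)$ to $R_s$ at $\tau^*$; equivalently, for blocklengths $n' \to \infty$,
\[
\log M_s(n', t', \alpha) = n' \ell(t'/n') + o(n').
\]
Writing $n_k' = \lfloor(1+\tau)n\rfloor - 2k$ and $t_k' = \lfloor \tau n \rfloor - k$, the range $k \in \{0,\ldots,\lfloor\tau n\rfloor\}$ gives $n_k' \ge (1-\tau)n + O(1)$, so all these blocklengths grow linearly in $n$ and the approximation can be applied uniformly.

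By the linearity of $\ell$ we have $n_k'\, \ell(t_k'/n_k') = n_k' R_s(\tau^*) + R_s'(\tau^*)(t_k' - n_k' \tau^*)$. A direct algebraic computation using $\tau^* = \tau/(1+\tau)$ gives $t_k' - n_k'\, \tau^* = -k(1-\tau)/(1+\tau) + O(1)$, and with the identity $(1-\tau)/(1+\tau) = 1 - 2\tau^*$ this yields
\[
\log M_s(n_k', t_k', \alpha) = (1+\tau) n\, R_s(\tau^*) - kC + o(n), \qquad C := 2 R_s(\tau^*) + R_s'(\tau^*)(1-2\tau^*).
\]
In particular, at $k = 0$ the right-hand side equals $n(1+\tau) R_s(\tau/(1+\tau)) + o(n)$, which establishes the second equality of the proposition. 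The first equality reduces to showing that $k = 0$ is, up to $o(n)$, a minimizer, i.e., $C \le 0$.

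To verify $C < 0$, note that on the regime $\tau^* \le (3-\sqrt{5})/4$ we have $R_s(\tau^*) = 1 - h(\tau^*)$ and $R_s'(\tau^*) = -h'(\tau^*)$, so $-C = g(\tau^*)$ where $g(\tau) := h'(\tau)(1-2\tau) - 2(1-h(\tau))$. Now $g(1/2) = 0$ and
\[
g'(\tau) = h''(\tau)(1-2\tau) - 2 h'(\tau) + 2 h'(\tau) = h''(\tau)(1-2\tau) < 0 \quad \text{for } \tau \in (0, 1/2),
\]
since $h$ is strictly concave; hence $g$ is strictly decreasing on $(0,1/2)$ and $g(\tau^*) > g(1/2) = 0$, giving $C < 0$. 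The main technical concern is ensuring that the $o(n)$ term from Lemma~\ref{lem:tangent_hamming} is uniform in $k$, so that a per-$k$ lower bound yields a lower bound on the minimum; this holds because the normalized parameters $(n_k'/n, t_k'/n_k')$ lie in a compact range as $k$ varies, on which the concentration estimates underlying Lemma~\ref{lem:tangent_hamming} are uniform.
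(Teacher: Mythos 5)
Your proof is correct and takes essentially the same approach as the paper: both apply Lemma~\ref{lem:tangent_hamming} to replace $\log M_s$ by its tangent-line value at $\tau/(1+\tau)$ and then show the resulting expression is minimized at $k=0$ by a sign computation on the coefficient of $k$ (equivalently, the derivative with respect to $x=k/n$). Your verification of the sign via $g(z) = h'(z)(1-2z) - 2(1-h(z))$, using $g(1/2)=0$ and $g'(z)=h''(z)(1-2z)<0$ from concavity of $h$, is a slightly cleaner and more robust variant of the paper's direct substitution of the tangent coefficients $a,b$.
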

\begin{proof}
    We fix $\alpha = 2\tau/(1+\tau)$ because it allows us to successfully transmit by the algorithm $\mathfrak{A}_s(M,\lfloor n(1+\tau) \rfloor, \alpha)$ asymptotically  the maximum amount of messages given the number of errors is at most $\lfloor \tau n\rfloor$.  Then our goal is to show that although the chosen $\alpha$ is not optimal in terms of the achievable rate for blocklength $\lfloor n(1+\tau) \rfloor-2k$ and at most $\lfloor \tau n \rfloor - k$ errors, the quantity $\log M_s(\lfloor n(1+\tau)\rfloor-2k,\lfloor\tau n\rfloor-k,\alpha)$ is still asymptotically minimized for $k=0$.
    
    According to Lemma~\ref{lem:tangent_hamming} by fixing $\alpha=2\tau / (1+\tau)$ and using the algorithm $\mathfrak{A}_s(M,n',\alpha)$ for $n' = \lfloor n(1+\tau)\rfloor - 2k$ and at most $\lfloor \tau n \rfloor - k$ errors the achievable rates form the tangent line to the point $(\tau/(1+\tau),R_s(\tau/(1+\tau)))$ of the function $R_s(x)=1-h(x)$.
    
    To prove the required claim concerning the minimization, we define the function
    \begin{equation*}
        f(x): = \frac{\log M_s(n(1+\tau-2x), (\tau-x)n,\alpha)}{n} = (1+\tau-2x) \left(a \frac{\tau-x}{1+\tau-2x}+b\right)(1+o(1)),
    \end{equation*}
    where the real valued $a,b$ are fixed by the requirement that the achieved rates form a tangent to the point $(\tau/(1+\tau),R_s(\tau/(1+\tau)))$.
    A short computation shows that $a = \log\tau$ and $b = 1-\log(1+\tau)$.
    
    The derivative of $f(x)$ as $n\to\infty$ is
    \begin{equation*}
        \frac{\partial f}{\partial x} = -a - 2b + o(1) = \log\left(\frac{(1+\tau)^2}{\tau}\right)  -2 + o(1) > 0,
    \end{equation*}
    where the inequality follows the argument of the logarithm is strictly greater than $4$ for $\tau<1/2$.
    This shows that $f$ is asymptotically minimized for $x=0$, completing the proof.
\end{proof}
We have already shown that the output sequences of the algorithm $\mathfrak{A}_{id}(M,n,\alpha)$ for the adversarial insertion-deletion channel are also within the set of output sequences of the adversarial substitution channel if the algorithm $\mathfrak{A}_s(M,n',\alpha)$ is used for at most $t' = \lfloor (n'-n+t)/2 \rfloor$ errors and $n'\in [n-t,n+t]$. Therefore, from Proposition~\ref{prop::minimization over different} %there exists a successful decoding strategy for $M^*:=M_s(n(1+\tau),\tau n,\alpha)^{1+o(1)}$ for the adversarial insertion-deletion channel with a fraction $\tau$ of errors if $\mathfrak{A}_{id}(M^*,n,\alpha)$ is used for encoding. By Lemma~\ref{lem:tangent_hamming}, 
it follows that $R_{id}(\tau) \geq (1+\tau)R_s(\tau/(1+\tau))$. By Theorem~\ref{th:: code rate subst} it holds that the rate $R_{id}(\tau)$, given in~\eqref{eq::main result}, is actually achievable and upper and lower bounds on the adversarial insertion-deletion channel coincide.

\section{Analysis of the feedback substitution code}\label{sec::analysis of the feedback substitution code}
In this section, we revisit the algorithm and the proof suggested by Zigangirov in~\cite{zigangirov1976number}. To make this section self-contained, we first describe an encoding process for the adversarial substitution channel with feedback. Then we introduce useful concepts and give a high-level analysis of this feedback strategy. Finally, we provide proofs of auxiliary technical statements used for showing the main result. %\gm{Possibly we should state the theorem number here.}. 
\subsection{Feedback substitution code}\label{sec: feedback subsitution code}
Suppose that at most $\tau n$ substitution errors can occur during $n$ transmissions and the sender wants to transmit one out of $M$ messages. If $0\le\tau< (3-\sqrt{5})/4$, then the encoder takes positive real values $\alpha$ and $\beta$ such that $\alpha+\beta = 2$ and $\alpha$ is sufficiently close to $2\tau$. If $(3-\sqrt{5})/4\le \tau\le 1/3$, then $\alpha$ and $\beta$ are taken in such a way that $\alpha+\beta = 2$ and $\alpha$ is sufficiently close to and less than $(3-\sqrt{5})/2$. Clearly, for this assignment, we have that
\begin{equation}\label{eq::key inequality}
    \alpha\beta^2\leq 1.
\end{equation}
\textbf{Encoding procedure:} Initially, the sender takes the $[0,1]$-segment and partitions it into $M$ subsegments of equal length. Subsegments are enumerated from left to right. The $j$th message, $j\in[M]$, is associated  with the $j$th segment. Suppose that the encoder wants to transmit a message $m\in[M]$. The $m$th segment is then called the \textit{true}
segment and we denote it by $T(i)$ after the $i$th transmission. We define the union of the segments on the left and on the right to $T(i)$ by $L(i)$ and $R(i)$, respectively. The lengths of $L(i)$, $T(i)$ and $R(i)$ are denoted by $l(i)$, $t(i)$ and $r(i)$. Clearly, $t(0)=1/M$ and $l(i)+t(i)+r(i)=1$. At the $i$th moment, the sender checks whether the center of $T(i-1)$ is less than $\frac{1}{2}$ and transmits a $``0"$ to the channel in that case. Otherwise the encoder transmits a $``1"$ to the channel. If a $``0"$ is received, then everything on the left to $\frac{1}{2}$ is enlarged by $\beta$ and everything on the right to $\frac{1}{2}$ is shrunk by $\alpha$. More formally, 
\begin{align*}
l(i) &:= \beta \min\left(\frac{1}{2}, l(i-1)\right) + \alpha \max\left(l(i-1)-\frac{1}{2},0\right),\\
r(i) &:= \alpha \min\left(\frac{1}{2}, r(i-1)\right) + \beta \max\left(r(i-1)-\frac{1}{2},0\right),\\
t(i) &:= 1 -l(i) - r(i)  .
\end{align*}
If a $``1"$ is received, then everything on the left to $\frac{1}{2}$ is shrunk by $\alpha$ and everything on the right to $\frac{1}{2}$ is enlarged by $\beta$. Thus, 
\begin{align*}
l(i) &:= \beta \max\left(l(i-1)-\frac{1}{2},0\right) + \alpha \min\left(\frac{1}{2}, l(i-1)\right),\\
r(i) &:= \alpha \max\left(r(i-1)-\frac{1}{2},0\right) + \beta \min\left(\frac{1}{2}, r(i-1)\right),\\
t(i) &:= 1 -l(i) - r(i).
\end{align*}
Informally, given $\epsilon>0$, for properly chosen $\alpha\approx 2\min(\tau,(3-\sqrt{5})/4)$, sufficiently large $n$ and $M=2^{(R_s{(\tau)}-\epsilon)n}$, we claim that $T(n)$ contains the point $\frac{1}{2}$.
\subsection{Preliminaries}
Define $x(i):=\min(l(i), r(i))$ and $y(i) := \max(l(i), r(i))$. We say $T(i)$ is \textit{central} if $y(i) \le \frac{1}{2}$ and we say that there is a \textit{crossing} between moment $i-1$ and moment $i$
if either $(1)$ $l(i-1) \le r(i-1)$ and $l(i) > r(i)$ or $(2)$ $l(i-1) > r(i-1)$ and $l(i) \le r(i)$. We say $T(i)$ is
\textit{balanced} if $y(i)/x(i) \le \beta/\alpha$. We refer to Figure~\ref{fig:levelsets-props} for an illustration of the properties of $T(i)$.

\begin{figure}[h!]
  \centering
  \resizebox{.5\textwidth}{!}{
    \input{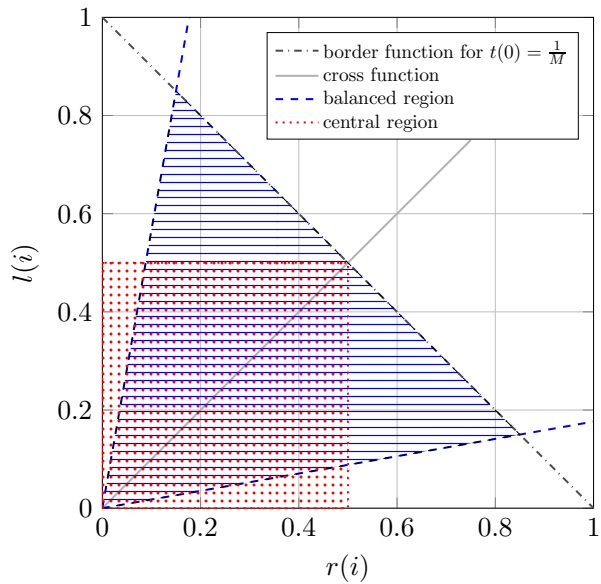}
    }
  \caption{Illustration for the notions of a crossing, a balanced and central true segment and the corresponding areas when $\tau = 0.15, \alpha = 2 \tau$, and $\beta = 2- \alpha$.}
    \label{fig:levelsets-props}
\end{figure}
\begin{example}
In Figure~\ref{fig:interval-change-transistion}, one can see how the segments $L(i), R(i)$, and $T(i)$ with $l(i)=0.2$, $r(i)=0.4$ and $t(i)=0.4$ change for the case of a correct and incorrect transmission when using the parameters $\tau=0.15$, $\alpha = 2\tau$, and $\beta = 2 -\alpha$. In particular, we see that the true segment $T(i)$ is the one containing the point $\frac{1}{2}$ and, thus, is central. Moreover, we see that $x(i)=l(i)$ and $y(i)=r(i)$, and can conclude that $T(i)$ is also balanced as  $2=\frac{y(i)}{x(i)} \leq \frac{\beta}{\alpha}=\frac{1.7}{0.3}$.
%The segments left of $T(i)$ are summarized as $l(i)$ and on the right as $r(i)$. 
In the example, a correct transmission is sending a "0", i.e., enlarging the intervals on the left of $\frac{1}{2}$ by $\beta$ and shrinking the right hand side by $\alpha$. We can see that there was a crossing, since $l(i) \le r(i)$ and now $l(i+1) > r(i+1)$. Moreover, %since the transmission was correct we see that the true segment at step $i$ stays
the true segment at moment $i+1$  still covers  the point $\frac{1}{2}$ and $T(i+1)$ is still balanced. In contrast, if the transmission was incorrect, i.e., enlarging by $\beta$ the right hand side of $\frac{1}{2}$ and shrinking the other part by $\alpha$, we observe that there was no crossing and the true segment $T(i+1)$ does not contain the point $\frac{1}{2}$ anymore. Moreover, we have that $T(i+1)$ is not balanced, since $\frac{34}{3} =\frac{y(i)}{x(i)} > \frac{\beta}{\alpha}=\frac{1.7}{0.3}$.
\end{example}

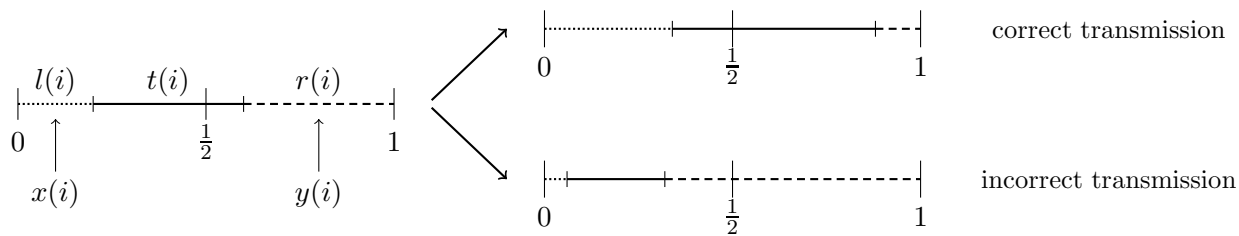
\begin{figure}[h!]
    	 \begin{tikzpicture}
		
		\def\xdist{7}
		\def\ydist{1}
		\def\leng{0.5}
		
		\coordinate (zero) at (0,0);

		% Die Grundlinie:
		\draw[thick,color=black,densely dotted] (0,0)--(2*\leng,0);
		\draw[thick,color=black,solid] (2*\leng,0)--(6*\leng,0);
		\draw[thick,color=black,densely dashed] (6*\leng,0)--(10*\leng,0);
		%\draw[ultra thin,color=black] (0,0) -- (10*\leng,0);
		\foreach \x in {2*\leng,6*\leng}
		\draw (\x,3pt)--(\x,-3pt);
		
		\foreach \x in {0*\leng,5*\leng,10*\leng}
		\draw (\x,6pt)--(\x,-6pt);
		
		\node at (0*\leng,-0.5) {$0$};
		\node at (5*\leng,-0.5) {$\frac{1}{2}$};
		\node at (10*\leng,-0.5) {$1$};
		
		\node at (1*\leng,0.3) {$l(i)$};
		\node at (4*\leng,0.3) {$t(i)$};
		\node at (8*\leng,0.3) {$r(i)$};
		
		\node at (1*\leng,-1.2) {$x(i)$};
		\node at (8*\leng,-1.2) {$y(i)$};
		
		\draw[thin,->]  (1*\leng,-0.9) -- (1*\leng,-0.2);
		\draw[thin,->]  (8*\leng,-0.9) -- (8*\leng,-0.2);
		
		%%%%%%%%%%%%%%%%%%%%%%%%%%%%%%%%%
		
		\draw[thick,color=black,densely dotted] (\xdist,\ydist)--($(\xdist,\ydist)+(3.4*\leng,0)$);
		\draw[thick,color=black,solid] ($(\xdist,\ydist)+(3.4*\leng,0)$)--($(\xdist,\ydist)+(8.8*\leng,0)$);
		\draw[thick,color=black,densely dashed] ($(\xdist,\ydist)+(8.8*\leng,0)$)--($(\xdist,\ydist)+(10*\leng,0)$);
		\foreach \x in {3.4*\leng,8.8*\leng}
		\draw ($(\xdist,\ydist)+(\x,3pt)$)--($(\xdist,\ydist)+(\x,-3pt)$);
		\foreach \x in {0*\leng,5*\leng,10*\leng}
		\draw ($(\xdist,\ydist)+(\x,7pt)$)--($(\xdist,\ydist)+(\x,-7pt)$);
		
		\node at ($(\xdist,\ydist)+(0,-0.5)$) {$0$};
		\node at ($(\xdist,\ydist)+(5*\leng,-0.5)$) {$\frac{1}{2}$};
		\node at ($(\xdist,\ydist)+(10*\leng,-0.5)$) {$1$};
		
		%\node at ($(\xdist,\ydist)+(1.2*\leng,0.5)$) {$l(i+1)$};
		%\node at ($(\xdist,\ydist)+(5.7*\leng,0.5)$) {$t(i+1)$};
		%\node at ($(\xdist,\ydist)+(9.2*\leng,0.5)$) {$r(i+1)$};

		%%%%%%%%%%%%%%%%%%%%%%%%%%%%%%%%%%%%%%%
		
		\draw[thick,color=black,densely dotted] (\xdist,-\ydist)--($(\xdist,-\ydist)+(0.6*\leng,0)$);
		\draw[thick,color=black,solid] ($(\xdist,-\ydist)+(0.6*\leng,0)$)--($(\xdist,-\ydist)+(3.2*\leng,0)$);
		\draw[thick,color=black,densely dashed] ($(\xdist,-\ydist)+(3.2*\leng,0)$)--($(\xdist,-\ydist)+(10*\leng,0)$);
		\foreach \x in {0.6*\leng,3.2*\leng}
		\draw($(\xdist,-\ydist)+(\x,3pt)$)--($(\xdist,-\ydist)+(\x,-3pt)$);
		\foreach \x in {0*\leng,5*\leng,10*\leng}
		\draw ($(\xdist,-\ydist)+(\x,7pt)$)--($(\xdist,-\ydist)+(\x,-7pt)$);
		
		\node at ($(\xdist,-\ydist)+(0,-0.5)$) {$0$};
		\node at ($(\xdist,-\ydist)+(5*\leng,-0.5)$) {$\frac{1}{2}$};
		\node at ($(\xdist,-\ydist)+(10*\leng,-0.5)$) {$1$};
		
		%\node at ($(\xdist,-\ydist)+(1.2*\leng,0.5)$) {$l(i+1)$};
		%\node at ($(\xdist,-\ydist)+(5.7*\leng,0.5)$) {$t(i+1)$};
		%\node at ($(\xdist,-\ydist)+(9.2*\leng,0.5)$) {$r(i+1)$};
		
		%%%%%%%%%%%%%%%%%%%%%%%%%%%%%%%%%%%%%%%%%%%
		
		\draw[thick,->] (11*\leng,0.05*\ydist) -- ($(\xdist,\ydist)+(-1*\leng,0)$ );
		\draw[thick,->] (11*\leng,-0.05*\ydist) -- ($(\xdist,-\ydist)+(-1*\leng,0)$ );
		
		\node at ( $(\xdist,\ydist) + (15*\leng,0)$) {\small correct transmission};
		\node at ( $(\xdist,-\ydist) + (15*\leng,0)$) {\small incorrect transmission};

		\end{tikzpicture}
    \caption{Example for transition of the intervals $L(i), T(i)$, and $R(i)$ in case for a correct and incorrect transmission}
    \label{fig:interval-change-transistion}
\end{figure}

Define  the function
$$
g(i_0, i_1):= e(i_0, i_1)\log\alpha + f(i_0, i_1)\log\beta,
$$
where $e(i_0, i_1)$ and $f(i_0, i_1)$ are the numbers of incorrect and correct transmissions in moments $(i_0, i_1]=\{i_0+1,\ldots,i_1\}$. Note that the function $g$ is additive in the sense $g(i_0,i_2)=g(i_0,i_1)+g(i_1,i_2)$. For brevity, we write $g(n)$, $e(n)$ and $f(n)$ to denote $g(0,n)$, $e(0,n)$ and $f(0,n)$, respectively. We can think about the value $g(i,n)$ as of the amount of \textit{energy} we have at the moment $i$. For each correct transmission, we spend $\log\beta>0$, for an incorrect one, we get $-\log\alpha>0$.
\subsection{High-level analysis}
The following statement claims that if the amount of energy is sufficiently large, then at some moment the length of the true segment is bounded away from zero irrespective of the actual blocklength $n$ and we still have energy left to further increase the length of the true segment in the remaining steps. 
\begin{lemma}\label{lem::moment with big true segment}
For every $\varepsilon_1 > \varepsilon_2 > 0$ there exists $\delta>0$ such that for all but finitely many $\alpha \in (0,1)$ it holds: For $n \in \mathbb{N}$, the condition $g(0,n)>\varepsilon_1 n - \log t(0)$ implies that there is a first moment $n_1$ such that $t(n_1) \geq \delta$. Furthermore, $g(0, n_1) < \varepsilon_2 n - \log t(0)$ and $g(n_1,n)>(\epsilon_1-\epsilon_2)n$.
\end{lemma}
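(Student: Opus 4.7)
The plan is to set $n_1:=\min\{i\leq n : t(i)\geq \delta\}$, argue that this minimum exists under the hypothesis, and then bound $g(0,n_1)$ via minimality. The key is a potential-function comparison between the actual growth of the true segment and the ``ideal'' multiplicative growth recorded by $g$.

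The starting observation is that $\alpha\,t(i-1)\le t(i)\le\beta\,t(i-1)$ at every step, and that when $T(i-1)$ does not contain $\tfrac12$ as an interior point the ratio $t(i)/t(i-1)$ equals $\beta$ for a correct transmission and $\alpha$ for an incorrect one, so $\log(t(i)/t(i-1))$ exactly matches the energy increment in $g$. I would call the remaining steps, in which $T(i-1)$ contains $\tfrac12$, the \emph{straddle} steps; writing $t(i-1)=t_0+t_1$ for the lengths of the two halves of $T(i-1)$, one has $t(i)=\beta t_0+\alpha t_1$ in the correct case and $t(i)=\alpha t_0+\beta t_1$ in the incorrect one, so $\log(t(i)/t(i-1))\in[\log\alpha,\log\beta]$. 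Introducing the discrepancy
\[
\Phi(i):=g(0,i)-\bigl(\log t(i)-\log t(0)\bigr),
\]
I would verify that $\Phi(i)=\Phi(i-1)$ outside straddle steps and $|\Phi(i)-\Phi(i-1)|\le\log(\beta/\alpha)$ at a straddle step, so $|\Phi(n_1)|\le S\log(\beta/\alpha)$, where $S$ counts straddle steps in $[1,n_1]$.

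To prove existence of $n_1$, I would argue by contradiction: assume $t(i)<\delta$ for all $i\le n$. Each straddle step then forces $l(i-1)\in(\tfrac12-\delta,\tfrac12)$, an interval of length at most $\delta$. A separation argument on the deterministic orbit of $l$ under the $\alpha,\beta$-rescalings should yield a uniform linear bound $S\le C(\delta)\,n$ with $C(\delta)\to 0$ as $\delta\to 0$. Combined with the hypothesis this gives
\[
\log t(n)=\log t(0)+g(0,n)-\Phi(n)>\bigl(\epsilon_1-C(\delta)\log(\beta/\alpha)\bigr)n,
\]
which exceeds $\log\delta$ once $\delta$ is small enough (the hypothesis itself forces $n\ge(-\log t(0))/(\log\beta-\epsilon_1)$ via $g(0,n)\le n\log\beta$), contradicting $t(n)<\delta$. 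For the quantitative bound on $g(0,n_1)$, minimality gives $t(n_1-1)<\delta$, hence $t(n_1)\le\beta\delta$, and so
\[
g(0,n_1)=\log\bigl(t(n_1)/t(0)\bigr)+\Phi(n_1)\le\log(\beta\delta)-\log t(0)+C(\delta)\log(\beta/\alpha)\,n.
\]
Choosing $\delta\le 1/\beta$ and small enough that $C(\delta)\log(\beta/\alpha)<\epsilon_2$ will then make this at most $\epsilon_2 n-\log t(0)$, and the last claim $g(n_1,n)>(\epsilon_1-\epsilon_2)n$ follows from the additivity $g(0,n)=g(0,n_1)+g(n_1,n)$ and the hypothesis.

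The hard part will be the uniform linear bound $S\le C(\delta)n$ on straddle steps: the endpoint $l$ follows a deterministic nonlinear walk under the $\alpha,\beta$-rescalings, and algebraic coincidences between $\alpha$ and $\beta$ could in principle create orbits that hug $\tfrac12$ and inflate $S$. Ruling out such pathologies is exactly what the ``for all but finitely many $\alpha$'' clause buys us and is the reason for insisting $\alpha$ be non-algebraic.
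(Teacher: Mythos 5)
Your plan mirrors the paper's proof in its essential structure: both arguments bound the number of ``straddle'' moments (the paper calls these the moments at which $T(i)$ is central) and observe that outside such moments the logarithmic change in $t$ exactly matches the increment of $g$. Your discrepancy function $\Phi(i):=g(0,i)-(\log t(i)-\log t(0))$ is a clean way to organize the same bookkeeping, and your derivation of the bounds on $g(0,n_1)$ and $g(n_1,n)$ from additivity and minimality of $n_1$ is fine (with $\delta\le 1/\beta$ and the observation that $t(n_1)\le\beta t(n_1-1)<\beta\delta$ matching the paper's $\beta\delta>\beta t(n_1-1)\ge t(n_1)$ step).

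The genuine gap is precisely the step you flag as the hard part: you assert ``a separation argument \dots\ should yield a uniform linear bound $S\le C(\delta)n$'' without actually proving it. This \emph{is} the mathematical content of the lemma. The paper closes it as follows: first pick an integer $k>\log\beta/\varepsilon_2$, and then define $\delta$ as the smallest displacement of the point $\tfrac12$ under any composition of at most $k-1$ of the (finitely many) piecewise-linear updates. For non-algebraic $\alpha$ (and indeed for all but finitely many $\alpha$, since the fixed-point coincidences needed for zero displacement are polynomial conditions on $\alpha$) this $\delta$ is strictly positive. With that $\delta$, if every $t(i)<\delta$, two central moments cannot occur within $k$ steps of one another, so $S\le n/k+1$, which plays the role of your $C(\delta)n$. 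Notice that the order of quantifiers differs from what you write: the paper fixes $k$ from $\varepsilon_2$ and then chooses $\delta$ to realize the $k$-gap, which sidesteps having to prove $C(\delta)\to 0$ quantitatively as $\delta\to 0$. You also describe the orbit of $l$ as ``deterministic,'' but it is controlled by the adversary's error pattern, so the separation must hold uniformly over all length-$(k-1)$ compositions of the update maps, not just along one orbit; this is where the ``finitely many exceptional $\alpha$'' really enters, since there are only finitely many such compositions.
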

The following lemma shows that after spending enough energy the true segment will become central.
\begin{lemma} \label{lem::key lemma} Let $\alpha$ and $\beta$ satisfy~\eqref{eq::key inequality}. Then there exists a constant $c>0$ depending only on $\alpha,\beta$ and $t(n_1)$ such that if $g(n_1, n) > c$, it follows that $T(n)$ is central. Moreover, the length of the true segment at moment $n$ is $1-o(1)$ as $c\to\infty$.
\end{lemma}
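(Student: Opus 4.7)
The aim is a quantitative bound of the shape $t(n)\ge 1-F(g(n_1,n))$ with $F(c)\to 0$ as $c\to\infty$ and implicit constants depending only on $\alpha,\beta,t(n_1)$. Once $t(n)>1/2$ one has $\max(l(n),r(n))\le 1-t(n)<1/2$ automatically, so $T(n)$ is central; thus both assertions of the lemma are consequences of such a bound.

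The plan is to track the evolution of $-\log t(i)$ and show it moves essentially in lockstep with $-g(n_1,i)$. Decompose $T(i)$ along the midpoint $1/2$ into its larger part of length $b(i)\ge 0$ and its smaller part of length $s(i)\ge 0$, so that $b(i)+s(i)=t(i)$ and $s(i)=0$ precisely when $T(i)$ is non-central. Alice's symbol-selection rule places $b(i)$ on the same side of $1/2$ as the center of $T(i)$, whence a correct reception produces $t(i+1)=\beta b(i)+\alpha s(i)$ and an incorrect one produces $t(i+1)=\alpha b(i)+\beta s(i)$. A direct computation rewrites this as
\[
\log t(i+1)-\log t(i) \;=\; \begin{cases} \log\beta + \log\!\bigl(1-\tfrac{(\beta-\alpha)\,s(i)}{\beta\,t(i)}\bigr) & \text{(correct)},\\[2pt] \log\alpha + \log\!\bigl(1+\tfrac{(\beta-\alpha)\,s(i)}{\alpha\,t(i)}\bigr) & \text{(incorrect)}. \end{cases}
\]
In the ``skewed'' regime $s(i)\ll b(i)$ (in particular whenever $T(i)$ is non-central, so $s(i)=0$) the bracketed terms vanish and the step contributes exactly $\log\beta$ or $\log\alpha$; naive telescoping would then give $\log t(n)-\log t(n_1)\ge g(n_1,n)$, and since $t\le 1$ this would immediately force $t(n)\to 1$.

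The principal obstacle is the ``balanced'' regime in which $s(i)$ is comparable to $b(i)$, where a single step can shift $\log t$ by much less than $\log\beta$ (on correct) or much more than $\log\alpha$ (on incorrect), so the naive telescoping fails. Here the hypothesis $\alpha\beta^2\le 1$ is used to show that the adversary cannot profitably keep $T$ near balance: a one-step calculation shows that any transmission starting from a nearly-symmetric configuration produces a markedly skewed configuration (with $b'/s'\approx \beta/\alpha$), so the ``skew'' of $T$ is automatically replenished. I would formalize this by introducing an auxiliary functional $\psi(l,r)\ge 0$ measuring the logarithmic skew of $T$, for instance $\psi=\min\{\log(b/s),K\}$ with a large cut-off $K$, and $\psi=K$ when $s=0$, and defining the Lyapunov function $\Phi(i):=-\log t(i)-\eta\,\psi(l(i),r(i))$ for a suitably small $\eta>0$ tuned using $\alpha\beta^2\le 1$. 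The key case analysis will then show that $\Phi$ decreases by at least $\log\beta$ on every correct step and increases by at most $|\log\alpha|$ on every incorrect step; telescoping yields $\Phi(n)-\Phi(n_1)\le -g(n_1,n)$, and since $\psi$ is bounded by $\eta K$ while $t(n_1)\ge\delta$ by the hypothesis of Lemma~\ref{lem::moment with big true segment}, rearranging gives $-\log t(n)\le -\log t(n_1)-g(n_1,n)+C(\alpha,\beta,t(n_1))$. Choosing $c>C+\log(2/t(n_1))$ forces $t(n)>1/2$ (hence $T(n)$ central), and letting $c\to\infty$ drives $t(n)\to 1$.
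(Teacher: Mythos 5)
Your high-level intuition matches the paper's — both you and the authors track progress of $\log t$ against the energy $g$, and both identify the balanced regime (where the true segment straddles $1/2$ with comparable halves) as the obstruction to naive telescoping. But the proposed Lyapunov function $\Phi(i)=-\log t(i)-\eta\,\psi(i)$ with $\psi(i)=\min\{\log(b(i)/s(i)),K\}$ does \emph{not} satisfy the per-step inequalities you assert, and this is not a fixable detail. The cleanest counterexample is the non-central-to-central transition. Take $T(i)=[a-t(i),a]$ with $1/(2\beta)<a<1/2$, so $T(i)$ lies entirely left of $1/2$, $s(i)=0$, and $\psi(i)=K$. On a correct reception, $T(i+1)=[\beta(a-t(i)),\beta a]$ straddles $1/2$; choosing $a=1/(2\beta)+t(i)/2$ (admissible whenever $t(i)<1-1/\beta$) puts the midpoint of $T(i+1)$ at $1/2$, so $b(i+1)\approx s(i+1)$ and $\psi(i+1)\approx 0$. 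Since $-\log t$ changes by exactly $-\log\beta$,
\[
\Phi(i+1)-\Phi(i)\;=\;-\log\beta-\eta\bigl(\psi(i+1)-\psi(i)\bigr)\;\approx\;-\log\beta+\eta K,
\]
a large positive quantity rather than something $\le -\log\beta$. A second failure sits at the cap on incorrect steps: if $\log(b(i)/s(i))=K+\varepsilon$ with $\varepsilon$ small, an error sends $\psi$ from $K$ down to $K+\varepsilon-\log(\beta/\alpha)$ while the correction $\epsilon_i$ is only $O(2^{-K})$; with $\eta=\log\beta/\log(\beta/\alpha)$ one gets $\Phi(i+1)-\Phi(i)\approx -\log\alpha+\eta\log(\beta/\alpha)>-\log\alpha$. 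Neither failure is an asymptotic edge case; the adversary can force both.

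This is precisely the difficulty the paper's proof is built around. The paper's analogue $v$ (a switch, governed by $u_1$, between $\log 2t(i)$ and a function of $y(i)=\max(l,r)$) likewise fails a clean one-step bound at what the authors call central crossings; rather than trying to patch this with a single Lyapunov correction, the proof of Lemma~\ref{lem::key lemma} explicitly counts those crossings (the set $I$), case-splits on $|I|$, and in the hard case invokes a second function $u$ together with the balance property (Propositions~\ref{prop::function u}, \ref{prop::properties of functions}, and \ref{prop::u2 and g}) to amortize the loss over the excursion and back. To close the gap in your argument you would need some form of that amortization across non-central episodes — a pointwise per-step inequality of the kind you state is simply false.
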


%\begin{remark}
%Moreover, we showed that $T(n)$ occupies almost all the unit interval.
%\end{remark}
Lemmas \ref{lem::moment with big true segment} and \ref{lem::key lemma} are proved in Section~\ref{sec::proofs of lemmas}.
Finally, we combine these two lemmas to obtain the main result.

\begin{theorem}\label{th::main}
    Let the maximal fraction of substitution errors be $\tau \in [0,1/3)$. For any $\varepsilon >0$, there exists $\alpha\in[0,(3-\sqrt{5})/2]$ such that for $n$ sufficiently large, the encoding procedure over the adversarial substitution channel with feedback can correct up to $\tau n$ errors with transmission rate at least $R_s(\tau) - \varepsilon$, where
    \begin{equation*}
	R_{s}(\tau)=\begin{cases}
	1 - h(\tau),\quad &\text{ for }0\le\tau\le (3-\sqrt{5})/4, \\
	(1-3\tau)\log\left(\frac{1+\sqrt{5}}{2}\right),\quad &\text{ for }(3-\sqrt{5})/4< \tau\le 1/3.
	\end{cases}
	\end{equation*}
\end{theorem}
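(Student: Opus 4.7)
The plan is to apply Lemma~\ref{lem::moment with big true segment} and Lemma~\ref{lem::key lemma} in sequence after fixing the parameters $\alpha, \beta$ and the message count $M$ so that the adversary's budget of $\tau n$ substitutions still leaves Alice with linear spare energy. Concretely, I would show that after $n$ transmissions the true segment is \emph{central}; since $L(n), T(n), R(n)$ partition $[0,1]$ from left to right, centrality ($l(n), r(n) \le \tfrac12$) forces $\tfrac12 \in T(n)$, so Bob decodes correctly.

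First I would choose $\alpha$ according to the regime of $\tau$. For $\tau \le (3-\sqrt{5})/4$, take a non-algebraic $\alpha$ slightly less than $2\tau$; for $(3-\sqrt{5})/4 < \tau < 1/3$, take a non-algebraic $\alpha$ slightly less than $(3-\sqrt{5})/2$. In both cases set $\beta := 2-\alpha$, and the requirement $\alpha\beta^2 \le 1$ needed by Lemma~\ref{lem::key lemma} holds. A short direct calculation shows that, at the target value of $\alpha$, the quantity $(1-\tau)\log\beta + \tau\log\alpha$ equals $1-h(\tau)$ in the first regime (using $\alpha=2\tau, \beta = 2(1-\tau)$) and $(1-3\tau)\log\phi$ in the second, with $\phi = (1+\sqrt{5})/2$, $\beta = \phi$, $\alpha = 1/\phi^2$; both coincide with $R_s(\tau)$. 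By continuity, $\alpha$ can be taken non-algebraic and close enough to the target that $(1-\tau)\log\beta + \tau\log\alpha > R_s(\tau) - \varepsilon/2$. I would then set $M := \lceil 2^{(R_s(\tau)-\varepsilon)n}\rceil$, so $-\log t(0) = (R_s(\tau)-\varepsilon)n + O(1)$.

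Next I would bound the total energy. Since the adversary flips at most $\tau n$ bits, $e(n) \le \tau n$ and $f(n) = n - e(n) \ge (1-\tau)n$; because $\log\alpha < 0 < \log\beta$, the function $g(0,n) = e(n)\log\alpha + f(n)\log\beta$ is minimized at $e(n) = \tau n$, hence
\[
g(0,n) \,\ge\, n\bigl[(1-\tau)\log\beta + \tau\log\alpha\bigr] \,>\, \bigl(R_s(\tau)-\varepsilon/2\bigr)n.
\]
Combining with $-\log t(0) = (R_s(\tau)-\varepsilon)n + O(1)$, this gives $g(0,n) > \varepsilon_1 n - \log t(0)$ with $\varepsilon_1 := \varepsilon/3$ and all $n$ large enough. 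Setting $\varepsilon_2 := \varepsilon_1/2$, Lemma~\ref{lem::moment with big true segment} then supplies a constant $\delta > 0$ and a first moment $n_1 \le n$ with $t(n_1) \ge \delta$ and residual energy $g(n_1,n) > (\varepsilon_1-\varepsilon_2)n = (\varepsilon/6)\,n$; by the same lemma, $\alpha$ can be chosen outside the finite exceptional set, which is compatible with requiring $\alpha$ non-algebraic.

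Finally, I would invoke Lemma~\ref{lem::key lemma}: it gives a constant $c$ depending only on $\alpha, \beta, t(n_1)$ — hence bounded above by a constant depending only on $\alpha, \beta, \delta$ — such that $g(n_1,n) > c$ forces $T(n)$ to be central. Since $g(n_1,n) = \Omega(n)$ while $c$ is fixed, this hypothesis holds for all sufficiently large $n$, so $T(n)$ is central and the decoding succeeds, establishing a rate of $R_s(\tau) - \varepsilon$. The substantive difficulty lies in the two lemmas themselves; the only delicate point in combining them is the order of quantifiers, namely that $\varepsilon_1, \varepsilon_2, \delta$, and consequently $c$, are all fixed \emph{before} $n$ is taken large, which is legitimate because the dependence of $c$ on $t(n_1)$ is only through the uniform lower bound $\delta$.
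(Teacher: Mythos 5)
Your proposal is correct and follows essentially the same route as the paper: choose $\alpha$ near $2\min(\tau,(3-\sqrt5)/4)$ with $\beta=2-\alpha$ so that $\alpha\beta^2\le 1$, pick $M$ so that the energy $g(0,n)$ exceeds $-\log t(0)$ by a linear-in-$n$ margin, apply Lemma~\ref{lem::moment with big true segment} to locate a moment $n_1$ with $t(n_1)\ge\delta$ and linear residual energy $g(n_1,n)$, then apply Lemma~\ref{lem::key lemma} to force $T(n)$ central. The only differences from the paper are cosmetic (different constants $\varepsilon_1,\varepsilon_2$; a ceiling-function definition of $M$ instead of the paper's sandwich inequality), and you additionally make explicit the verification that $(1-\tau)\log\beta+\tau\log\alpha$ evaluates to $R_s(\tau)$ at the target $\alpha$ in both regimes, plus the observation that $c$ in Lemma~\ref{lem::key lemma} may be taken to depend only on the lower bound $\delta$ rather than on the exact value of $t(n_1)$ — both of which the paper leaves implicit.
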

\begin{proof}
 We choose $M$ and accordingly $t(0)=1/M$ such that it satisfies  $2^{(R_s(\tau)-\varepsilon)n}<M < 2^{(R_s(\tau)-\varepsilon/2)n}$ and  $-\log t(0) < (R_{s}(\tau)-\varepsilon/2)n$. By definition, we have that $e(0,n) \leq \tau n$ and $f(0,n) \geq (1-\tau) n$. For given $\alpha$ and $\beta$, we can compute
\begin{align*}
    g(0,n)=e(0,n) \log \alpha + f(0,n) \log \beta \geq (\tau \log \alpha + (1-\tau) \log \beta)n.
\end{align*}
Subject to the constraints $\alpha + \beta = 2$ and $\alpha \beta^2\le 1$, the function $\tau \log \alpha + (1-\tau) \log \beta$ reaches the maximum $R_s(\tau)$ when $ \alpha = 2\min(\tau,(3-\sqrt{5})/4)$ and $\beta = 2-\alpha$. To make use of Lemma~\ref{lem::moment with big true segment}, we take $\alpha$ slightly smaller than $2\min(\tau,(3-\sqrt{5})/4)$ to have
\begin{align*}
    g(0,n)=e(0,n) \log \alpha + f(0,n) \log \beta \geq \left(R_{s}(\tau) - \frac{\varepsilon}{10}\right)n > \frac{2}{5} \varepsilon n - 
    \log t(0).
\end{align*}
By applying Lemma~\ref{lem::moment with big true segment} with $\epsilon_1=2\epsilon/5$ and $\epsilon_2=\epsilon/5$, there exists a constant $\delta >0$ such that for all but finite $\alpha \in (0,1)$, there is a first moment $n_1$ satisfying $t(n_1) \geq \delta$ and furthermore
\begin{align*}
   g(n_1,n)=g(0,n)-g(0,n_1)= e(n_1,n) \log \alpha + f(n_1,n) \log \beta > \frac{1}{5} \varepsilon n.
\end{align*}
Thus, for $n$ sufficiently large, we conclude by Lemma~\ref{lem::key lemma} that $T(n)$ is central.
\end{proof}

The key ingredient in the proof of Theorem~\ref{th::main result} is Lemma~\ref{lem::key lemma}. In order to prove it we introduce several auxiliary functions% which are then used to prove several statements concerning the previously described encoding strategy.
\begin{align*}
u_1(i)&:= \log(t(i)/x(i)),\\
v_1(i)&:=\log (2t(i)),\\
u_2(i)&:=
\begin{cases}
-\log (4x(i)y(i)),\quad &\text{if }y(i)\le \frac{1}{2},\\
\log((1-y(i))/x(i)),\quad &\text{if }y(i)> \frac{1}{2}, 
\end{cases}
\\
v_2(i)&:=
\begin{cases}
-\log (2y(i)),\quad &\text{if }y(i)\le \frac{1}{2},\\
\log (2(1-y(i))),\quad &\text{if }y(i)> \frac{1}{2}.
\end{cases}
\end{align*}
We also define the constants $u_1':=\log(\beta - \alpha)$, $u_1'':=\log(\beta/\alpha - 1)$, $u_2':=\log(\beta/ \alpha)$. Clearly, $u_1'<u_1''$.
Finally, define
\begin{align*}
u(i)&:=
\begin{cases}
u_1(i),\quad &\text{if }u_1(i)< u_1',\\
u_2(i),\quad &\text{if }u_1(i)\geq u_1', 
\end{cases}
\\
v(i)&:=
\begin{cases}
v_1(i),\quad &\text{if }u_1(i)< u_1'',\\
v_2(i),\quad &\text{if }u_1(i)\geq u_1''.
\end{cases}
\end{align*}
To show the validity of Lemma~\ref{lem::key lemma}, we prove that the function $v$ takes a large value at the last moment $n$, which happens if and only if $l(n)$ and $r(n)$ are small (c.f. Example~\ref{exmp:levelsets-u-v}). 

\begin{proposition}\label{prop::function v}
If $T(i-1)$ is central and there is a crossing between moments $(i-1)$ and $i$, then $v(i)-v(i-1)\geq -\log\beta$. Otherwise, 
\begin{align*}
v(i)-v(i-1)&\geq
\begin{cases}
\log\beta,\quad &\text{if transmission is correct},\\
\log\alpha,\quad &\text{if transmission is incorrect}. 
\end{cases}
\end{align*}
\end{proposition}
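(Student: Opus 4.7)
The plan is to prove the proposition by direct case analysis on the state at moment $i-1$ and the channel outcome. By the left--right symmetry of the algorithm (simultaneously swapping $0$ and $1$ and reflecting $[0,1]$ around $1/2$), we may assume Alice transmits $c_i=0$, so that $x(i-1)=l(i-1)\le r(i-1)=y(i-1)$ and the center of $T(i-1)$ lies in $[0,1/2)$. Write $(l,t,r)$ for the lengths at moment $i-1$ and $(l',t',r')$ at moment $i$.

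First I would write down the explicit updates in the four basic situations: central (i.e.\ $y\le 1/2$, which forces $T$ to straddle $1/2$ with left-part $t_0:=1/2-l$ strictly larger than $t_1:=t-t_0$) or non-central ($T\subset[0,1/2)$), each combined with a correct or incorrect transmission. A short calculation using $\alpha<1<\beta$ then shows that a crossing can occur only under a correct transmission: in the central sub-case an incorrect transmission gives $l'=\alpha l\le \alpha r\le \beta r=r'$, while in the non-central sub-case it gives $l'=\alpha l\le 1-\alpha(l+t)=r'$ because $2l+t\le 1$. In particular the incorrect-transmission bound $v(i)-v(i-1)\ge\log\alpha$ always falls under the ``no crossing'' branch, and the central-plus-crossing regime only appears for correct transmissions.

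Next I would verify the required inequality in each case by examining which branch of $v$ (namely $v_1$ or $v_2$) is active at moments $i-1$ and $i$. In the sub-cases where the same branch is used at both moments, the bound reduces to a direct calculation of $\log(t'/t)$, of $-\log(y'/y)$, or of $\log((1-y')/(1-y))$ from the multiplicative updates; the central-and-correct case uses $t_0>t_1$ to control $t'=\beta t_0+\alpha t_1$, and the hypothesis $\alpha\beta^2\le 1$ (together with $\alpha+\beta=2$) is what upgrades $-\log\alpha$ into a bound $\ge\log\beta$ whenever a correct transmission shrinks $y$ by the factor $\alpha$. When the branch switches between $i-1$ and $i$, I would use the defining thresholds $u_1''=\log(\beta/\alpha-1)$ and $u_1'=\log(\beta-\alpha)$ together with the identity $t=1-x-y$ to bound the mixed differences $v_2(i)-v_1(i-1)$, or vice versa, from below.

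The main obstacle is the central-and-crossing sub-case. Under a correct transmission the ordering of $l$ and $r$ swaps, so $y(i)=l'=\beta l$ and the side on which $v_2$ is measured flips from the right to the left half of $[0,1]$. To establish $v(i)-v(i-1)\ge -\log\beta$, I would work directly from the crossing hypothesis $\beta l>\alpha r$, combined with $l+t+r=1$ and $\alpha+\beta=2$, and compare the two possible forms of $v_2(i)$ (namely $-\log(2\beta l)$ when $\beta l\le 1/2$, or $\log(2(1-\beta l))$ otherwise) against each branch of $v(i-1)$. The required inequality then reduces to straightforward algebraic manipulation; the net message is that the potential $v$ cannot drop by more than $\log\beta$ in a single step, even when the orientation of the true segment reverses.
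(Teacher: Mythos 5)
Your route is a direct, from-scratch case analysis, whereas the paper assembles Proposition~\ref{prop::function v} from a stack of auxiliary statements: it uses Proposition~\ref{prop::trivial stuff}.\ref{it::v2>v1} ($v_2\ge v_1$), Proposition~\ref{prop::function v1} (for the $v_1$ branch), Proposition~\ref{prop::function v2} (for the $v_2$ branch), Proposition~\ref{prop::u1increasing} (monotonicity of $u_1$ across the thresholds $u_1'$ and $u_1''$), and the elementary facts in Proposition~\ref{prop::properties of functions}. The split in the paper is along the threshold $u_1(i-1)\lessgtr u_1''$ rather than along the case list you propose, and the inequality $v_2\ge v_1$ lets it collapse one of the branch-switch directions for free. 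Both approaches are compatible with the same underlying calculations, and your observations (symmetry reduction, crossing forces a correct transmission, $t'>t$ in the central crossing case, $\alpha\beta\le 1$ to upgrade $-\log\alpha$ to $\ge\log\beta$) are all correct.

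There is, however, a genuine gap in the plan at the branch switch in the ``wrong'' direction. When $u_1(i-1)\ge u_1''$ but (hypothetically) $u_1(i)<u_1''$, you have $v(i-1)=v_2(i-1)$ and $v(i)=v_1(i)$, and since $v_1\le v_2$ the mixed difference $v_1(i)-v_2(i-1)$ cannot, in general, be bounded below by the required quantity. Concretely, take $T(i-1)$ central, a correct transmission with no crossing, $\alpha=0.3$, $\beta=1.7$, $l(i-1)=0.01$, $r(i-1)=0.1$; then $v_2(i-1)=-\log(2r)=\log 5$ and $v_1(i)=\log(2t(i))=\log(2\cdot 0.953)$, so $v_1(i)-v_2(i-1)\approx \log(0.381)<\log\beta$. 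Your plan of ``bounding the mixed difference from below using the thresholds'' will not save this: the correct resolution is that this configuration never occurs, because $u_1(i-1)\ge u_1''$ implies $u_1(i)\ge u_1''$ (so one stays on the $v_2$ branch). That is precisely the content of Proposition~\ref{prop::u1increasing}, which your sketch does not mention and which is not a one-line computation --- its proof uses the crossing/no-crossing dichotomy and the constraint $\alpha\beta^2\le 1$. To make your direct case analysis go through, you must either prove this monotonicity first, or demonstrate within each switch case that the threshold inequalities at $i-1$ and $i$ are jointly incompatible with the update dynamics.
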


The change $\Delta v(i):=v(i)-v(i-1)$ of the function $v$ is equal to $g(i-1, i)$, except for the event that $T(i-1)$ is central and there is a crossing. So, if this event is only happening rarely, then $v(n)-v(n_1)$ is close to $g(n_1, n)$. Thus, because $v(n_1)$ can be lower bounded by a function of $t(n_1)$, $v(n)$ is large if $g(n_1,n)$ is large and we are done. To handle the case when those events appear more frequently, we use the function $u$ and the following proposition.

\begin{proposition}\label{prop::function u}
If $T(i-1)$ is central and there is a crossing, then $u(i)-u(i-1)\geq \log\beta$. Otherwise, $u(i)-u(i-1)\geq 0$.
\end{proposition}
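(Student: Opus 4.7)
The plan is to establish Proposition~\ref{prop::function u} by a careful case analysis that parallels the one used for Proposition~\ref{prop::function v}. By the left--right symmetry of both the update rules and of $u$, I may assume without loss of generality that $l(i-1)\le r(i-1)$, so that $x(i-1)=l(i-1)$ and $y(i-1)=r(i-1)$. First I would write down $l(i), r(i), t(i)$ for each of the four combinations of transmitted symbol $c_i\in\{0,1\}$ (determined by whether the centre of $T(i-1)$ lies left or right of $\tfrac{1}{2}$) and correct/incorrect reception, using the piecewise-linear update rules for the feedback substitution code together with $\alpha+\beta=2$. These formulas immediately yield $x(i), y(i)$ and tell us whether a crossing occurred, since a crossing corresponds exactly to the sign of $r(i)-l(i)$ differing from that of $r(i-1)-l(i-1)$.

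Next I would identify which branch of $u$ is active at each of the moments $i-1$ and $i$ according to the comparison of $u_1$ with the threshold $u_1'=\log(\beta-\alpha)$, and evaluate $u(i)-u(i-1)$ in every sub-case. When both moments use the $u_1$-branch, the change is $\log$ of the ratio by which $t$ and $x$ are scaled; using that these factors come from the set $\{\alpha,\beta\}$ and that $\alpha+\beta=2$, the ratio is at least $1$ whenever the step is neither central nor a crossing. When both moments use the $u_2$-branch, the change decomposes into $\log\alpha$ or $\log\beta$ contributions determined by which side of $\tfrac{1}{2}$ each of $x$, $y$, and $1-y$ lies on, and again yields $u(i)-u(i-1)\ge 0$ in the generic case. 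The central-crossing sub-case is where the inequality $u(i)-u(i-1)\ge\log\beta$ is attained: since $T(i-1)$ contains $\tfrac{1}{2}$ and $l$, $r$ swap order after the update, the part of $T(i-1)$ lying on what becomes the minority side is stretched by $\beta$ and becomes the new $x(i)$, which after substitution into the appropriate branch of $u$ produces the required jump.

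The main obstacle is the bookkeeping at the boundary between the branches of $u$: whenever $u_1(i-1)<u_1'$ but $u_1(i)\ge u_1'$, or vice versa, the increment $u(i)-u(i-1)$ is a mixed difference of the form $u_2(i)-u_1(i-1)$ (or the reverse) whose sign is not manifest. The threshold $u_1'=\log(\beta-\alpha)$ is chosen precisely so that along the critical ray $t/x=\beta-\alpha$ one can compare $u_1$ and $u_2$ term by term, and on that ray the target inequalities $u(i)-u(i-1)\ge 0$ (respectively $\ge \log\beta$ in the central-crossing case) reduce to $\alpha+\beta=2$ together with the standing assumption $\alpha\beta^2\le 1$ from~\eqref{eq::key inequality}. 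Dealing with these boundary sub-cases cleanly is the heart of the argument; once they are verified, the remaining sub-cases follow by direct substitution and monotonicity of $\log$.
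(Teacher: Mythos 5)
Your plan is essentially a from-scratch re-derivation of what the paper delegates to Propositions~\ref{prop::u1increasing}, \ref{prop::u2increasing} and Proposition~\ref{prop::trivial stuff}.\ref{it::u2>u1}; that route can work, but your treatment of the boundary between the branches of $u$ has a genuine gap. You propose to handle both mixed-branch situations --- $u_1(i-1)<u_1'\le u_1(i)$ and the reverse --- symmetrically, by comparing $u_1$ and $u_2$ along the ray $t/x=\beta-\alpha$. The reverse direction cannot be handled this way: if $u_1(i-1)\ge u_1'>u_1(i)$, then $u(i-1)=u_2(i-1)\ge u_1(i-1)\ge u_1'$ while $u(i)=u_1(i)<u_1'$, so $u(i)-u(i-1)<0$ and the conclusion of the proposition is simply false in that sub-case. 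The proposition holds only because that sub-case never occurs: $u_1'$ is an \emph{absorbing} threshold, i.e.\ $u_1(i-1)\ge u_1'$ forces $u_1(i)\ge u_1'$. This is the final assertion of Proposition~\ref{prop::u1increasing}, and its proof is where $\alpha\beta<1$ (a consequence of~\eqref{eq::key inequality}) actually enters. Unless you establish this absorbing property, your case split contains a step that would fail and cannot be patched by computation along a critical ray.

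The forward mixed case is, by contrast, easier than your sketch suggests: using $u_2\ge u_1$ pointwise (Proposition~\ref{prop::trivial stuff}.\ref{it::u2>u1}) together with the fact that $u_1$ is nondecreasing while below $u_1'$, one gets $u(i)=u_2(i)\ge u_1(i)\ge u_1(i-1)=u(i-1)$ immediately; no appeal to a critical ray or to $\alpha\beta^2\le 1$ is needed there. The correct skeleton of the argument is therefore: (i) $u_1$ is nondecreasing below $u_1'$, with a gain of at least $\log\beta$ on a central crossing; (ii) the regime $u_1\ge u_1'$ is absorbing; (iii) in that regime $u_2$ is nondecreasing, and a crossing with $T(i-1)$ central keeps $T(i)$ central (Proposition~\ref{prop::properties of functions}.\ref{it:: u1 is large and cross, then is central}) and raises $u_2$ by $-\log(\alpha\beta)\ge\log\beta$; (iv) glue the two regimes via $u_2\ge u_1$. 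Your sketch touches (i), (iii), (iv) but omits (ii), and (ii) is precisely what makes the case decomposition consistent.
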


So, in this case the function $u$ is big, which implies that the function $u_2$ is also big. However, this is not enough to conclude that $T(n)$ is central (c.f. Example~\ref{exmp:levelsets-u-v}). Consider the last occurrence of a crossing with the true segment having been central before the crossing. At this moment, the true segment can be shown to be also balanced, which means that the function $v_2$ is close to $u_2$, i.e., it is also big. To make $v_2$ small again, we must have a lot of incorrect transmissions due to Proposition~\ref{prop::function v2}.

\begin{proposition}\label{prop::function v2}
    If there is a crossing and $u_1(i-1) \geq u_1''$, then $v_2(i) - v_2(i-1) \geq - \log \beta$. If there is no crossing, then
    \begin{equation*}
        v_2(i)-v_2(i-1) \geq \begin{cases}
            \log \beta, \quad \text{if the transmission is correct},\\
            \log \alpha, \quad \text{if the transmission is incorrect and } v_2(i-1) < \log \beta,\\
            -\log \beta \ (\geq \log \alpha), \quad \text{if the transmission is incorrect and } v_2(i-1) \geq \log \beta.
        \end{cases}
    \end{equation*}
\end{proposition}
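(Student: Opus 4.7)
The plan is to reduce to a canonical orientation and then perform a case analysis driven directly by the update rules for $l(i)$ and $r(i)$. By symmetry (swapping the roles of the bits $0$ and $1$) I may assume $y(i-1) = r(i-1) \geq l(i-1) = x(i-1)$, so Alice's rule transmits $c_i = 0$ and a correct (resp.\ incorrect) transmission corresponds to receiving $0$ (resp.\ $1$). In every sub-case the main task is to compute $v_2(i)$ from the explicit update of $r(i)$ and $l(i)$, split according to whether $y(i-1)$ and $y(i)$ lie in $[0,1/2]$ or $(1/2,1]$, and compare with $v_2(i-1)$.

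For the no-crossing regime the ordering $r \ge l$ is preserved, so $y(i) = r(i)$. On a correct transmission with both $y(i-1), y(i) \le 1/2$, the rule $r(i) = \alpha r(i-1)$ gives $v_2(i) - v_2(i-1) = -\log\alpha \ge \log\beta$, using $\alpha\beta \le 1$, which follows from~\eqref{eq::key inequality} together with $\beta \ge 1$. When both lie in $(1/2,1]$, the rule $1-r(i) = \beta(1-r(i-1))$ gives equality $v_2(i)-v_2(i-1)=\log\beta$. The transition case $y(i-1) > 1/2 \ge y(i)$ reduces the inequality $v_2(i)-v_2(i-1) \ge \log\beta$ to $4\beta r(i)(1-r(i-1)) \le 1$, and substituting $r(i) = 1-\beta(1-r(i-1))$ turns this into $(2\beta(1-r(i-1)) - 1)^2 \ge 0$, which is trivial. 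The same $(2\alpha u-1)^2 \ge 0$ identity, now with the roles of $\alpha,\beta$ swapped, handles an incorrect transmission when $v_2(i-1) < \log\beta$ and gives the $\log\alpha$ bound; when $v_2(i-1) \ge \log\beta$, the worst-case is when $y(i)$ lands just past $1/2$ and the multiplicative rule combined with $\alpha\beta \le 1$ yields $v_2(i)-v_2(i-1) \ge -\log\beta$.

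For the crossing case I first observe that a crossing can occur only on a correct transmission (an incorrect $c_i=0 \mapsto 1$ further shrinks $l$ and grows $r$, preserving the order), so $y(i) = l(i) = \beta\, l(i-1) = \beta\, x(i-1)$. The hypothesis $u_1(i-1) \ge u_1''$ rewrites, via $l+t+r=1$ and $x+t+y=1$, as $\alpha(1-y(i-1)) \ge \beta\, x(i-1)$, that is, $y(i) \le \alpha(1-y(i-1))$. This is exactly the bound needed to rule out the otherwise-problematic sub-configuration where $y(i-1)$ sits just above $1/(2\beta)$ with $r(i-1) \le 1/2$ while $y(i) = \beta x(i-1) > 1/2$: in that regime, the hypothesis forces $y(i-1) \le 1-y(i)/\alpha$, and a short calculation converts $v_2(i)-v_2(i-1) \ge -\log\beta$ into an inequality that collapses to $(2\beta u-1)^2\ge 0$ once more. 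In all other sub-cases the bound follows directly from $y(i) = \beta x(i-1) \le \beta\, y(i-1)$, combined with the piecewise definition of $v_2$.

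The main obstacle is purely organizational: there are on the order of eight sub-cases (correct/incorrect transmission, the half containing $y(i-1)$, the half containing $y(i)$, and crossing/no-crossing), and one has to keep track of which updates apply in each. The substantive algebraic content reappears in essentially two ingredients: the identity $(2\beta u - 1)^2 \ge 0$, which governs every transition across $y = 1/2$, and the translation of $u_1(i-1) \ge u_1''$ into $\alpha(1-y(i-1)) \ge \beta x(i-1)$, which is what prevents a crossing from destroying the $-\log\beta$ bound on $\Delta v_2$.
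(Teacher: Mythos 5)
Your proof is substantively correct and follows essentially the same case decomposition as the paper's: conditioning on whether $y(i-1)$ and $y(i)$ lie in $[0,1/2]$ or $(1/2,1]$ is exactly the paper's split on centrality of $T(i-1)$ and $T(i)$. Where you differ is in the justifications. The paper routes these sub-cases through the auxiliary machinery (Proposition~\ref{prop::trivial stuff}.\ref{it::central to non-central without cross implies error} to rule out certain transitions, Proposition~\ref{prop::trivial stuff}.\ref{it::v2>v1} for $-\log 2y(i)\geq \log 2(1-y(i))$, and Proposition~\ref{prop::properties of functions}.\ref{it:: u1 is large and cross, then was central} to conclude both segments are central in the crossing case), whereas you compute directly from the segment-update rules and complete the square. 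This makes your argument more self-contained but a bit longer at each step. In the crossing case your derivation that $u_1(i-1)\geq u_1''$ is equivalent to $\alpha(1-y(i-1))\geq\beta x(i-1)$ is the right key rewriting; from it one immediately gets $y(i)=\beta x(i-1)\leq\alpha(1-y(i-1))<\alpha<1/2$ and, with a similar argument, $y(i-1)<1/2$, so both segments are central and $\Delta v_2=\log(y(i-1)/y(i))\geq-\log\beta$ directly. This is precisely what the paper extracts from Proposition~\ref{prop::properties of functions}.\ref{it:: u1 is large and cross, then was central}.

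Two details in your write-up are off, though neither is fatal. First, for the incorrect transmission that crosses from $y(i-1)\le 1/2$ to $y(i)>1/2$, the inequality you need is $4y(i)(1-y(i))\geq\alpha\beta$ with $y(i)\in[1/2,\beta/2]$; the relevant quadratic factors as $(2y(i)-\alpha)(2y(i)-\beta)\le 0$, which is not a perfect square, so the asserted ``$(2\alpha u-1)^2\ge 0$ identity with $\alpha,\beta$ swapped'' is not what is actually doing the work (the paper instead minimizes $4y(i)(1-y(i))$ at the endpoint $y(i)=\beta/2$). Second, the sentence about the crossing case converting $\Delta v_2\geq-\log\beta$ ``into an inequality that collapses to $(2\beta u-1)^2\ge 0$'' is vacuous: you have already shown $y(i)\le\alpha<1/2$, so the sub-configuration $y(i)>1/2$ cannot arise, and no further calculation is needed there. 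Likewise, your description of the $v_2(i-1)\ge\log\beta$ sub-case as ``$y(i)$ lands just past $1/2$'' is misleading since in that sub-case $y(i)=\beta y(i-1)\le 1/2$ always and $\Delta v_2=-\log\beta$ exactly. These are presentational slips rather than gaps; the case structure and the surviving computations are sound.
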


Plenty of incorrect transmissions implies that we have a lot of energy left. Now we can use again Proposition~\ref{prop::function v}, since it is not possible that a crossing occurs when the true segment is central and thus, conclude that the function $v$ takes a large value after $n$ transmissions.

\begin{figure}
\centering
\begin{minipage}{.49\textwidth}
  \centering
    \input{rainbow_v_funcs}
  \caption{Illustration for level sets of the function $v(i)$ for $\tau = 0.15$, $\alpha = 2 \tau$, and $\beta = 2- \alpha$.}
    \label{fig:levelsets-v-func}
\end{minipage}
\hfill
\begin{minipage}{.49\textwidth}
  \centering
    \input{rainbow_u_funcs}
     \caption{Illustration for level sets of the function $u(i)$ for $\tau = 0.15$, $\alpha = 2 \tau$, and $\beta = 2- \alpha$.}
    \label{fig:levelsets-u-func}
\end{minipage}
\end{figure}

\begin{example}\label{exmp:levelsets-u-v}
Figures \ref{fig:levelsets-v-func} and \ref{fig:levelsets-u-func} illustrate different level sets of the functions $v(i)$ and  $u(i)$ if they are treated as function on $l(i)$ and $r(i)$. A level set of $v(i)$ is a set of pairs $(l(i),r(i))$ such that $v(i) = const$. Note that we use the same color for the same level sets in the graphs. For high level sets of the function $v(i)$, we are close to the origin of the graph, which corresponds to $l(i)$ and $r(i)$ being small and the $t(i)$ very large. In contrast, for high level sets of the function $u(i)$, we can only claim that we are close to an axis, i.e., either $l(i)$ or $r(i)$ is small. Notice that in case $T(i)$ is central, we need to have $l(i),r(i) < \frac{1}{2}$. Combining the two figures for the level sets of $v(i)$ and $u(i)$, it becomes clear that we need to prove that the function $v(n)$ is large in order to conclude that after $n$ transmissions $T(n)$ is central.
\end{example}

Many technical details were hidden in this discussion. In order to give a formal proof of Lemma~\ref{lem::key lemma}, we need some additional propositions.

\begin{proposition}\label{prop::trivial stuff} We have the following (trivial) properties.
    \begin{enumerate}
        \item If there is a crossing, then the transmission is correct.\label{it::cross implies correct transmission}
        \item If the transmission is correct, then the length of the true segment is increased, i.e., $t(i) > t(i-1)$\label{it::correct transmission increases true segment}, otherwise it is decreased.
        \item If $T(i-1)$ is central, $T(i)$ is not central and there is no crossing, then it follows that the transmission was incorrect.\label{it::central to non-central without cross implies error}
        \item The following inequalities always hold $\alpha t(i-1)\leq t(i)\leq \beta t(i-1)$.
        \label{it::limits for t(i)}
        \item For any $x(i)$, $y(i)$ and $t(i)$, inequalities $u_2(i)\geq u_1(i)$ and $u_2(i)\geq u(i)$ hold.\label{it::u2>u1}
        \item For any $x(i)$, $y(i)$ and $t(i)$, inequalities $v_2(i)\geq v_1(i)$ and $v_2(i)\geq v(i)$ hold.\label{it::v2>v1}
    \end{enumerate}
\end{proposition}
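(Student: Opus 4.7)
The plan is to verify each of the six items by direct case analysis on the update rules for $l(i), r(i), t(i)$, so no conceptually heavy machinery is required; the whole proof reduces to careful bookkeeping with the parameters $\alpha, \beta$ (using only $\alpha+\beta = 2$ and $\alpha<1$).

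For item~1, I would assume without loss of generality that $l(i-1) \le r(i-1)$, which (since $l + t + r = 1$) places the center of $T(i-1)$ in $[0, \tfrac{1}{2}]$, so Alice sends $c_i = 0$. If the transmission is incorrect, then by the update rules everything left of $\tfrac{1}{2}$ is shrunk by $\alpha$ and everything right of $\tfrac{1}{2}$ is stretched by $\beta>1$, which preserves $l(i) \le r(i)$ and precludes a crossing. Item~2 follows by writing $t(i-1) = t_0 + t_1$, where $t_0$ is the portion of $T(i-1)$ in $[0,\tfrac12)$ and $t_1$ the portion in $[\tfrac12,1]$, and noting that the center condition forces $t_0 > t_1$ (when Alice sends $0$) or $t_1 > t_0$ (when she sends $1$). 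A correct transmission then yields $t(i) - t(i-1) = (1-\alpha)(t_0 - t_1) > 0$, and the incorrect case is analogous with opposite sign. Item~3 combines items~1 and~2: if no crossing occurs and the transmission were correct, the inequality $l(i-1) \le r(i-1) \le \tfrac12$ would be preserved (since $r(i) = \alpha r(i-1) \le r(i-1)$ in the relevant subcase), so $T(i)$ would remain central.

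For item~4, I would observe that the general update $t(i) = \beta t_0 + \alpha t_1$ or $t(i) = \alpha t_0 + \beta t_1$ with $t_0+t_1=t(i-1)$, $t_0,t_1 \ge 0$, and $\alpha \le \beta$ sandwiches $t(i)$ between $\alpha(t_0+t_1)$ and $\beta(t_0+t_1)$.

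Items~5 and~6 reduce to the single inequality $4x(i)y(i)(1 - x(i) - y(i)) \le 1 - y(i)$ (with $t(i) = 1 - x(i) - y(i)$), split according to whether $y(i) \le \tfrac12$ or $y(i) > \tfrac12$. When $y(i) \le \tfrac12$, I would bound $4y(1-x-y) \le 4y(1-y) \le 1$ via the elementary maximum at $y=\tfrac12$; when $y(i) > \tfrac12$, I would use $1 - y \ge 1 - x - y = t$ trivially from $x \ge 0$. The comparison $u_2 \ge u$ and $v_2 \ge v$ is then immediate from the definitions, since $u$ and $v$ are defined piecewise to equal either their $(\cdot)_1$ or $(\cdot)_2$ variant. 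I do not anticipate a genuine obstacle here; the only minor care needed is in item~5 to handle the degenerate case where $T(i)$ straddles $\tfrac12$, where the inequality $4y(1-y) \le 1$ is tight exactly when $y = \tfrac12, x = 0$.
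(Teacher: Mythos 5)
Your proposal is correct and follows the same direct case-analysis strategy as the paper's own proof: items~1--4 are the same elementary computations (your handling of item~4 via the decomposition $t(i) = \beta t_0 + \alpha t_1$ or $\alpha t_0 + \beta t_1$ with $t_0+t_1 = t(i-1)$ is in fact a slightly cleaner unification of the paper's separate central/non-central cases), and for items~5--6 you use the same chain $-\log(4x(i)y(i)) \ge \log((1-y(i))/x(i)) \ge \log(t(i)/x(i))$ built on $4y(1-y)\le 1$. One caveat you share with the paper: both proofs implicitly exclude the degenerate equality $l(i-1)=r(i-1)$ (in which case Alice sends a ``1'', a correct transmission leaves $t$ unchanged, and items~1, 2, 3 can in principle fail), which is tacitly ruled out by taking $\alpha$ non-algebraic.
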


\begin{proposition}\label{prop::properties of functions}
	We have the following properties.
	\begin{enumerate}
		\item If $T(i)$ is central, then $u_2(i)\ge 2 v_2(i)$. \label{it::is central then}
		\item If $T(i)$ is central and balanced, then $u_2(i) < 2v_2(i)+\log(\beta/\alpha)$. \label{it:: is central and balanced}
		\item If there is a crossing between moment $i-1$ and moment $i$, then $T(i-1)$ and $T(i)$ are balanced. \label{it:: cross, then balanced}
		\item If $u_1(i-1) < u_1''$  (or $u_1(i-1) < u_1'$)  and $T(i-1)$ is central, and the transmission is correct, then there is a crossing between moment $i-1$ and moment $i$. \label{it:: u1 is small and was central and correct transmission, then cross}
		\item If $u_1(i-1)\ge u_1'$ (or $u_1(i-1)\ge u_1''$) and there is a crossing between moment $i-1$ and moment $i$, then $T(i)$ is central. \label{it:: u1 is large and cross, then is central}
		\item If $u_1(i-1)\ge u_1''$ and there is a crossing between moment $i-1$ and $i$, then $T(i-1)$ is central and $T(i)$ is central. \label{it:: u1 is large and cross, then was central}
		\item If $u_2(i)\ge u_2'$ and there is either a crossing between moment $i-1$ and moment $i$, or a crossing between moment $i$ and moment $i+1$, then $T(i)$ is central. \label{it:: u2 is large and cross, then is central}
	\end{enumerate}
\end{proposition}

\begin{proposition}\label{prop::u1increasing}
    If $u_1(i-1) < u_1'$, then $u_1(i) - u_1(i-1) \geq 0$. If in addition $T(i-1)$ is central and there is a crossing, then $u_1(i) - u_1(i-1) \geq \log \beta$. On the other hand, if $u_1(i-1) \geq u_1'$, then $u_1(i) \geq u_1'$, and if $u_1(i-1) \geq u_1''$, then $u_1(i) \geq u_1''$.
\end{proposition}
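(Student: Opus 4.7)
The plan is to do a direct case analysis on the update rules for $t(i), l(i), r(i)$ in terms of $t(i-1), l(i-1), r(i-1)$, splitting according to (a) whether $T(i-1)$ is central, (b) which of $l(i-1)$ or $r(i-1)$ equals $x(i-1)$, (c) whether the transmission is correct or incorrect, and (d) whether a crossing occurs between the two moments. In each combination, short closed-form expressions for $t(i)/t(i-1)$ and $x(i)/x(i-1)$ are obtained using $\alpha+\beta=2$, reducing each claim to an elementary inequality in $l(i-1)$ and $r(i-1)$.

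For claim (i), the only risky case is one in which $t$ shrinks (an incorrect transmission, or an overflow past $1/2$) while $x$ is expanded by the factor $\beta$ because it lies on the ``wrong'' side of $1/2$. I expand the difference $u_1(i)-u_1(i-1)$ and invoke the hypothesis $t(i-1)/x(i-1) < \beta-\alpha$ (which is exactly $u_1(i-1) < u_1'$); combined with $\alpha+\beta=2$, this shows that the loss incurred by $x$ growing is already compensated by the smallness of $t(i-1)$. For claim (ii), the facts from Proposition~\ref{prop::properties of functions} (crossings force balancedness) and Proposition~\ref{prop::trivial stuff} (crossings occur only on correct transmissions) pin down the geometry: $x(i-1)$ is scaled by $\beta$ while $y(i-1)$ is scaled by $\alpha$, producing the min/max swap. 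Using centrality $y(i-1)\leq 1/2$, the resulting ratio $t(i)/x(i)$ then exceeds $\beta\cdot t(i-1)/x(i-1)$, which gives the required $\log\beta$.

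Claims (iii) and (iv) are threshold-preservation statements. After the case analysis above, both reduce to checking that the worst single-step decrease of $u_1$ cannot push a value just above the threshold below it. The relevant identities are $u_1'=\log(\beta-\alpha)$ and $u_1''=\log((\beta-\alpha)/\alpha)$, so that $u_1'-u_1''=-\log\alpha$; since the single-step contraction of $t(i)/x(i)$ is at worst a factor of $\alpha$ (an incorrect transmission shrinking $t$ by $\alpha$ while leaving $x$ unchanged or even expanded), values above these thresholds remain above them.

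The main obstacle I anticipate is managing the subcase in which $T(i-1)$ straddles the point $1/2$ under an incorrect transmission. There $t(i)$ is not a pure scaling of $t(i-1)$ but an affine combination $\alpha\cdot(\text{left part}) + \beta\cdot(\text{right part})$ of the two pieces of $T(i-1)$ on either side of $1/2$, and the centrality bound $y(i-1)\leq 1/2$ is needed to control the expanding piece. Keeping track of when the hypothesis $u_1(i-1)<u_1'$, the centrality of $T(i-1)$, and the structural inequality $\alpha\beta^2\leq 1$ are each invoked in the appropriate subcase will be the chief bookkeeping challenge.
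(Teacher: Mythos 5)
Your high-level framework --- a case split on centrality, correct/incorrect, and crossing --- matches the paper's, but the geometry of the ``risky'' case and the threshold-preservation argument are both off.

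First, $x(i)$ is expanded by $\beta$ precisely when the transmission is \emph{correct} (the encoder sends toward the smaller side, so a correct reception grows $x$); under an \emph{incorrect} transmission $x(i)=\alpha x(i-1)$ and $t(i)\geq\alpha t(i-1)$, so $u_1$ never decreases --- that case is safe. The genuinely dangerous configuration is ``$T(i-1)$ central, transmission correct, no crossing,'' where $x$ grows by $\beta$ while $t$ grows by a factor in $(1,\beta)$, and $u_1$ really can drop. The paper handles claim (i) not by an inequality here, but by noting this configuration cannot occur under $u_1(i-1)<u_1'$: Proposition~\ref{prop::properties of functions}.\ref{it:: u1 is small and was central and correct transmission, then cross} forces a crossing. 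Your description of the risky case (``$t$ shrinks... incorrect transmission... while $x$ is expanded by $\beta$'') conflates two mutually exclusive events.

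Second, for claims (iii) and (iv) your plan hinges on ``the worst single-step contraction of $t(i)/x(i)$ is by a factor $\alpha$,'' but in the dangerous (central, correct, no-crossing) case the contraction factor is bounded below by $1/\beta$, not $\alpha$, and the right comparison is $\log\beta$ versus the gap $u_1''-u_1' = -\log\alpha$ (it is $u_1''>u_1'$, not the reverse). Crucially, the paper again invokes Proposition~\ref{prop::properties of functions}.\ref{it:: u1 is small and was central and correct transmission, then cross} to show the dangerous case forces $u_1(i-1)\geq u_1''$, and then uses $\alpha\beta<1$ to conclude $u_1(i)\geq u_1''/\beta$-type bounds stay above $u_1'$; for $u_1''$ the paper uses a clean contradiction via the reformulation $u_1(i-1)\geq u_1''\Leftrightarrow\alpha\geq\alpha y(i-1)+\beta x(i-1)$. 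None of this is a blanket ``worst single step'' bound. Finally, your claim (ii) sketch is closest to the mark, but the key ingredient is balancedness from Proposition~\ref{prop::properties of functions}.\ref{it:: cross, then balanced} (i.e. $x(i-1)>\tfrac{\alpha}{\beta}y(i-1)$) together with $\alpha\beta^2\leq 1$; the centrality bound $y(i-1)\leq\tfrac12$ alone does not close the inequality.
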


\begin{proposition}\label{prop::u2increasing}
    If $u_1(i-1) \geq u_1'$, then $u_2(i) - u_2(i-1) \geq 0$. If both $T(i-1)$ and $T(i)$ are central, then $u_2(i) - u_2(i-1) = -\log(\alpha \beta)$.
\end{proposition}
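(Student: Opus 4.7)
The plan is to split the statement into two parts and handle them essentially independently, leveraging the propositions already proved in the excerpt. For the monotonicity $u_2(i)-u_2(i-1)\ge 0$ under the hypothesis $u_1(i-1)\ge u_1'$, the natural strategy is to pass through the envelope function $u$. Proposition~\ref{prop::u1increasing} guarantees that the threshold $u_1\ge u_1'$ is preserved in one step, so by the definition of $u$ we have $u(i-1)=u_2(i-1)$ and $u(i)=u_2(i)$. Proposition~\ref{prop::function u} then gives $u(i)\ge u(i-1)$ unconditionally, and the first assertion follows.

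For the exact identity I would argue by direct computation, using that the assumption ``both $T(i-1)$ and $T(i)$ are central'' trivializes the piecewise-linear update rules. Centrality of $T(i-1)$ means $l(i-1),r(i-1)\le 1/2$, so each of the two segments to the left of $1/2$ and to the right of $1/2$ is scaled by a pure factor: upon receiving $``0"$ one has $l(i)=\beta l(i-1)$ and $r(i)=\alpha r(i-1)$, while upon receiving $``1"$ the roles of $\alpha$ and $\beta$ are simply swapped. In either case one obtains the symmetric identity
\[
l(i)\,r(i)=\alpha\beta\,l(i-1)\,r(i-1).
\]
Because $x\cdot y=l\cdot r$ regardless of which of $l,r$ is larger, and because centrality of $T(i)$ puts us in the $y\le 1/2$ branch of the definition of $u_2$, we can evaluate
\[
u_2(i)-u_2(i-1)=-\log\bigl(4\,l(i)r(i)\bigr)+\log\bigl(4\,l(i-1)r(i-1)\bigr)=-\log(\alpha\beta),
\]
which is exactly the claimed value.

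The only real obstacle is keeping track of the three regimes in the update rule (segment entirely in $[0,1/2)$, entirely in $[1/2,1]$, or straddling $1/2$). Centrality of $T(i-1)$ together with $l(i-1)+t(i-1)+r(i-1)=1$ rules out the straddling branch for both $L(i-1)$ and $R(i-1)$, which is what makes the product $l r$ transform cleanly under a single factor $\alpha\beta$. A possible swap $x\leftrightarrow y$ across the step is invisible because only the symmetric product $xy$ appears in $u_2$ on the central side, so no additional case analysis is needed. Overall I expect the proof to be short — a one-line appeal to Propositions~\ref{prop::function u} and \ref{prop::u1increasing} for the inequality, followed by the explicit product computation above for the equality.
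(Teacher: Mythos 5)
Your computation for the equality half is correct and matches the paper's: when both $T(i-1)$ and $T(i)$ are central, neither segment straddles $\tfrac12$, one of $l,r$ is scaled by $\alpha$ and the other by $\beta$, so $x(i)\,y(i)=l(i)\,r(i)=\alpha\beta\,l(i-1)\,r(i-1)$ and $u_2(i)-u_2(i-1)=-\log(\alpha\beta)$.

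The inequality half, however, is circular. You deduce $u_2(i)\ge u_2(i-1)$ by passing to the envelope $u$ and invoking Proposition~\ref{prop::function u}, but the paper's proof of Proposition~\ref{prop::function u} itself invokes Proposition~\ref{prop::u2increasing} in exactly the regime $u_1(i-1)\ge u_1'$ that you need: it says there ``$u_2(i)\ge u_2(i-1)$ by Proposition~\ref{prop::u2increasing}.'' So your argument for the first claim assumes the conclusion. The dependency also appears in the paper's flowchart, where Proposition~\ref{prop::function u} has an incoming arrow from Proposition~\ref{prop::u2increasing}. To repair this you must argue $u_2(i)-u_2(i-1)\ge 0$ directly. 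The paper does this by a case split on the centrality of $T(i-1)$ and $T(i)$: (i) both central gives the clean $-\log(\alpha\beta)\ge 0$; (ii) $T(i-1)$ central, $T(i)$ not, which by Propositions~\ref{prop::properties of functions}.\ref{it:: u1 is large and cross, then is central} and~\ref{prop::trivial stuff}.\ref{it::central to non-central without cross implies error} forces no crossing and an error, after which a one-variable estimate using $\tfrac12\le y(i)\le\tfrac\beta2$ finishes; (iii) $T(i-1)$ not central with a crossing, so $T(i)$ is central, and a quadratic estimate in $r(i-1)$ closes it; (iv) $T(i-1)$ not central with no crossing, where $\bigl(1-y(i)\bigr)/x(i)=\bigl(1-y(i-1)\bigr)/x(i-1)$ and the elementary bound $1/\bigl(4x(i)y(i)\bigr)\ge\bigl(1-y(i)\bigr)/x(i)$ give the conclusion. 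Your equality computation can be kept verbatim for case (i), but cases (ii)--(iv) need to be supplied.
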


\begin{proposition}\label{prop::function v1}
    Suppose $u_1(i-1) < u_1''$. If there is a crossing, then $v_1(i) - v_1(i-1) > 0$. Otherwise,
    \begin{equation*}
        v_1(i) - v_1(i-1) \geq \begin{cases}
            \log \beta, \quad \text{if the transmission is correct},\\
            \log \alpha, \quad \text{if the transmission is incorrect}.
        \end{cases}
    \end{equation*}
\end{proposition}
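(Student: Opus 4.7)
The plan is to translate $v_1(i)-v_1(i-1)=\log(t(i)/t(i-1))$ into multiplicative estimates on $t(i)/t(i-1)$, dispose of the crossing case by citing Proposition~\ref{prop::trivial stuff}, and for the no-crossing case split on whether the true segment straddles $1/2$. The hypothesis $u_1(i-1)<u_1''$ will enter exactly at the point of ruling out straddling in the correct-transmission branch. The crossing case is immediate: by item~\ref{it::cross implies correct transmission} of Proposition~\ref{prop::trivial stuff} a crossing forces a correct transmission, and by item~\ref{it::correct transmission increases true segment} a correct transmission gives $t(i)>t(i-1)$, whence $v_1(i)-v_1(i-1)>0$. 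It remains to treat the no-crossing case; by symmetry I may assume $l(i-1)\le r(i-1)$, so that $x(i-1)=l(i-1)$, and since $l(i-1)+r(i-1)\le 1$ this forces $l(i-1)\le 1/2$ and the center of $T(i-1)$ lies in $[0,1/2)$, so Alice transmits $c_i=0$.

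For the correct/no-crossing branch I distinguish whether $T(i-1)$ straddles $1/2$. If $r(i-1)\ge 1/2$, then $T(i-1)\subseteq[0,1/2)$ is entirely enlarged by $\beta$ upon reception of $0$, giving $t(i)=\beta\,t(i-1)$. If $r(i-1)<1/2$, then $T(i-1)$ straddles $1/2$; the update formulas collapse to $l(i)=\beta l(i-1)$ and $r(i)=\alpha r(i-1)$, and absence of a crossing is $l(i)\le r(i)$, equivalently $r(i-1)\ge(\beta/\alpha)\,l(i-1)$. On the other hand $u_1(i-1)<u_1''=\log(\beta/\alpha-1)$ rewrites as $l(i-1)+t(i-1)<(\beta/\alpha)\,l(i-1)$; using $l(i-1)+t(i-1)=1-r(i-1)$ and chaining with the no-crossing inequality yields $1-r(i-1)<r(i-1)$, i.e., $r(i-1)>1/2$, contradicting straddling. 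Hence only the non-straddling subcase can occur and $v_1(i)-v_1(i-1)=\log\beta$.

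For the incorrect/no-crossing branch, still with $l(i-1)\le r(i-1)$, Alice transmits $0$ but $1$ is received, so parts of $T(i-1)$ in $[0,1/2)$ are shrunk by $\alpha$ and parts in $[1/2,1]$ enlarged by $\beta$. Decomposing $t(i-1)=t_0+t_1$ with $t_0,t_1\ge 0$ the lengths of $T(i-1)\cap[0,1/2)$ and $T(i-1)\cap[1/2,1]$ respectively, I get
\[
t(i)=\alpha t_0+\beta t_1\ge \alpha(t_0+t_1)=\alpha\,t(i-1),
\]
which gives $v_1(i)-v_1(i-1)\ge\log\alpha$.

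The only delicate step is the straddling + no-crossing subcase of the correct branch: there one would otherwise have $t(i)=\beta t_0+\alpha t_1$ with $t_1>0$, which is strictly smaller than $\beta\,t(i-1)$, so the bound $\log\beta$ genuinely requires the hypothesis. The threshold $u_1''=\log(\beta/\alpha-1)$ is calibrated precisely so that $(\beta/\alpha)l(i-1)\le r(i-1)$ combined with $1-r(i-1)<(\beta/\alpha)l(i-1)$ forces $r(i-1)>1/2$, ruling out the bad configuration. The symmetric case $l(i-1)>r(i-1)$ is identical after exchanging the roles of $0/1$ and left/right.
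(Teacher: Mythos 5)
Your proposal is correct and follows essentially the same route as the paper. The one structural difference is that the paper cites Proposition~\ref{prop::properties of functions}.\ref{it:: u1 is small and was central and correct transmission, then cross} for the fact that $u_1(i-1)<u_1''$, centrality, and a correct transmission force a crossing, and cites Proposition~\ref{prop::trivial stuff}.\ref{it::limits for t(i)} for $t(i)\geq\alpha\,t(i-1)$, whereas you re-derive both inline (the first via the chain $1-r(i-1)<(\beta/\alpha)l(i-1)\le r(i-1)$, the second via the decomposition $t(i)=\alpha t_0+\beta t_1$); the underlying computations are the same, so your version is simply more self-contained rather than a different argument.
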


\begin{proposition}\label{prop::u2 and g}

There exists a constant $c=-5\log(\alpha\beta)$ such that if both $T(i_0)$ and $T(i_1)$ are central and balanced, and $u_2(i_0)\geq c$ then
\begin{equation}\label{eq::g and u_2}
    g(i_0, i_1)\leq u_2(i_1)-u_2(i_0).
\end{equation}
\end{proposition}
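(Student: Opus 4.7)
The plan is to prove inequality (\ref{eq::g and u_2}) by summing carefully chosen step-wise estimates for $i$ ranging in $(i_0, i_1]$. Concretely, I would track the potential $\Phi(i) := u_2(i) - g(i_0, i)$, and reduce the claim to showing $\Phi(i_1) \geq \Phi(i_0)$. The step-by-step change $\Phi(i) - \Phi(i-1) = [u_2(i) - u_2(i-1)] - g(i-1, i)$ is analyzed by a case split on whether $T(i-1)$ and $T(i)$ are central and whether the transmission is correct.

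The workhorse case, when both $T(i-1)$ and $T(i)$ are central, is covered by Proposition \ref{prop::u2increasing}: $u_2(i) - u_2(i-1) = -\log(\alpha\beta)$. Since $\alpha \leq \beta$ forces $\alpha^2\beta \leq \alpha\beta^2 \leq 1$ under hypothesis (\ref{eq::key inequality}), both $\log \alpha$ and $\log \beta$ (the two possible per-step energies) are at most $-\log(\alpha\beta)$, so $\Phi$ does not decrease. Likewise, when $T(i-1)$ is non-central we have $u_1(i-1) \geq u_1'$, so Proposition \ref{prop::u2increasing} gives $u_2(i) - u_2(i-1) \geq 0$, and for incorrect transmissions $g(i-1, i) = \log \alpha \leq 0$, yielding $\Phi(i) \geq \Phi(i-1)$.

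The remaining steps are correct transmissions involving transitions between the central and non-central regimes (or happening entirely in the non-central regime). Using item \ref{it::central to non-central without cross implies error} of Proposition \ref{prop::trivial stuff} (a transition from central to non-central without a crossing forces an incorrect transmission, already handled) together with items \ref{it:: u1 is large and cross, then is central} and \ref{it:: u1 is large and cross, then was central} of Proposition \ref{prop::properties of functions} (large $u_1$ plus a crossing forces the configuration to be central), I would argue that such "problematic" steps can occur only inside short \emph{excursions} where the trajectory leaves the central region and returns. Each excursion contributes a bounded deficit to $\Phi$, and since $T(i_0)$ and $T(i_1)$ are both central balanced, only excursions adjacent to these endpoints need to be considered.

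The main obstacle is quantifying the per-excursion deficit tightly enough to yield the specific constant $c = -5\log(\alpha\beta)$. Here one must track how $u_2$ behaves when $y(i)$ crosses the threshold $\tfrac{1}{2}$ (where its defining formula switches branches) and when $x, y$ are reassigned to $l, r$ across a crossing; the condition that $T(i_0)$ is balanced together with $u_2(i_0) \geq c$ guarantees that $l(i_0), r(i_0)$ are so small that at most a constant number (bounded by five) of "bad" steps can occur around each endpoint before the trajectory is forced back into a regime in which Proposition \ref{prop::u2increasing} applies with non-negative increments. Summing the contributions yields $\Phi(i_1) \geq \Phi(i_0)$, which is the desired inequality.
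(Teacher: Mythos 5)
Your proposal sets up a step-by-step potential argument with $\Phi(i) := u_2(i) - g(i_0,i)$, but this framework hits a wall exactly at the step type you flag as "problematic": when $T(i-1)$ and $T(i)$ are both non-central and the transmission is correct. By the analysis underlying Proposition~\ref{prop::u2increasing}, in that regime $u_2$ is \emph{flat} (both $1-y$ and $x$ scale by the same factor $\beta$), while $g$ gains $\log\beta>0$, so $\Phi$ drops by $\log\beta$ each such step. There is no uniform constant bound on the total number of such steps between $i_0$ and $i_1$: an excursion above the line $y=\tfrac12$ can contain arbitrarily many correct and incorrect transmissions as long as they roughly balance, and your claim that the threshold $c=-5\log(\alpha\beta)$ forces "at most five bad steps around each endpoint" has no basis. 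The constant $5$ has nothing to do with counting bad steps; in the paper it arises purely to guarantee $v_2(i_0)>\log\beta$ (hence $y(i_0)<1/(2\beta)$ and $u_1(i_0)>u_1'$) via the balanced inequality of Proposition~\ref{prop::properties of functions}.\ref{it:: is central and balanced} combined with~\eqref{eq::key inequality}.

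The idea your proof is missing, and that the paper relies on, is a \emph{global} algebraic identity that replaces step-by-step bookkeeping on the crossing-free stretches. After splitting $[i_0,i_1)$ at the crossing moments and using Propositions~\ref{prop::properties of functions}.\ref{it:: cross, then balanced} and \ref{prop::properties of functions}.\ref{it:: u2 is large and cross, then is central} to ensure every segment endpoint is again central and balanced, the paper observes that on a crossing-free interval with central endpoints the minimum segment never changes identity, so $g(i_0,i_1)=\log\bigl(x(i_1)/x(i_0)\bigr)$ \emph{exactly}, and hence $\Delta u_2 = \Delta v_2 - g(i_0,i_1)$ exactly, regardless of what happens non-centrally in between. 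From there the bounds $v_2(i_1)\le u_2(i_1)/2$ (Proposition~\ref{prop::properties of functions}.\ref{it::is central then}) and $v_2(i_0)> u_2(i_0)/2 - \tfrac12\log(\beta/\alpha)$ (Proposition~\ref{prop::properties of functions}.\ref{it:: is central and balanced}) plus the lower bound $\Delta u_2\ge -\log(\alpha\beta)$ (from Proposition~\ref{prop::u2increasing}) close the argument using $\alpha\beta^2\le1$. The single-step crossing segments are handled separately and trivially. Your decomposition, bounded-deficit heuristic, and the way you attribute the constant $c$ all have to be replaced; I'd recommend adopting the crossing/no-crossing partition together with the exact identity $g=\Delta v_2 - \Delta u_2$.
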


The proof of lemmas and proposition are given in Section~\ref{sec::proofs of lemmas} and~\ref{sec::proofs of propositions}, respectively. At last, we provide a flowchart in Figure~\ref{fig:flowchart} to make it easier to understand the scheme of the proof.

\begin{figure}[h!]
    \centering
    \resizebox{0.7\textwidth}{!}{
    	\begin{tikzpicture}[auto, inner sep=0,
		block_eq/.style ={rectangle, draw=black, thick, fill=light-gray,
			text width=5em, text centered, minimum height=3em, inner sep=6pt,text opacity = 1},
		block_prop4/.style ={rectangle, draw=black, thick, fill=dark-gray,
			text width=5em, text centered, minimum height=3em, inner sep=6pt},
		block_prop5/.style ={rectangle, draw=black, thick, fill=white,
			text width=5em, text centered, minimum height=3em, inner sep=6pt,text opacity = 1},
		block_no/.style ={rectangle, draw=black, thick, fill=white,
			text width=5em, text centered, minimum height=3em, inner sep=6pt},
		block_lem/.style ={rectangle, draw=black, thick, fill=white,
			text width=5em, text centered, minimum height=3em, inner sep=6pt,text opacity = 1},
		block_thm/.style ={rectangle, draw=black, thick, fill=white,
			text width=5em, text centered, minimum height=3em, inner sep=6pt},
		block_red/.style ={rectangle, draw=black, thick, fill=light-gray,
			text width=1.5em, text centered, minimum height=0.5em, inner sep=0pt},
		block_blue/.style ={rectangle, draw=black, thick, fill=dark-gray,
			text width=1.5em, text centered, minimum height=0.5em, inner sep=0pt},
		block_green/.style ={rectangle, draw=black, thick, fill=white,
			text width=1.5em, text centered, minimum height=0.5em, inner sep=0pt},
		]
		
		\def\xdist{3};
		\def\ynode{1};
		%\begin{scope}[shift={(-10,0)}]
		
		% First row
		\node[block_no] (prop6) at (0,0) {Prop.~\ref{prop::u1increasing} };
		\node[block_red,xshift=-2em,yshift=-1em] at (prop6) {};
		\node[block_blue,xshift=0em,yshift=-1em] at (prop6) {};
		\node[block_green,xshift=2em,yshift=-1em] at (prop6) {};
		
		\node[block_no] (prop7) at (0,-3*\ynode) {Prop.~\ref{prop::u2increasing} };
		\node[block_blue,xshift=0em,yshift=-1em] at (prop7) {};

		% Second Row
		
		\node[block_no] (prop2) at (1.2*\xdist,-\ynode) {Prop.~\ref{prop::function u} };
		\node[block_red,xshift=-2em,yshift=-1em] at (prop2) {};
		\node[block_blue,xshift=0em,yshift=-1em] at (prop2) {};
		\node[block_green,xshift=2em,yshift=-1em] at (prop2) {};
		
		\node[block_no] (prop9) at (1.2*\xdist,-3*\ynode) {Prop.~\ref{prop::u2 and g} };
		\node[block_red,xshift=-2em,yshift=-1em] at (prop9) {};
		\node[block_blue,xshift=0em,yshift=-1em] at (prop9) {};
		\node[block_green,xshift=2em,yshift=-1em] at (prop9) {};

		\node[block_no] (prop8) at (1.2*\xdist,-5*\ynode) {Prop.~\ref{prop::function v1} };
		\node[block_blue,xshift=0em,yshift=-1em] at (prop8) {};
		\node[block_green,xshift=2em,yshift=-1em] at (prop8) {};
		
		\node[block_no] (prop3) at (1.2*\xdist,-7*\ynode) {Prop.~\ref{prop::function v2} };
		\node[block_blue,xshift=0em,yshift=-1em] at (prop3) {};

		% Third Row
		\node[block_no] (prop1) at (2.5*\xdist,-5.2*\ynode) {Prop.~\ref{prop::function v} };
		\node[block_blue,xshift=0em,yshift=-1em] at (prop1) {};
		\node[block_green,xshift=2em,yshift=-1em] at (prop1) {};

		%0th Row
		\node[block_eq] (ineq3) at (-0*\xdist,-5.5*\ynode) {Eq.~\eqref{eq::key inequality} };
		\node[block_prop4] (prop4) at (-0*\xdist,-6.75*\ynode) {Prop.~\ref{prop::trivial stuff} };
		\node[block_prop5] (prop5) at (-0*\xdist,-8*\ynode) {Prop.~\ref{prop::properties of functions} };
		
		%4th Row
		\node[block_lem] (lem5) at (4*\xdist,-3*\ynode) {Lemma~\ref{lem::moment with big true segment}};
		\node[block_lem] (lem6) at (4*\xdist,-5*\ynode) {Lemma~\ref{lem::key lemma}};
		\node[block_red,xshift=-2em,yshift=-1em] at (lem6) {};
		\node[block_blue,xshift=0em,yshift=-1em] at (lem6) {};
		\node[block_green,xshift=2em,yshift=-1em] at (lem6) {};
		
		%5th row
		\node[block_thm] (thm7) at (5*\xdist,-4*\ynode) {Theorem~\ref{th::main}};

		%Arrows One
		\draw[->,thick] ($(prop6.south east)!0.75!(prop6.north east)$) -- +(4.2,0) |- ($(prop1.south west)!0.75!(prop1.north west)$);
		
		\draw[->,thick] ($(prop7.south east)!0.75!(prop7.north east)$) -- +(0.2,0) |- ($(prop2.south west)!0.25!(prop2.north west)$);
		
		\draw[->,thick] ($(prop7.south east)!0.25!(prop7.north east)$) -- ($(prop9.south west)!0.25!(prop9.north west)$);
		
		\draw[->,thick] ($(prop6.south east)!0.5!(prop6.north east)$) -- +(0.7,0) |- ($(prop2.south west)!0.75!(prop2.north west)$);
		
		\draw[->,thick] ($(prop6.south east)!0.25!(prop6.north east)$) -- +(0.5,0) |- ($(prop9.south west)!0.75!(prop9.north west)$);

		\draw[->,thick] (prop8.east) -- +(0.3,0) |- ($(prop1.south west)!0.5!(prop1.north west)$);
		
		\draw[->,thick] ($(prop3.south east)!0.75!(prop3.north east)$) -- +(0.8,0) |- ($(prop1.south west)!0.25!(prop1.north west)$);
		
		\draw[->,thick] (prop2.east) -- +(5,0) |- ($(lem6.south west)!0.8!(lem6.north west)$);
		
		\draw[->,thick] (prop9.east) -- +(4.25,0) |- ($(lem6.south west)!0.6!(lem6.north west)$);
		
		\draw[->,thick] (prop1.east) -- +(0.6,0) |- ($(lem6.south west)!0.4!(lem6.north west)$);

		%\draw[->,thick] ($(prop8.south east)!0.25!(prop8.north east)$) -- +(2,0) |- ($(lem6.south west)!0.4!(lem6.north west)$);

		\draw[->,thick] ($(prop3.south east)!0.25!(prop3.north east)$) -- +(4.5,0) |- ($(lem6.south west)!0.20!(lem6.north west)$);

		\draw[->,thick] (lem5.east) -- +(0.4,0) |- ($(thm7.south west)!0.75!(thm7.north west)$);
		
		\draw[->,thick] (lem6.east) -- +(0.4,0) |- ($(thm7.south west)!0.25!(thm7.north west)$);

		%\end{scope}

		\end{tikzpicture}
    }
    \caption{Illustration of the proof strategy for Theorem~\ref{th::main}. For the inequality~\eqref{eq::key inequality}, Proposition~\ref{prop::trivial stuff}, or Proposition~\ref{prop::properties of functions} the arrows are omitted due to the sake of clarity. The small indicator boxes however show where they are used.}
    \label{fig:flowchart}
\end{figure}
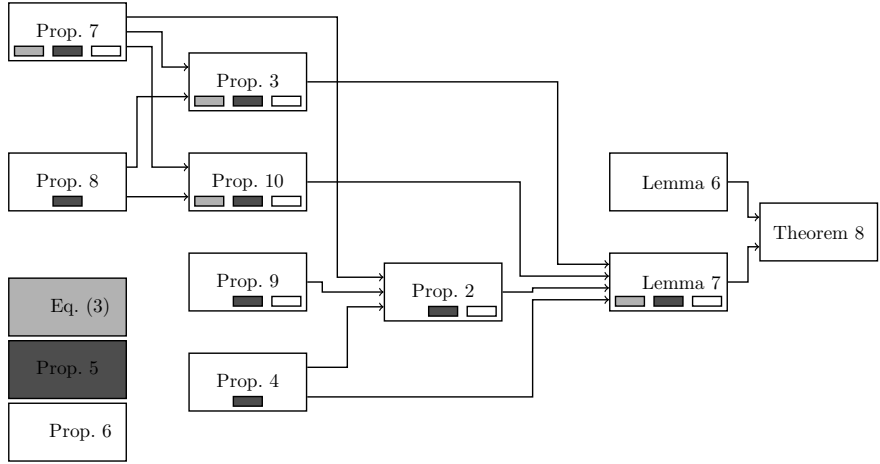

In prior of the technical proofs of the propositions, we provide in Example~\ref{examp:trajectory} an illustration for an example transmission, where we track the length of the true segment $T(i)$ and its position.

\begin{figure}[h!]
    \centering
    \resizebox{.7\textwidth}{!}{
    \input{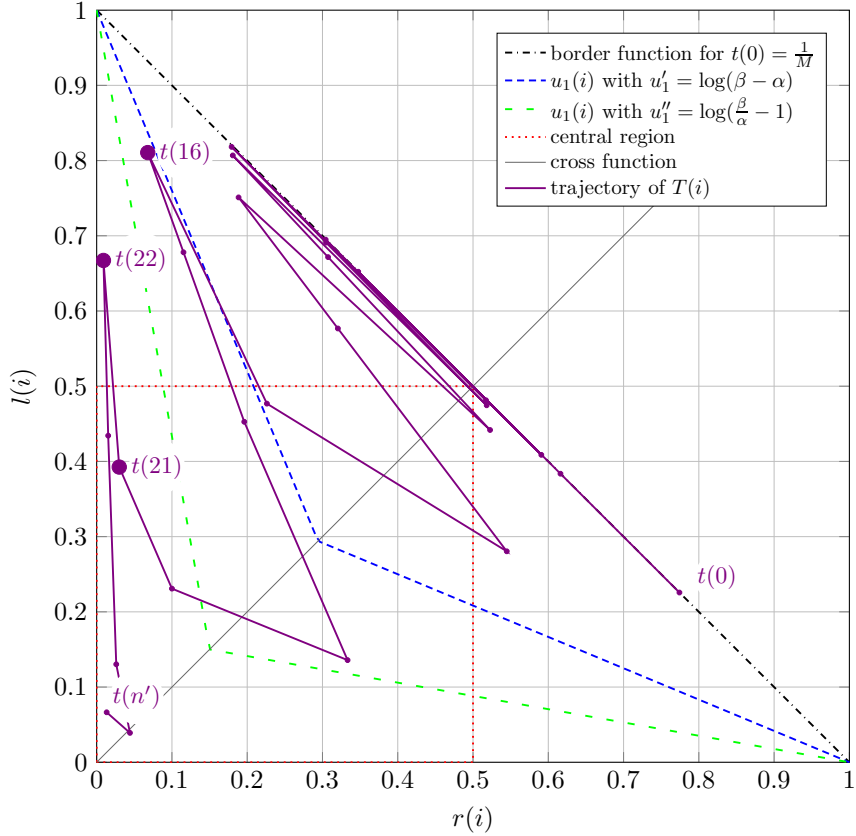}
    }
    \caption{Illustration for a trajectory of $t(i)=1-l(i)-r(i)$. We have $M = 2^{14}$ and transmit a random message, where we use $n^\prime = 27$ total transmissions. Errors occur at time steps 16, 21, and 22. Used parameters are $\tau = 0.15$, $\alpha = 2 \tau$, and $\beta = 2- \alpha$.}
    \label{fig:trajectory}
\end{figure}

\begin{example}\label{examp:trajectory}
We illustrate in Figure~\ref{fig:trajectory} an example transmission of a random message $m$ out of $M=2^{14}$ for $n'=27$ total transmission. We use that $\tau = 0.15$, $\alpha = 2\tau$ and $\beta = 2- \alpha$. Recall that $t(i) = 1-l(i)-r(i)$. We have three bit-flip errors at the time steps $i\in\{16,21,22\}$. Since at the first transmission all segments are the same, thus the position of the true segment $T(0)$ described by the pair $(l(0),r(0))$ lies close to the line $l(i)+r(i)=1$. For more and more bit transmissions, we see that the trajectory of $T(i)$ is getting closer to the origin, meaning that its length is increasing. We have that $T(i)$ is central at time step $i=15$ after multiple crosses before. Since at time step $i=16$ an error occurred, the trajectory leaves the central region. Furthermore, at the same time step, the trajectory crosscuts the level set $u_1^\prime$ and does not cross it again in later time steps (c.f. Proposition~\ref{prop::u1increasing}). The same behavior we observe if the level set $u_1^{\prime \prime}$ is crosscut although we have errors at time steps $i=21,22$, which makes $T(i)$ not central anymore (c.f. Proposition~\ref{prop::u1increasing}). For the remaining correct transmissions, we see that $t(i)$ is always increasing until we reach the last $n'$ transmission.
\end{example}

\subsection{Proofs of lemmas}\label{sec::proofs of lemmas}

\begin{proof}[Proof of Lemma~\ref{lem::moment with big true segment}]
We choose $k \in \mathbb{N}$ such that
$k > \frac{\log \beta}{\varepsilon_2}$ for $\varepsilon_1>\varepsilon_2 > 0$. Let us fix $\delta < \frac{1}{\beta ^2}$ which corresponds to the smallest displacement of the position $\frac{1}{2}$ after $1,2,\ldots,k-1$ transmissions. For instance, after transmitting one symbol the point $\frac{1}{2}$ can be mapped to one of the points $\{\frac{\alpha}{2},\frac{\beta}{2}\}$, whereas after two transmissions the image of the point $\frac{1}{2}$ is one of the points $\{\frac{\alpha^2}{2}, \frac{\alpha\beta}{2},1-\alpha(1-\frac{\beta}{2}), 1-\beta(1-\frac{\beta}{2})\}$. One can prove that $\delta > 0$ for all but finite $\alpha \in (0,1)$. Remark that if $\alpha$ is a non-algebraic number, then $\delta>0$ for any $k$.

Assume on the contrary that $t(i) < \delta$ for $i \in \{0,1,\ldots,n\}$. We have by the definition of $\delta$ that if $T(i)$ is central, i.e., the point $\frac{1}{2}$ is in $T(i)$, then $T(i+1),\ldots,T(i+k-1)$ are not central. Let us denote as $n^\prime$ the number of moments when $T(i)$ is central. By the definition we know that $n^\prime \leq \frac{n}{k}+1 < \frac{\varepsilon_1 n }{\log \beta} + 1$. Therefore, we conclude
$$
t(n) \geq t(0) \alpha ^{e(0,n)} \beta ^{f(0,n)-n^\prime} \geq t(0) \alpha ^{e(0,n)} \beta ^{f(0,n) - \frac{\varepsilon_1 n}{\log \beta} - 1} \geq \frac{1}{\beta} > \delta,
$$
where we used the fact that $g(0,n)=e(0,n)\log \alpha + f(0,n)\log\beta > \epsilon_1 n - \log t(0)$. Hence, we come to a contradiction. 

Let $n_1$ be a first moment when $t(n_1)\ge \delta$. Applying the same strategy we can see that
$$
\beta \delta > \beta t(n_1 - 1) \geq t(n_1) \geq t(0) \alpha ^{e(0,n_1)} \beta ^{f(0,n_1)-\frac{\varepsilon_2 n_1 }{\log \beta} - 1}.
$$
From the two aforementioned observations it follows that
$$
g(0,n_1)=e(0,n_1)\log \alpha + f(0,n_1) \log \beta < \varepsilon_2 n_1 - \log t(0) + \log( \beta ^2 \delta) < \varepsilon_2 n - \log t(0) .
$$
Hence
$$
g(n_1,n)=g(0,n)-g(0,n_1)>(\epsilon_1-\epsilon_2)n>0.
$$
\end{proof}

\begin{proof}[Proof of Lemma~\ref{lem::key lemma}]
    Set $I:= \{i\in (n_1,n] : T (i-1)$ is central and there is a crossing between moments $i - 1$ and $i\}$. Let $m :=
    |I|$, and take $c  \gg m' \gg v_2' \gg 1$. One way to think about the choices of the parameters is to let
    $c = x^3, m' = x^2 , v_2'= x$, where $x$ is sufficiently large relative to $\alpha,\beta,t(n_1)$.\\
    \textbf{Case} $m\le m'$. From Proposition~\ref{prop::trivial stuff}.\ref{it::v2>v1} and the definition of $v$ it follows that $v(n_1)\ge v_1(n_1)= \log (2 t(n_1))$. Thus, by Propositions~\ref{prop::function v}, we have
    \begin{align*}
    v(n)&\ge v(n_1)+e(n_1,n)\log\alpha + (f(n_1,n)-m) \log \beta - m \log \beta 
    \\
    &\ge \log (2t(n_1)) + g(n_1,n) - 2m' \log \beta 
    \\
    &> \log (2t(n_1)) + c - 2m' \log \beta
    \gg 1.
    \end{align*}
    Here we used our choice of $c$ and $m'$ and the fact that $g(n_1,n)=e(n_1,n)\log\alpha +f(n_1,n)\log \beta$. From the definition of function $v$ and the fact that $v(n)\gg 1$
    we conclude that $T(n)$ is central (see Fig.~\ref{fig:levelsets-v-func}).\\
    \textbf{Case} $m>m'$. Let $n_2$ be the $m'$th smallest element of $I$. From Proposition~\ref{prop::trivial stuff}.\ref{it::u2>u1} and the definition of function $u$ it follows that  $u_2(n_2)\ge u(n_2)$, and $u(n_1)\ge u_1(n_1)=\log (t(n_1)/x(n_1)))\ge \log (2 t(n_1))$ as $x(n_1)\le 1/2$. Combining this together with Proposition~\ref{prop::function u} we get
    $$
    u_2(n_2)\ge u(n_2)\ge u(n_1) + m' \log \beta \ge \log (2t(n_1)) + m'\log \beta = \Omega(m').
    $$
    The last equality holds due to our choice of $m'$.
Thus, by Proposition~\ref{prop::properties of functions}.\ref{it:: u2 is large and cross, then is central}, $T(n_2)$ is central and by Proposition~\ref{prop::properties of functions}.\ref{it:: cross, then balanced}, $T(n_2)$ is balanced.
    By Propositions~\ref{prop::function v},~\ref{prop::trivial stuff}.\ref{it::v2>v1} and~\ref{prop::properties of functions}.\ref{it::is central then}, we know that
    \begin{align}
    g(n_1,n_2)&\le v(n_2)-v(n_1)+2m'\log\beta \le v_2(n_2)-v_1(n_1)+2m'\log\beta \nonumber\\
    &\le \frac{1}{2} u_2(n_2) - \log (2 t(n_1)) + 2m'\log \beta \le \frac{1}{2} u_2(n_2) + O(m').\label{eq::g(0,n1)}
    \end{align}
    Let $n_3$ be the last element of $I$. We know that $T(n_2)$ is central and balanced, and $u_2(n_2)= \Omega(m')$. The same conclusion holds for $T(n_3)$, in particular, $T(n_3)$ is central and balanced, and $u_2(n_3)= \Omega(m')$. By Proposition~\ref{prop::u2 and g}, we obtain
    \begin{equation}\label{eq::g(n1,n2)}
    g(n_2,n_3) \le u_2(n_3) - u_2(n_2).
    \end{equation}
    By Proposition~\ref{prop::properties of functions}.\ref{it:: is central and balanced}, we derive that
    $$
    v_2(n_3)\ge \frac{1}{2}u_2(n_3) - \frac{1}{2}\log(\beta/\alpha) = \Omega(m')\gg v_2'.
    $$
    If $v_2(n)\ge v_2'$, we are done. Otherwise, let $n_4> n_3$ be a first moment when $v_2(n_4)<v_2'$. Since $v_2(i)>0$ for all $i\in[n_3,n_4)$, from the definition of the function $v_2$ it follows that $T(i)$ is central for all $i\in[n_3,n_4)$. By the definition of $I$, there are no crossings between $n_3$ and $n_4$. By Propositions~\ref{prop::function v2} and~\ref{prop::properties of functions}.\ref{it:: is central and balanced} and the fact $\alpha \beta < 1$, we obtain
    \begin{align}
    g(n_3,n_4)&=e(n_3,n_4)\log\alpha + f(n_3,n_4)\log\beta\le 2(-e(n_3,n_4)+f(n_3,n_4))\log \beta \nonumber \\
    &\le 2 v_2(n_4)- 2 v_2(n_3) < -u_2(n_3) + O(v_2'). \label{eq::g(n2,n3)}
    \end{align}
    By the definition of $I$, there is no $i\in(n_4,n]$ such that $T(i-1)$ is central and there is a crossing between moments $i-1$ and $i$. By Proposition~\ref{prop::function v}, we have
    \begin{equation}\label{eq::g(n3,n)}
    g(n_4,n)\le v(n) - v(n_4).
    \end{equation}
    Since there is no crossing between $n_4-1$ and $n_4$, by Proposition~\ref{prop::function v2}, we have $v_2(n_4)\ge v_2(n_4-1)-\log \beta \ge v_2' - \log \beta \gg 1$. By the definition of $v_2(n_4)$, we know that $y(n_4)\ll 1$ and so $v_1(n_4)> 0$, hence $v(n_4)\ge 0$. Adding up~\eqref{eq::g(0,n1)}-\eqref{eq::g(n3,n)} and the fact that $v(n_4)\ge 0$ yield
    $$
    v(n)\ge g(n_1,n)-O(m')-O(v_2')\gg 1,
    $$
    which implies that $T(n)$ is central.
\end{proof}

\subsection{Proofs of propositions}\label{sec::proofs of propositions}

\begin{proof}[Proof of Proposition~\ref{prop::trivial stuff}]
   
    \ref{it::cross implies correct transmission}: If there is a crossing, then the smaller segment among $\{L(i), R(i)\}$ has been increased. It happens only when the transmission is correct.\\
    \ref{it::correct transmission increases true segment}: We only proof the assertion for correct transmission because the case for incorrect transmission is proved analogously.
    In case that $T(i-1)$ is non-central the statement follows because $t(i) = t(i-1) \beta$.
    In case $T(i-1)$ is central we have that
    \begin{equation*}
        t(i) = 1-\beta x(i-1) - \alpha y(i-1)
    \end{equation*}
    and therefore
    \begin{equation*}
        t(i) - t(i-1) = (1-\beta) x(i-1) + (1-\alpha)y(i-1) = \frac{\beta - \alpha}{2} (y(i-1) - x(i-1))>0,
    \end{equation*}
    where we have used that $\alpha + \beta = 2$.\\
    \ref{it::central to non-central without cross implies error}: We assume that the transmission is correct and there is no crossing. Then since $T(i-1)$ is central it holds that
    \begin{align}\label{eq:central_to_non_central}
        x(i-1) \beta &= x(i),\nonumber\\
        y(i-1) \alpha &= y(i).
    \end{align}
    In order for $T(i)$ to be non central it is necessary that $y(i)>\frac{1}{2}$ but this leads to a contradiction due to \eqref{eq:central_to_non_central} and because $\alpha<1$ and $y(i-1)<\frac{1}{2}$.\\
    \ref{it::limits for t(i)}: If $T(i-1)$ is not central its length is multiplied by $\beta$ for a correct transmission and by $\alpha$ for an erroneous one. Because the length of all subsegments combined remains one during the entire process, the length of the true segment is multiplied by a value in $[\alpha,\beta]$ if $T(i-1)$ is central, completing the proof.\\
    \ref{it::u2>u1}: Recall that
    \begin{equation*}
        u_2(i):= 
        \begin{cases}
            -\log (4x(i)y(i)),\quad &\text{if }y(i)\le \frac{1}{2},\\
            \log((1-y(i))/x(i)),\quad &\text{if }y(i)> \frac{1}{2}.
        \end{cases}
    \end{equation*}
    We claim that
    \begin{equation*}
        -\log(4x(i)y(i)) \geq \log((1-y(i))/x(i))  .
    \end{equation*}
    The following statements are equivalent.
    \begin{align*}
        \frac{1}{4x(i)y(i)} &\geq \frac{1-y(i)}{x(i)}, \nonumber\\
        1 &\geq 4y(i)(1-y(i)).
    \end{align*}
    The last inequality follows because its right hand side is maximized for $y(i)=\frac{1}{2}$. This maximal value is equal to one and our claim follows.
    It follows that $u_2(i) \geq u_1(i)=\log(t(i)/x(i))$ because
    \begin{equation*}
        \frac{1-y(i)}{x(i)} \geq \frac{t(i)}{x(i)}  = 2^{u_1(i)}  .
    \end{equation*}
    \ref{it::v2>v1}: Recall that
    \begin{equation*}
        v_2(i):=
        \begin{cases}
            -\log (2y(i)),\quad &\text{if }y(i)\le \frac{1}{2} ,\\
            \log (2(1-y(i))),\quad &\text{if }y(i)> \frac{1}{2}  .
        \end{cases}
    \end{equation*}
    Similar to the proof of Proposition~\ref{prop::trivial stuff}.\ref{it::u2>u1} we show first that
    \begin{equation}\label{ineq:v2>v1}
        -\log(2y(i)) \geq \log(2(1-y(i)))  .
    \end{equation}
    The following statements are equivalent.
    \begin{align*}
        \frac{1}{2y(i)} &\geq 2(1-y(i)),\\
        1 &\geq 4y(i) (1-y(i)),
    \end{align*}
    where the last inequality follows again because its right hand side is maximized for $y(i) = \frac{1}{2}$. This maximal value is again equal to one and we have shown inequality~\eqref{ineq:v2>v1}.    It follows that $v_2(i) \geq v_1(i)=\log(2t(i))$ because
    \begin{equation*}
        \log(2(1-y(i)) \geq \log(2(1-x(i)-y(i))) = v_1(i)  .
    \end{equation*}
\end{proof}

\begin{proof}[Proof of Proposition~\ref{prop::properties of functions}] \ref{it::is central then}: Since $T(i)$ is central, we have $y(i)\le \frac{1}{2}$. Thus,
	$$
	u_2(i)=-\log (4x(i)y(i)) \ge -\log (4y^2(i)) =2 v_2(i).
	$$
	\ref{it:: is central and balanced}: Since $T(i)$ is central and balanced, we have $y(i)\le \frac{1}{2}$ and $y(i)/x(i)\le \beta / \alpha$. Hence,
	\begin{align*}
	u_2(i) &= - \log (4x(i)y(i)) \\
	&= - \log (4y^2(i)x(i)/y(i))\\
	&= - \log (4y^2(i))
 + \log(y(i)/x(i)) 	\\
 &\le 2v_2(i) + \log(\beta/\alpha).
\end{align*}
\ref{it:: cross, then balanced}: Without loss of generality, assume that $l(i-1) \le \frac{1}{2} \le r(i-1)$ and $l(i) > r(i)$. From Proposition~\ref{prop::trivial stuff}.\ref{it::cross implies correct transmission} it follows that $l(i)=\beta l(i-1)$ and $r(i)\ge \alpha r(i-1)$. Then 
$$
\frac{y(i)}{x(i)} = \frac{l(i)}{r(i)}  \le \frac{\beta l(i-1)}{\alpha r(i-1)} \le \frac{\beta r(i-1)}{\alpha r(i-1)} = \frac{\beta}{\alpha}.
$$
This means that $T(i)$ is balanced.
Similarly, we check that $T(i-1)$ is balanced
$$
\frac{y(i-1)}{x(i-1)} = \frac{r(i-1)}{l(i-1)} \le \frac{r(i)/\alpha}{l(i)/\beta} < \frac{\beta l(i) }{\alpha l(i)}=\frac{\beta}{\alpha}.
$$
\ref{it:: u1 is small and was central and correct transmission, then cross}: First we note that $u_1'<u_1''$. Without loss of generality, assume that $l(i-1) < r(i-1)\le \frac{1}{2}$. 
From the condition $u_1(i-1)\le u_1''$ it follows that
 $$
\frac{t(i-1)}{x(i-1)}= \frac{1-l(i-1)-r(i-1)}{l(i-1)}= \frac{1-r(i-1)}{l(i-1)} -1\le \frac{\beta}{\alpha}-1.
 $$
This implies $\beta l(i-1) +\alpha r(i-1) \ge \alpha$ or $\beta l(i-1)\ge \alpha - \alpha r(i-1)$. Since the transmission is correct, we have $l(i) = \beta l(i-1)$. Thus, $l(i)\ge \alpha - \alpha r(i-1)$. Since $r(i-1)\le \frac{1}{2}$, we have $l(i)> \alpha /2 > \alpha r(i-1) = r(i)$. This proves that there is a crossing between moment $i-1$ and moment $i$.\\
\ref{it:: u1 is large and cross, then is central}: Without loss of generality, we assume that $l(i-1)\le r(i-1)$. From the condition $u_1(i-1)\ge u_1'$, we derive
$$
\frac{1-2l(i-1)}{l(i-1)}\ge \frac{1-l(i-1)-r(i-1)}{l(i-1)}=\frac{t(i-1)}{x(i-1)}\ge 2^{u_1'}=\beta-\alpha=2\beta-2
$$
and $l(i-1)\le 1/(2\beta)$. Since there is a crossing, we have $\beta l(i-1)=l(i)\ge r(i)$ and hence $y(i) = l(i)\le \frac{1}{2}$.
\\
\ref{it:: u1 is large and cross, then was central}: Without loss of generality, assume that $l(i-1) < r(i-1)$ and $l(i)\ge r(i)$. First we note that a crossing may happen only when the transmission is correct. From condition $u_1(i-1)\ge u_1''$ we derive that
$$
\frac{t(i-1)}{x(i-1)}= \frac{1-l(i-1)-r(i-1)}{l(i-1)}= \frac{1-r(i-1)}{l(i-1)} -1\ge \frac{\beta}{\alpha}-1.
$$
This yields that $\beta l(i-1) +\alpha r(i-1) \le \alpha$. Since the transmission is correct, we have $l(i)=\beta l(i-1)$, and, thus, $l(i) + \alpha r(i-1) \le \alpha $ or $r(i-1)\le 1- l(i)/\alpha$. Because of the property $l(i)\ge r(i)$, we get $r(i-1)\le 1-r(i)/\alpha$. Since $r(i)\ge r(i-1)\alpha$, we obtain $r(i-1)\le 1 - r(i-1)$, which is equivalent to that $T(i-1)$ is central. To show that $T(i)$ is central, observe that $u_1(i-1)\ge u_1''\ge u_1'$ and we can use Proposition~\ref{prop::properties of functions}.\ref{it:: u1 is large and cross, then is central}.
\\
\ref{it:: u2 is large and cross, then is central}:  Suppose there is either a crossing between moment $i-1$ and moment $i$ or  a crossing between moment $i$ and moment $i+1$. Without loss of generality, assume that $l(i)\ge r(i)$ and either $l(i-1) < r(i-1)$ or $l(i+1) < r(i+1)$ .  Toward a contradiction, assume that $T(i)$ is not central and, thus, $y(i)=l(i)> \frac{1}{2}$. The condition $u_2(i)\ge u_2'$ is equivalent to that
$$
\frac{1-y(i)}{x(i)}=\frac{1-l(i)}{r(i)}\ge \frac{\beta}{\alpha}.
$$
This implies $\alpha l(i)+\beta r(i)\le \alpha$. Since $l(i)> \frac{1}{2}$, we have either $r(i)\ge \alpha r(i-1)>\alpha l(i-1) = \alpha l(i)/\beta > \alpha/(2\beta)$ or $\beta r(i) = r(i+1) > l(i+1) \ge l(i) \alpha > \alpha/2$. Thus, we obtain $\alpha l(i) + \beta r(i) > \alpha /2 + \alpha /2 = \alpha$.  This contradicts our assumption and, hence, $T(i)$ is central.
\end{proof}

\begin{proof}[Proof of Proposition~\ref{prop::u1increasing}]
    We consider first the part of the proposition where the condition $u_1(i-1) < u_1'$ holds. There are several cases that need to be distinguished here. We first consider the case when an error occurred. Then $t(i) \ge  \alpha t(i-1)$ (Proposition~\ref{prop::trivial stuff}.\ref{it::limits for t(i)}) and $x(i)=\alpha x(i-1)$ and thus, $u_1(i) = \log (t(i)/x(i)) \ge \log (t(i-1)/x(i-1))=u_1(i-1)$. Next we proceed to the case that the transmission was correct. If there is no crossing, then by Proposition \ref{prop::properties of functions}.\ref{it:: u1 is small and was central and correct transmission, then cross}  we have that $T(i-1)$ has to be non-central and, thus, $t(i)=\beta t(i-1)$ and $x(i)=\beta x(i-1)$, which implies $u_1(i)=u_1(i-1)$.
    %Next we consider the case that the transmission was correct but there is no cross.
    %For the correct reception, by Proposition \ref{prop::properties of functions}.\ref{it:: u1 is small and was central and correct transmission, then cross}  we have that $T(i-1)$ has to be non-central and, thus, $t(i)=\beta t(i-1)$ and $x(i)=\beta x(i-1)$, which implies $u_1(i)=u_1(i-1)$.
    
    Now suppose that there is a crossing which implies that no error occurred.
    We consider the case that $T(i-1)$ is not central. Because we have a crossing we know that $x(i) \le \beta x(i-1)$ and therefore
    \begin{equation*}
        2^{u_1(i)} = \frac{t(i)}{x(i)} \geq \frac{\beta t(i-1)}{\beta x(i-1)} = \frac{t(i-1)}{x(i-1)} = 2^{u_1(i-1)}  .
    \end{equation*}
    It remains to check the case when $T(i-1)$ is central. By the statement we need to show
    \begin{equation*}
        u_1(i) - u_1(i-1) = \log\left(\frac{1-x(i-1)\beta - y(i-1)\alpha}{y(i-1)\alpha}\right) - \log\left(\frac{1-x(i-1)-y(i-1)}{x(i-1)}\right) \geq \log \beta.
    \end{equation*}
    This is equivalent to
    \begin{align*}
        (1-x(i-1)\beta - y(i-1)\alpha) x(i-1) &\geq (1-x(i-1)-y(i-1))y(i-1) \alpha \beta ,\\
        x(i-1)\frac{\beta-\alpha}{2} (y(i-1)-x(i-1)) &\geq (1-x(i-1)-y(i-1)(y(i-1)\alpha \beta - x(i-1)).
    \end{align*}
    As the left hand side of this equation is always positive, we only have to consider the case where the right hand side has this property as well.
    Since $u_1(i-1)<u_1'$ we need prove that
    \begin{align*}
        y(i-1)-x(i-1) &> 2(y(i-1)\alpha \beta - x(i-1)),\\
        x(i-1) &> y(i-1)(2 \alpha \beta - 1).
    \end{align*}
   Because there is a crossing we know from Proposition~\ref{prop::properties of functions}.\ref{it:: cross, then balanced} that
    \begin{equation*}
        x(i-1)>y(i-1)\frac{\alpha}{\beta}  .
    \end{equation*}
    
To show $2 \alpha \beta - 1 < \frac{\alpha}{\beta}$ we use the chain of equivalent inequalities
    \begin{align*}
        2 \alpha \beta - 1 &< \frac{\alpha}{\beta},\\
        2 \alpha \beta^2 - \beta &< \alpha = 2 - \beta,\\
        \alpha \beta^2 < 1,
    \end{align*}
    where the last line follows from~\eqref{eq::key inequality}.
    This completes the first part of the proposition.

    Now we assume that $u_1(i-1) \geq u_1'$ and and show that in this case $u_1(i) \geq u_1'$. We assume that the transmission was incorrect.
    Then
    \begin{equation*}
        \frac{t(i)}{x(i)} \geq \frac{\alpha t(i-1)}{\alpha x(i-1)} = \frac{t(i-1)}{x(i-1)}  .
    \end{equation*}
    
    Therefore, from now on we only consider successful transmissions.
    First, we have a look at the case that $T(i-1)$ is not central.
    Then
    \begin{equation*}
        \frac{t(i)}{x(i)} \geq \frac{t(i-1) \beta}{x(i-1) \beta} = \frac{t(i-1)}{x(i-1)},
    \end{equation*}
    where the inequality followed because $\beta x(i-1) \geq x(i)$ with equality if there is no crossing and a strict inequality if there is a crossing. Therefore, we have shown the assertion for both cases.
    
    Next we analyze the case where $T(i-1)$ is central and there is no crossing
    \begin{equation*}
        \frac{t(i)}{x(i)} \geq \frac{t(i-1)}{\beta x(i-1)} \geq \frac{\beta - \alpha}{\alpha \beta} > \beta - \alpha.
    \end{equation*}
    The first inequality follows because correct transmission can never reduce the length of the true segment. The second one follows because otherwise by Proposition~\ref{prop::properties of functions}.\ref{it:: u1 is small and was central and correct transmission, then cross} there would be a crossing and the last inequality follows because $\alpha \beta < 1$.
    
    Next we consider the case that $T(i-1)$ is central and there is a crossing.
    We observe that since there is a crossing $T(i-1)$ is balanced by Proposition~\ref{prop::properties of functions}.\ref{it:: cross, then balanced}. Therefore, $y(i-1)/x(i-1) < \beta / \alpha$ and
    
    \begin{equation}\label{eq:proof_prop_2_1}
        \frac{t(i)}{x(i)} = \frac{1-x(i-1)\beta - y(i-1)\alpha}{y(i-1)\alpha} = \frac{1-x(i-1)\beta}{y(i-1)\alpha} - 1 .
    \end{equation}
    Since $u_1(i-1)\geq u_1'$ we obtain
    \begin{align*}
        \frac{1-x(i-1)-y(i-1)}{x(i-1)} &\geq \beta - \alpha,\\
        1-x(i-1)-y(i-1) &\geq (\beta - \alpha) x(i-1),\\
        1-x(i-1) - \frac{\beta - \alpha}{2} x(i-1) - y(i-1) &\geq \frac{\beta - \alpha}{2}x(i-1),\\
        1-\beta x(i-1) - y(i-1) &\geq \frac{\beta - \alpha}{2} x(i-1).
    \end{align*}
    Therefore, we get
    \begin{equation}\label{eq:proof_prop_2_2}
        \frac{1-x(i-1)\beta}{y(i-1)\alpha} \geq \frac{\frac{\beta - \alpha}{2}x(i-1) + y(i-1)}{y(i-1)\alpha} = \frac{1}{\alpha} + \frac{x(i-1) \frac{\beta - \alpha}{2}}{y(i-1)\alpha}.
    \end{equation}
    By using the fact that $T(i-1)$ is balanced we get
    \begin{equation}\label{eq:proof_prop_2_3}
        \frac{1}{\alpha} + \frac{x(i-1) \frac{\beta - \alpha}{2}}{y(i-1)\alpha} > \frac{1}{\alpha} + \frac{\beta - \alpha}{2 \beta} = \frac{2\beta + \alpha \beta  - \alpha^2}{2\alpha \beta} = \frac{(\alpha + \beta) \beta + \alpha \beta + \alpha^2 - 2\alpha^2}{2\alpha \beta},
    \end{equation}
    which we simplify to
    \begin{align}\label{eq:proof_prop_2_4}
        \frac{(\alpha+\beta)^2 - 2\alpha^2}{2\alpha \beta} = \frac{2-\alpha^2}{\alpha \beta} = 1 + \frac{2 - 2 \alpha}{\alpha \beta} 
        = 1 + \frac{\beta - \alpha}{\alpha \beta} > \beta - \alpha + 1,
    \end{align}
    where the last inequality follows because $1>\alpha \beta$.
    Combining equation \eqref{eq:proof_prop_2_1} and the inequalities \eqref{eq:proof_prop_2_2}, \eqref{eq:proof_prop_2_3}, \eqref{eq:proof_prop_2_4} completes the proof of this part.

    Now we proof that if $u_1(i-1)\geq u_1''$, then $u_1(i)\geq u_1''$. Seeking for a contradiction, assume $u_1(i-1)\geq u_1''$ and $u_1(i)< u_1''$, i.e., $u_1$ is decreasing. The cases that $T(i-1)$ is not central or the transmission was incorrect can be handled in the same manner as in the previous part of the proof. Therefore, $T(i-1)$ is central and the transmission is correct.  The following inequalities are equivalent
    \begin{align*}
    u_1(i-1)&\geq u_1''=\log(\beta/\alpha-1) ,\\
    \frac{1-x(i-1)-y(i-1)}{x(i-1)}&\geq \beta/\alpha-1 ,\\
    \alpha &\geq \alpha y(i-1)+\beta x(i-1).
    \end{align*}
    In the same manner, we can obtain the equivalent relations
    \begin{align*}
    u_1(i)&< u_1'',  \\
    \alpha&< \alpha y(i)+\beta x(i).
    \end{align*}
    Since $T(i-1)$ is central and the transmission is correct, we conclude that $\alpha y(i-1)+\beta x(i-1)=x(i)+y(i)$. Using this equality results in a contradiction
    \begin{equation*}
        \alpha\geq \alpha y(i-1)+\beta x(i-1)=y(i)+x(i)\geq \alpha y(i)+\beta x(i)>\alpha.
    \end{equation*}
\end{proof}

\begin{proof}[Proof of Proposition~\ref{prop::u2increasing}]
    We are first considering the case that both $T(i-1)$ and $T(i)$ are central. Whether there occurs an error or not it holds that
    \begin{equation*}
        u_2(i) - u_2(i-1) = -\log (4 x(i-1)\beta y(i-1) \alpha) + \log(4 x(i-1) y(i-1)) = -\log(\alpha \beta) > 0  ,
    \end{equation*}
    since $\alpha \beta < 1$.
    %Next we are considering the case that $T(i-1)$ is central and there occurs a crossing. The case when $T(i)$ is also central has already been considered, the other case that $T(i)$ is not central cannot occur due to Proposition~\ref{prop::properties of functions}.\ref{it:: u1 is large and cross, then is central}.
    
    Now we consider the case that $T(i-1)$ is central and $T(i)$ is not central. Due to Proposition~\ref{prop::properties of functions}.\ref{it:: u1 is large and cross, then is central} there is no crossing and we know because of Proposition~\ref{prop::trivial stuff}.\ref{it::central to non-central without cross implies error} that an error occurred. It follows that
    \begin{equation*}
        u_2(i) - u_2(i-1) = \log\left(\frac{1-y(i)}{x(i)}\right) + \log(4x(i-1)y(i-1)) = \log\left(\frac{4(1-y(i)) y(i)}{\alpha \beta}\right).
    \end{equation*}
    Because $T(i)$ is not central we have that $\frac{1}{2} \leq y(i) \leq \frac{\beta}{2}$ where $y(i) = \frac{\beta}{2}$ minimizes the last logarithm, leading to
    \begin{equation*}
        \log\left(\frac{4(1-y(i)) y(i)}{\alpha \beta}\right) \geq \log 1 = 0  .
    \end{equation*}

    Next we are considering the case that $T(i-1)$ is not central and there occurs a crossing. Without loss of generality we consider that $l(i-1)<r(i-1)$. Due to Proposition~\ref{prop::properties of functions}.\ref{it:: u1 is large and cross, then is central} we know that $T(i)$ is central and therefore we only have to consider this case. Then, it holds that
    \begin{align*}
        u_2(i) - u_2(i-1) &= -\log(4l(i)r(i)) - \log \left (\frac{1-r(i-1)}{l(i-1)} \right ) = -\log \left ( \frac{4(1-r(i-1))l(i)r(i)}{l(i-1)} \right) \\
        &= -\log(4\beta(1-r(i-1)) r(i)) .
    \end{align*}
    Next we use that $r(i) = 1-\beta(t(i-1) + l(i-1)) = \frac{\alpha}{2} + \beta \left(r(i-1) - \frac{1}{2}\right)$ and obtain
    \begin{equation*}
        -\log(4\beta(1-r(i-1)) r(i)) = -\log\left(4\beta(1-r(i-1)) \left(\frac{\alpha}{2} + \beta \left(r(i-1) - \frac{1}{2}\right)\right) \right).
    \end{equation*}
    By looking at the quadratic form (in $r(i-1)$) we find its maximum for $r(i-1) = \frac{3 \beta - \alpha}{4\beta}$ and 
    \begin{equation*}
       -\log\left(4\beta(1-r(i-1)) \left(\frac{\alpha}{2} + \beta \left(r(i-1) - \frac{1}{2}\right)\right) \right) \geq -\log\left((\beta + \alpha)\left(\frac{\alpha}{2} + \frac{\beta - \alpha}{4}\right)\right) = -\log 1 = 0  .
    \end{equation*}
    
    %\begin{equation*}
    %    \frac{1-2x(i-1)}{x(i-1)} > \frac{1-y(i-1)-x(i-1)}{x(i-1)} = 2^{u_1(i-1)} \geq \beta - \alpha
    %\end{equation*}
    %Therefore we have that $\frac{1}{2\beta} > x(i-1)$ and since a crossing can only occur if within the transmission there is no error we have $y(i) = \beta x(i-1) < \frac{1}{2}$.
    
     It remains to check the case when $T(i-1)$ is not central and there is no crossing. The true segment $T(i)$ can be either central, or non-central. Since $\frac{1}{4x(i)y(i)} \geq \frac{1-y(i)}{x(i)}$, it holds that
    \begin{equation*}
        u_2(i) - u_2(i-1) \geq \log \left (\frac{1-y(i)}{x(i)} \right )  - \log \left (\frac{1-y(i-1)}{x(i-1)}\right ) = 0.
    \end{equation*}
    The last equality holds since there is no crossing and, thus, we have
    $$
        1-y(i)  = \begin{cases}
        (1-y(i-1))\beta, \quad \text{for correct transmission},\\
        (1-y(i-1))\alpha, \quad \text{for incorrect transmission},
        \end{cases}
   $$
    and
    $$
    x(i) = \begin{cases}
        x(i-1)\beta, \quad \text{for correct transmission},\\
        x(i-1)\alpha, \quad \text{for incorrect transmission}.
        \end{cases}
    $$
\end{proof}

\begin{proof}[Proof of Proposition~\ref{prop::function v1}]
    If there is a crossing, then because the transmission has to be correct due to Proposition~\ref{prop::trivial stuff}.\ref{it::cross implies correct transmission}. In that case the true segment is enlarged by Proposition~\ref{prop::trivial stuff}.\ref{it::correct transmission increases true segment}  and the first statement follows.
    For the second statement we first consider the case that $T(i-1)$ is not central. Then it is clear that
    \begin{equation*}
        v_1(i) - v_1(i-1) = \log t(i) - \log t(i-1) = \begin{cases}
        \log \beta, \quad \text{if the transmission is correct},\\
        \log \alpha, \quad \text{if the transmission is incorrect}.
        \end{cases}
    \end{equation*}
    
    The other case to be considered is if $T(i-1)$ is central and there is no crossing. In that case the transmission has to be incorrect because otherwise a crossing would occur, according to Proposition~\ref{prop::properties of functions}.\ref{it:: u1 is small and was central and correct transmission, then cross}.
    We have to show that
    \begin{equation*}
        \log t(i)-\log t(i-1) \geq \log \alpha, 
    \end{equation*}
    which is always true due to~Proposition~\ref{prop::trivial stuff}.\ref{it::limits for t(i)}.
\end{proof}

\begin{proof}[Proof of Proposition~\ref{prop::function v2}]
    We first consider the case that there is a crossing and $u_1(i-1)\geq u_1''$. Due to Proposition~\ref{prop::properties of functions}.\ref{it:: u1 is large and cross, then was central} we know that $T(i-1)$ and $T(i)$ are both central.   It follows that
    \begin{equation*}
        v_2(i) - v_2(i-1) =  -\log(2y(i))+\log(2y(i-1))= -\log(2x(i-1) \beta) + \log(2y(i-1)) \geq -\log \beta  .
    \end{equation*}

    For the remainder of the proof we are considering cases without a crossing.
    We consider the case that $T(i-1)$ and $T(i)$ are central. We obtain that
    \begin{equation*}
        v_2(i) - v_2(i-1) = \log\left(\frac{y(i-1)}{y(i)}\right) = \begin{cases}
        -\log \alpha \geq \log \beta, \quad \text{if the transmission was correct},\\
        -\log \beta \geq \log \alpha, \quad \text{if the transmission was incorrect}.
        \end{cases}
    \end{equation*}
    %\gm{We make use of $\alpha \beta < 1$ here}
    Next we consider the case that $T(i-1)$ is central and $T(i)$ is not central. This can only occur in the incorrect case (Proposition~\ref{prop::trivial stuff}.\ref{it::central to non-central without cross implies error}) and then
    \begin{equation*}
        v_2(i) - v_2(i-1) = \log(2(1-y(i))) + \log(2y(i-1)) = \log(4(1-y(i))y(i)) - \log \beta,
    \end{equation*}
    but since $y(i-1)\leq \frac{1}{2}$ we know that $y(i) \leq \frac{\beta}{2}$ and we get
    \begin{equation*}
      v_2(i) - v_2(i-1)=  \log(4(1-y(i)) y(i)) - \log \beta \geq \log(\alpha \beta) - \log \beta = \log \alpha.
    \end{equation*}
    In the considered case, it also holds that $v_2(i-1) < \log \beta$ since the following statements are equivalent
    \begin{align*}
        v_2(i-1) &< \log \beta,\\
        -\log(2y(i-1)) &< \log \beta,\\
        \frac{1}{2} &< \beta y(i-1) = y(i),
    \end{align*}
    where the last inequality holds because $T(i)$ is not central.
    
    %It is easy to check that because $T(i)$ is not central $v_2(i-1) < 0<\log \beta$ and therefore the claimed statement holds.
    
    Now we consider the case that $T(i-1)$ and $T(i)$ are both not central and there is no crossing. In this case it holds that
    \begin{align*}
        v_2(i) - v_2(i-1) &= \log(2(1-y(i))) - \log(2(1-y(i-1))) = \log\left(\frac{t(i) + x(i)}{t(i-1) + x(i-1)}\right) \\
        &= \begin{cases}
            \log \beta, \quad \text{if the transmission was correct},\\
            \log \alpha, \quad \text{if the transmission was incorrect}. 
        \end{cases}
    \end{align*}
    We show that $v_2(i-1)<\log \beta$ if the transmission was incorrect. It holds that
    \begin{equation*}
        \log(2(1-y(i-1))) \leq -\log(2y(i-1))  .
    \end{equation*}
    It follows that $v_2(i-1)<\log \beta$ because the following statements are equivalent
    \begin{align*}
        -\log(2y(i-1)) &< \log\beta ,\\
        \frac{1}{2y(i-1)} &< \beta ,\\
        \frac{1}{2} &< \beta y(i-1) = y(i) ,
    \end{align*}
    where the last line is true because $T(i)$ is not central and  the transmission was incorrect.
    
    Finally we consider the case that $T(i-1)$ is not central but $T(i)$ is central. In this case, the transmission is correct since otherwise it holds $y(i)\ge y(i-1)\ge \frac{1}{2}$. Furthermore, recall that
    \begin{equation*}
        -\log(2y(i)) \geq \log(2(1-y(i))) .
    \end{equation*}
    Therefore, we obtain
    \begin{align*}
        v_2(i) - v_2(i-1) = -\log(2y(i)) - \log(2(1-y(i-1))) \geq \log\left(\frac{1-y(i)}{1-y(i-1)}\right)= \log \beta,
    \end{align*}
    which concludes the proof.
\end{proof}

\begin{proof}[Proof of Proposition~\ref{prop::u2 and g}]
    From Proposition~\ref{prop::properties of functions}.\ref{it:: is central and balanced} and the inequality~\eqref{eq::key inequality} we deduce that
    \begin{equation}\label{eq::u2 big implies v2 big}
        v_2(i_0)>\frac{u_2(i_0)}{2}-\frac{\log(\beta/\alpha)}{2}\ge -\frac{1}{2}\log(\alpha^4\beta^6)\geq-\frac{1}{2}\log\beta^{-2}=\log\beta.
    \end{equation}
    Since $v_2(i_0)=-\log (2y(i_0))>\log\beta$, we have $y(i_0)<1/(2\beta)$.    From $y(i_0)<1/(2\beta)$ we obtain that 
    \begin{equation*}
        u_1(i_0)=\log\left(\frac{1-y(i_0)-x(i_0)}{x(i_0)}\right)\geq\log \left(\frac{1-2y(i_0)}{y(i_0)}\right)> u_1'.
    \end{equation*}
    From Proposition~\ref{prop::u1increasing} it follows that 
    $u_1(i)\geq u_1'$ for all $i\geq i_0$. Proposition~\ref{prop::u2increasing} implies that $u_2(i)\geq c$ for all $i\geq i_0$. Using  arguments as in~\eqref{eq::u2 big implies v2 big}, we get $v_2(i)>\log\beta$ for all $i$ such that $T(i)$ is central and balanced. Note that since $c\geq u_2'$%\iv{here we use again $\alpha\beta^2<1$, but it's not necessary for a different constant}
    , the inequality $u_2(i)\geq u_2'$ holds.
    
    Say that there are crossings at the moments $j_1, \ldots, j_k$, $i_0\leq j_1<\ldots<j_k<i_1$. Partition $[i_0, i_1)$ into the union $[i_0, j_1)\cup [j_1, j_1+1)\cup[j_1+1, j_2]\cup \ldots \cup [j_k, i_1)$. By Proposition~\ref{prop::properties of functions}.\ref{it:: cross, then balanced} and~\ref{prop::properties of functions}.\ref{it:: u2 is large and cross, then is central} the true segment is central and balanced at all the endpoints of the segments. Thus, all conditions of Proposition~\ref{prop::u2 and g} are satisfied. Moreover, if we prove the inequality~\eqref{eq::g and u_2} for these segments, it would also be true for their union. Therefore, it is sufficient to consider only two cases: 
    \begin{enumerate}
        \item $i_1=i_0+1$ and there is a crossing.
        \item there is no crossing between $i_0$ and $i_1$.
    \end{enumerate}
    
    The first case is trivial. There is a crossing between $i_0$ and $i_1=i_0+1$, hence the transmission is correct by Proposition~\ref{prop::trivial stuff}.\ref{it::cross implies correct transmission}. In this case $g(i_0, i_1)=\log\beta$. From the other hand $u_2(i_1)-u_2(i_0)=-\log(\alpha\beta)$ by Proposition~\ref{prop::u2increasing}, which is at least $\log\beta$ because of the inequality~\eqref{eq::key inequality}. %\iv{here it is important}.
    
    Proceed to the second case, where there is no crossing between $i_0$ and $i_1$. The smallest part is always the same, therefore 
    \begin{equation*}
        g(i_0, i_1)=e(i_0, i_1) \log\alpha + f(i_0, i_1)\log\beta=\log\left(\frac{x(i_1)}{x(i_0)}\right).
    \end{equation*}
    Denote $v_2(i_1)-v_2(i_0)$ and $u_2(i_1)-u_2(i_0)$ as $\Delta v_2$ and $\Delta u_2$ correspondingly.
    Since $T(i_0)$ and $T(i_1)$ are central, we have
    \begin{align*}
        \Delta u_2&=-\log (x(i_1)y(i_1))+\log (x(i_0)y(i_0))\\&=
    -\log y(i_1)+\log y(i_0)-\log\left(\frac{x(i_1)}{x(i_0)}\right)\\&=
    \Delta v_2-g(i_0, i_1),
    \end{align*}
    which is equivalent to 
    \begin{equation*}
        g(i_0, i_1) = \Delta v_2-\Delta u_2.
    \end{equation*}
    Using Proposition~\ref{prop::properties of functions}.\ref{it::is central then} and~\ref{prop::properties of functions}.\ref{it:: is central and balanced} for moments $i_1$ and $i_0$ respectively, we obtain
    \begin{equation*}
        g(i_0, i_1) = \Delta v_2-\Delta u_2\leq\frac{\Delta u_2+\log\left(\frac{\beta}{\alpha}\right)}{2}-\Delta u_2=\frac{-\Delta u_2+\log\left(\frac{\beta}{\alpha}\right)}{2}.
    \end{equation*}
    Recall that $y(i_0)<1/(2\beta)$, which implies that $y(i_0+1)<\frac{1}{2}$, i.e., $T(i_0+1)$ is central. Using Proposition~\ref{prop::u2increasing}, we deduce
    \begin{equation*}
        \Delta u_2=u_2(i_1)-u_2(i_0+1) + u_2(i_0+1)-u_2(i_0)\geq -\log(\alpha\beta).
    \end{equation*}
    Using this inequality and the inequality~\eqref{eq::key inequality}, we have
    \begin{align*}
        g(i_0, i_1)\leq \frac{-\Delta u_2+\log\left(\frac{\beta}{\alpha}\right)}{2}
        \leq \frac{\log(\alpha\beta)+\log\left(\frac{\beta}{\alpha}\right)}{2}=\log\beta
        \leq -\log(\alpha\beta)
        \leq \Delta u_2.
    \end{align*}
\end{proof}

\begin{proof}[Proof of Proposition~\ref{prop::function u}]
    Recall that $u_2(i)\geq u_1(i)$ for any $x(i)$, $y(i)$ and $t(i)$ by Proposition~\ref{prop::trivial stuff}.\ref{it::u2>u1}.
    
    If $u_1(i-1)<u_1'$, then $u_1(i)\geq u_1(i - 1)$ by Proposition~\ref{prop::u1increasing}, therefore, $u(i)\geq u_1(i)\geq u_1(i - 1)=u(i-1)$. If $T(i-1)$ is central and there is a crossing, then $u_1(i)- u_1(i - 1)\geq\log\beta$ by Proposition~\ref{prop::u1increasing}, hence, $u(i)-u(i-1)\geq\log\beta$.
    
    If $u_1(i-1)\geq u_1'$, then $u_1(i)\geq u_1'$ by Proposition~\ref{prop::u1increasing} and $u_2(i)\geq u_2(i-1)$ by Proposition~\ref{prop::u2increasing}. It means that $u(i)=u_2(i)\geq u_2(i-1)=u(i-1)$. If $T(i-1)$ is central and there is a crossing, then $T(i)$ is also central by Proposition~\ref{prop::properties of functions}.\ref{it:: u1 is large and cross, then is central}. So, we can use Proposition~\ref{prop::u2increasing}, which gives us 
    $u_2(i)- u_2(i - 1)\geq-\log(\alpha\beta)$. Since $\alpha\beta^2\leq 1$(see~\eqref{eq::key inequality})%\iv{we are using the condition again}
    , this expression is not less than $\log\beta$.

\end{proof}

\begin{proof}[Proof of Proposition~\ref{prop::function v}]
    Recall that $v_2(i)\geq v_1(i)$ for any $x(i)$, $y(i)$ and $t(i)$ by Proposition~\ref{prop::trivial stuff}.\ref{it::v2>v1}.
    
    Suppose $u_1(i-1)<u_1''$. If there is no crossing, then the statement is implied by Proposition~\ref{prop::function v1}. If there is a crossing, then Proposition~\ref{prop::function v1} gives us $v_1(i)>v_1(i-1)$,
    therefore, $v(i)\geq v_1(i)>v_1(i-1)=v(i-1)$, i.e., $v(i)-v(i-1)>0$.
    For central $T(i-1)$ this bound is good enough. It remains to consider the case when $T(i-1)$ is not central. From $u_1(i-1)<u_1''$ we have $v(i)-v(i-1)\ge v_1(i)-v(i-1)=v_1(i)-v_1(i-1)$. Since there is a crossing, it is clear that transmission is correct. Condition $T(i-1)$ is not central implies that $t(i)=\beta t(i-1)$ and $v(i)-v(i-1)\ge v_1(i)-v_1(i-1)=\log\beta$.

    Suppose $u_1(i-1)\geq u_1''$. In this case $u_1(i)\geq u_1''$ by Proposition~\ref{prop::u1increasing}. If there is no crossing, then proposition follows from Proposition~\ref{prop::function v2}. If there is a crossing, then $T(i-1)$ is central by Proposition~\ref{prop::properties of functions}.\ref{it:: u1 is large and cross, then was central}. In this case $v(i)-v(i-1)=v_2(i)-v_2(i-1)\geq -\log\beta$ by Proposition~\ref{prop::function v2}.

\end{proof}

% \clearpage

% \section{Graphs}\label{sec::graphs}

% \input{graphs.tex}

% %\input{graph.tikz}

% \section{\textcolor{red}{TO DO list}}

% \begin{itemize}
%     \item Level sets graphs + discussion (ilya) + transformation of $l(i)$, $r(i)$, $t(i)$ (lorenz)
%     \item equivalence (nikita)
%     \item polishing and finishing proof in Sec 4. (georg)
%     \item structure in Sec. 4 (ilya)
%     \item flow charts of proof strategy (lorenz)
%     \item proof read (lorenz, georg, ilya, nikita)
%     \item extend introduction part (add discussion about prior methods and why zigangirov strategy is good choice) (Nikita)
%     \item Conclusion: put discussion of $q$-ary case, what are the problems (nikita)
% \end{itemize}

\section{Conclusion}\label{sec::conclusion}
In this paper, we have introduced a new problem statement of transmitting information through the adversarial insertion-deletion channel with feedback. We have shown a reduction of this problem to a problem of communication over the adversarial substitution channel. Thereby, the maximal asymptotic rate of feedback codes for the adversarial insertion-deletion channel has been established. In particular, the rate is positive whenever the fraction of insertion and deletion errors inflicted by the channel is less than $\frac{1}{2}$. We also revisit Horstein's algorithm~\cite{horstein1963sequential} for the adversarial substitution channel with feedback and present a more elaborate version of Zigangirov's analysis~\cite{zigangirov1976number}.

We emphasize that all results discussed in the paper concern the binary channel. A natural question that arises is whether it is possible to extend the methodology for the $q$-ary case. The best currently known results for the adversarial $q$-ary substitution channel with feedback are discussed in~\cite{lebedev2016coding}. In particular, for the fraction of errors $0<\tau<1/q$, the maximal asymptotic rate is established only for a countable number of values for $\tau$. Moreover, we point out that it is hard to generalize Zigangirov's arguments to the $q$-ary case and, in particular, find appropriate analogues of functions $u$ and $v$.
\section{Acknowledgment}
The authors are thankful to Zilin Jiang for the fruitful discussion on Zigangirov's proof. 

%\bibliographystyle{IEEEtran}
%\bibliography{ref}

\end{document}